\declaretheorem[name=Theorem, parent=section]{theorem}
\declaretheorem[name=Corollary, sibling=theorem]{corollary}
\declaretheorem[name=Proposition, sibling=theorem]{proposition}
\declaretheorem[name=Lemma, sibling=theorem]{lemma}
\theoremstyle{definition}
\theoremstyle{remark}
\DeclareMathAccent{\wtilde}{\mathord}{largesymbols}{"65}
\newcommand{\trk}{{\rm R}}
\newcommand{\subrank}{{\rm Q}}
\newcommand{\bsubrank}{{\underline{\rm Q}}}
\newcommand{\gensubrank}{{\rm Q}}
\newcommand{\spa}{{\rm span}}
\newcommand{\GL}{\operatorname{GL}}
\newcommand{\Mat}{\operatorname{Mat}}
\newcommand{\im}{{\rm im}}
\DeclareMathOperator{\asympsubrank}{\underaccent{\wtilde}{Q}}
\title{Generic Subrank}
\author{Harm Derksen, Visu Makam, and Jeroen Zuiddam}
\date{}
\begin{document}
\newgeometry{margin=1.4in,top=1.6in,bottom=1in}

\begin{center}
	{\LARGE Subrank and Optimal Reduction\\[0.2cm] of Scalar Multiplications to Generic Tensors}
\\[1cm] \large

\setlength\tabcolsep{0em}
\newcommand{\myPad}{\hspace{2em}}
\centerline{%
\begin{tabular}{c@{\myPad}c@{\myPad}c}
	Harm Derksen & Visu Makam & Jeroen Zuiddam\\[0.2em]
    \textsf{ha.derksen@northeastern.edu} & \textsf{visu@umich.edu} & \textsf{j.zuiddam@uva.nl}\\[0.2em]
     Northeastern University &  Radix Trading Europe B.V.  & University of Amsterdam
\end{tabular}%
}

\vspace{9mm}

\large
{\today}

\vspace{9mm}
\bf Abstract
\end{center}

\normalsize
\noindent
Since the seminal works of Strassen and Valiant it has been a central theme in algebraic complexity theory to understand the \emph{relative} complexity of algebraic problems, that is, to understand which algebraic problems (be it bilinear maps like matrix multiplication in Strassen's work, or the determinant and permanent polynomials in Valiant's) can be reduced to each other (under the appropriate notion of reduction). 

In this paper we work in the setting of bilinear maps and with the usual notion of reduction that allows applying linear maps to the inputs and output of a bilinear map in order to compute another bilinear map.
As our main result we determine precisely how many independent scalar multiplications can be reduced to a given bilinear map (this number is called the \emph{subrank}, and extends the concept of matrix diagonalization to tensors), for essentially all (i.e.~generic) bilinear maps. Namely, we prove for a generic bilinear map $T : V \times V \to V$ where $\dim(V) = n$ that $\theta(\sqrt{n})$ independent scalar multiplications can be reduced to $T$. Our result significantly improves on the previous upper bound from the work of Strassen (1991) and Bürgisser (1990) which was~$n^{2/3 + o(1)}$. Our result is very precise and tight up to an additive constant.
Our full result is much more general and applies not only to bilinear maps and 3-tensors but also to $k$-tensors, for which we find that the generic subrank is~$\theta(n^{1/(k-1)})$. Moreover, as an application we prove that the subrank is not additive under the direct sum.

The subrank plays a central role in several areas of complexity theory (matrix multiplication algorithms, barrier results) and combinatorics (e.g., the cap set problem and sunflower problem).
As a consequence of our result we obtain several large separations between the subrank and tensor methods that have received much interest recently, notably the slice rank (Tao, 2016), analytic rank (Gowers--Wolf, 2011; Lovett, 2018; Bhrushundi--Harsha--Hatami--Kopparty--Kumar, 2020), geometric rank (Kopparty--Moshkovitz--Zuiddam, 2020), and G-stable rank (Derksen,~2020).

Our proofs of the lower bounds rely on a new technical result about an optimal decomposition of tensor space into structured subspaces, which we think may be of independent interest. 

\thispagestyle{empty}
\newpage
\restoregeometry

\tableofcontents
\newpage

\section{Introduction}

We solve a fundamental problem in algebraic complexity theory about a notion of complexity on bilinear maps (tensors) called the \emph{subrank}, which was introduced by Strassen in \cite{strassen1987relative} in the study of fast matrix multiplication algorithms, and which later found close connections to several hypergraph independence and Ramsey-type problems in combinatorics and tensor methods in these areas (e.g.,~analytic rank \cite{lovett2019analytic,bhrushundi2018multilinear} and related notions). %
Our results improve significantly on previous bounds from the work of 
Bürgisser~\cite{burg} and Strassen \cite{strassen1991degeneration}.

In high-level terms, the subrank of a bilinear map is the largest number of independent scalar multiplications that can be reduced to (i.e.~``embedded~in'') the bilinear map, under the natural algebraic complexity notion of reduction (which we elaborate on in a moment). This definition extends the usual rank of matrices and moreover naturally extends further to multilinear maps ($k$-tensors).
In this paper, we:
\begin{itemize}
    \item determine the subrank for almost all bilinear maps (i.e.~generic bilinear maps),
    \item prove precise bounds that are accurate up to a small additive constant,
    \item extend the above results from bilinear maps ($3$-tensors) also to multilinear maps (\mbox{$k$-tensors}),
    \item prove as a technical result (used in the proofs of the above) an optimal decomposition theorem for tensor subspace, decomposing tensor space into very structured subspaces,
    \item prove, as an application of our upper bound on generic subrank, that the subrank is not additive under the direct sum.
\end{itemize} 

Let us biefly state our asymptotic results here. (We will return to these in more detail in \autoref{subsec:results}.) We prove for any vector space\footnote{We will require the base field of the vector space to have the mild property of being algebraically closed (but it will be clear from our techniques that this assumption can be weakened considerably).} $V$ with $\dim(V) = n$ that almost all bilinear maps $T : V \times V \to V$ have subrank equal to $\theta(\sqrt{n})$. That is, $\theta(\sqrt{n})$ independent scalar multiplications can be reduced to $T$. For $k$-tensors we find the subrank to be $\theta(n^{1/(k-1)})$ on almost all $k$-tensors.

To prove our results we use methods from algebraic geometry as well as linear algebraic arguments. Our upper bound proof relies on an efficient parametrization of the set of tensors with subrank larger than a given number (as a collection of orbits) and a good dimension estimate of this set.
Our lower bound proof relies on arguments involving differentials. At the core we prove a technical result about a very structured decomposition of tensor space which we think may be of independent interest. The main goal here (in the simplest case) is to write tensor space as a sum of tensor subspaces as efficiently as possible (meaning with smallest as possible sum of dimensions) such that each subspace has the special form of a matrix subspace tensored with $n$-space. Our technical result is that we can do this optimally for any order of tensor space. We discuss the proof methods further in \autoref{subsec:tech}.
To get some intuition for the subrank it is instructive to ask how the subrank relates to the better known complexity notion \emph{tensor rank}.
Subrank and rank are indeed closely linked, and in a sense they are dual to each other. Indeed, whereas the subrank of a bilinear map~$T$ is the maximal number of independent scalar multiplications that can be reduced to~$T$, the rank of $T$ is (among several equivalent definitions) the minimal number of independent scalar multiplications that $T$ reduces to. In this way, the rank measures the computational ``cost'' of~$T$ in terms of scalar multiplications while the subrank measures the ``value'' of $T$ in terms of independent scalar multiplications.

\subsection{Subrank and generic subrank}

Let us now discuss more precisely several equivalent formulations and the meaning of the definition of subrank, and the sense in which we determine the subrank for almost all tensors (the generic subrank).

\paragraph{Subrank of bilinear and multilinear maps.}
We first stay in the realm of bilinear maps, and will define subrank via the notion of a reduction between bilinear maps, which is really a reduction in the sense of computational complexity. Given two bilinear maps $S : V_1 \times V_2 \to V_3$ and $T : W_1 \times W_2 \to W_3$ we say $S$ reduces to $T$ and write $S\leq T$ if there are linear maps $L_1 : V_1 \to W_1$, $L_2 : V_2 \to W_2$ and $L_3 : W_3 \to V_3$ such that $S(v_1, v_2) = L_3(T(L_1(v_1), L_2(v_2))$ for all $v_1 \in V_1, v_2 \in V_2$. (In the literature, this relation $\leq$ on tensors is often called the \emph{restriction preorder} and when $S \leq T$ one says that ``$S$ is a restriction of~$T$\,'' \cite{strassen1987relative,burgisser1997algebraic}.) In other words, if $S \leq T$, then any algorithm for $T$ also gives an algorithm for $S$ (with only small computational overhead, namely applying the linear maps $L_i$ to inputs and output). Next, an important basic bilinear map that we use to define subrank is (for any positive integer $r$) the bilinear map~$I_r$ that computes $r$ independent scalar multiplications with scalars from some\footnote{We will put some mild restrictions on the field in some parts of the paper.} field $K$: 
\[
I_r : K^r \times K^r \to K^r : (v_1, v_2) \mapsto ( (v_1)_1 (v_2)_1, \ldots, (v_1)_r (v_2)_r ).
\]
With these notions set up, the subrank $\subrank(T)$ of a bilinear map $T : V_1 \times V_2 \to V_3$ is defined as the largest number $r$ such that $I_r$ reduces to $T$, that is, $I_r \leq T$.
On the other hand, the tensor rank of~$T$ is the smallest number $r$ such that $T \leq I_r$. Thus, the subrank $\subrank(T)$ is indeed the largest number of independent scalar multiplications that can be reduced to $T$, while the rank is the smallest number of independent scalar multiplications that $T$ reduces~to.
The above definition of subrank extends directly to multilinear maps $V_1 \times \cdots \times V_{k-1} \to V_k$ by naturally extending the reduction and defining 
$I_r$ as the multilinear map that computes~$r$ independent $(k-1)$-wise products.

\paragraph{Subrank of tensors.} Alternatively we may phrase the definition of the subrank in the language of tensors, as bilinear maps (and multilinear maps) naturally correspond to these. Let $K^{n_1, n_2, n_3}$ be the space of 3-tensors (3-dimensional arrays) $T = (T_{i,j,k})_{i,j,k}$ with $i \in [n_1]$, $j \in [n_2]$, $k \in [n_3]$. For tensors $S \in K^{n_1, n_2, n_3}$ and $T \in K^{m_1, m_2, m_3}$ we write $S \leq T$ if there are matrices $A \in \Mat_{n_1, m_1}$, $B \in \Mat_{n_2, m_2}$, $C \in \Mat_{n_3, m_3}$ such that $S$ is obtained from $T$ by applying $A,B,C$ to the slices of~$T$, in the sense that
\[
S_{i,j,k} = \sum_{a,b,c} A_{i,a} B_{j,b} C_{k,c} T_{a,b,c}
\]
for all $i \in [n_1], j \in [n_2], k \in [n_3]$. Let $I_r \in K^{r,r,r}$ be the tensor for which the diagonal entries $(I_r)_{i,i,i}$ are 1 and the other entries are 0. In this language the subrank $\subrank(T)$ is again the largest number $r$ such that $I_r \leq T$. This definition of subrank also naturally extends to higher-order tensors $K^{n_1, \ldots, n_k}$.

\paragraph{Subrank on ``almost all'' tensors; generic subrank.}
It follows from a short argument (which we give later; \autoref{prop:gen-subrank-exists}) that for vector spaces $V_1,V_2,V_3$, there is a subset $U$ of all bilinear maps $V_1 \times V_2 \to V_3$ that is nonempty and Zariski-open (the complement is the zero-set of a finite collection of polynomials) and with constant subrank. This set $U$ thus contains almost all bilinear maps $V_1 \times V_2 \to V_3$. The subrank on $U$ is called the \emph{generic subrank}. We denote the generic subrank of bilinear maps $K^{n_1} \times K^{n_2} \to K^{n_3}$ (or equivalently of tensors in $K^{n_1, n_2, n_3}$) by $\gensubrank(n_1, n_2, n_3)$ (and by $\gensubrank(n_1, \ldots, n_k)$ generally for higher-order tensors).

\subsection{Our Results}\label{subsec:results}

As our main results we determine (1) the subrank of generic tensors of order three, %
and, more generally, (2) the subrank of generic tensors of any order $k \geq 3$. %
Moreover, as a core technical ingredient in our proof of (1) and (2), we prove (3) an optimal decomposition of tensor space into highly structured subspaces, which we think is of independent interest and which may have further applications in algebraic complexity theory. We will now describe each of these results in detail.

\paragraph{(1) The generic subrank of tensors of order three.}
We will begin by discussing our results for tensors of order three (or equivalently, for trilinear maps %
or bilinear maps on vector spaces of appropriate dimensions). %
Recall that $\gensubrank(n) = \gensubrank(n,n,n)$ denotes the generic subrank of tensors in~$K^{n,n,n}$. In other words, $\gensubrank(n)$ is the value that the subrank takes on ``essentially all'' tensors, or on ``randomly chosen'' tensors with probability 1. We prove that~$\gensubrank(n)$ grows as~$\sqrt{n}$.

\begin{theorem}\label{thm:main-3-tensor}
We have $\gensubrank(n) = \theta(\sqrt{n})$.
\end{theorem}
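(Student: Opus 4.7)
The plan is to bracket $\gensubrank(n)$ by analyzing the dimension of
\[
X_r := \{T \in K^{n,n,n} : I_r \leq T\},
\]
since $\gensubrank(n)$ is the largest integer $r$ for which $X_r$ is Zariski-dense in $K^{n,n,n}$.

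For the upper bound $\gensubrank(n) = O(\sqrt{n})$, I would realize $X_r$ as the image under projection of the incidence variety
\[
\Gamma := \bigl\{(A, B, C, T) \in \Mat_{r,n}^3 \times K^{n,n,n} : (A \otimes B \otimes C)(T) = I_r\bigr\}.
\]
The partial derivative of the defining relation with respect to $T$ is the surjection $T \mapsto (A \otimes B \otimes C)(T)$ (for full-rank $(A,B,C)$), so the $r^3$ equations are independent generically and $\dim \Gamma = 3nr + n^3 - r^3$. The stabilizer of $I_r$ in $\GL_r^3$ is $S_r \ltimes (K^{\times})^{2r}$, of dimension $2r$; it acts on fibers of $\Gamma \to X_r$ and acts freely at generic points, so these fibers have dimension at least $2r$. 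Hence $\dim X_r \leq n^3 + 3nr - r^3 - 2r$, which drops strictly below $n^3$ as soon as $r > \sqrt{3n-2}$, giving $\gensubrank(n) \leq \lceil\sqrt{3n-2}\rceil$.

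For the lower bound $\gensubrank(n) = \Omega(\sqrt{n})$, I would show $X_r$ is Zariski-dense whenever $r^2 \leq n$, by exhibiting a point of $\Gamma$ at which the differential of the projection $\pi \colon \Gamma \to K^{n,n,n}$ is surjective; by dominance this forces $\dim X_r = n^3$. With base point $A_0 = B_0 = C_0 = [\,I_r \mid 0\,]$ and $T_0$ in the affine slice $Y_r := \{T : T_{ijk} = \delta_{ijk} \text{ for } i,j,k \in [r]\}$, a direct tangent-space computation shows that surjectivity of $d\pi$ at this point is equivalent to the three subspaces
\[
V_m := K^r_{(m)} \otimes W_m \subset K^r \otimes K^r \otimes K^r, \qquad m \in \{1,2,3\},
\]
jointly spanning $K^{r,r,r}$, where $K^r_{(m)}$ is the $m$-th tensor factor and $W_m \subset K^r \otimes K^r$ (placed in the remaining two factors) is the span of the $[r]^2$-restrictions of the $n$ mode-$m$ slices of $T_0$. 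When $n \geq r^2$, generic values of the $(n-r)$ free slices of $T_0$ already make a single $W_m$ equal to the whole of $K^r \otimes K^r$, so $V_m = K^{r,r,r}$ on its own; this yields $\gensubrank(n) \geq \lfloor \sqrt n \rfloor$.

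The main obstacle, and the step the paper resolves via its technical decomposition theorem, is the gap between the upper bound $\sqrt{3n}$ and the easy lower bound $\sqrt{n}$ above. Pushing the lower bound up to $\sqrt{3n}$ requires handling the regime $\sqrt{n} < r \leq \sqrt{3n}$, where necessarily $\dim W_m \leq n < r^2$ so no single summand $K^r_{(m)} \otimes W_m$ can fill $K^{r,r,r}$; instead one must exhibit three subspaces $W_1, W_2, W_3 \subset K^r \otimes K^r$ of dimension at most $n$ that jointly cover $K^{r,r,r}$ via the sum $\sum_m K^r_{(m)} \otimes W_m$, subject to the rigid constraint that each $W_m$ contain the diagonal matrix units $E_{11}, \ldots, E_{rr}$ forced by the fixed $[r]^3$-corner of $T_0 \in Y_r$. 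This structured covering question is the combinatorial heart of the paper's approach, but is not needed for the $\Theta(\sqrt{n})$ bound of the present theorem.
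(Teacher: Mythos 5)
Your proposal is correct and establishes $\Theta(\sqrt{n})$, but via a noticeably lighter route than the paper on the lower-bound side. For the upper bound you and the paper are doing essentially the same dimension count with different bookkeeping: the paper parametrizes $\mathcal{C}_r$ (your $X_r$) as the image of $\psi_r \colon \GL_n^3 \times X_r \to K^{n,n,n}$ (their $X_r$ is the diagonal-corner slice) and subtracts a $3(n(n-r)+r)$-dimensional fiber; you realize it as the image of the incidence variety $\Gamma$ and subtract the $2r$-dimensional stabilizer of $I_r$. Both give $\dim \mathcal{C}_r \le n^3 - r(r^2-3n+2)$ and hence $\gensubrank(n)\le\sqrt{3n-2}$ (note your $\lceil\cdot\rceil$ should be $\lfloor\cdot\rfloor$; harmless here). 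The substantive difference is the lower bound. The paper proves $\gensubrank(n)\ge \sqrt{3n}-O(1)$ by reducing to a decomposition problem $\mathcal{X}_1[1]+\mathcal{X}_2[2]+\mathcal{X}_3[3]=K^{r,r,r}$ with each $\mathcal{X}_i$ of dimension $n-r < r^2$, which requires the entire machinery of Section 4 (the base case $\begin{bsmallmatrix}3&3&3\\3&3&3\end{bsmallmatrix}\in\SSS$, the direct-sum construction, and a blow-up). You instead observe that once $n\ge r^2$, a \emph{single} mode's slice span $W_m$ already fills $\Mat_{r,r}$, so the differential of $\pi\colon\Gamma\to K^{n,n,n}$ is surjective and $\gensubrank(n)\ge\lfloor\sqrt{n}\rfloor$ with no decomposition theorem at all. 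This is exactly the observation the paper relegates to a footnote of their Theorem 3.1 (``if $n_i-r > r^2$ then it is easy to see that $\gensubrank\ge r$'') and never exploits in the main argument, because they are after the sharper bound of Theorem 1.3. So: your argument is a legitimate shortcut for the $\Theta(\sqrt n)$ statement, trading the tight multiplicative constant $\sqrt 3$ in the lower bound for a much shorter proof that bypasses Section 4 entirely; the paper's heavier construction is only needed to close the constant-factor gap, not for the asymptotic order.
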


We thus solve the problem of determining the generic subrank of tensors of order three (and also, with more work, of order $k$ in general as we will discuss below).
In computational terms, \autoref{thm:main-3-tensor} states that for ``essentially all'' bilinear maps $T : K^n \times K^n \to K^n$ we can reduce the problem of multiplying $\sqrt{n}$ independent pairs of numbers  to $T$, and that this is optimal (under the usual notion of reduction from algebraic complexity theory in which linear maps are applied to the inputs and output of $T$). 

In particular, with \autoref{thm:main-3-tensor} we significantly improve on the previous best upper bound on~$\gensubrank(n)$ of Strassen and Bürgisser (obtained via the method of ``lower support functionals'') which was $\gensubrank(n) \leq n^{2/3 + o(1)}$.

As an application of \autoref{thm:main-3-tensor} we prove that the subrank is not additive under the direct sum. Namely, we prove that there are bilinear maps $S,T : K^n \times K^n \to K^n$ such that $\subrank(T) + \subrank(S) = \theta(\sqrt{n})$ while $\subrank(T \oplus S) \geq n$. In other words, it is sometimes possible to reduce many more independent scalar multiplications to a direct sum of bilinear maps than to the bilinear maps separately.\footnote{This result is the subrank analogue of the recent result of Shitov \cite{MR3974478} that disproved Strassen's additivity conjecture for tensor rank \cite{MR521168}. For this he constructed a complicated example of 3-tensors $S,T$ (over any infinite field) such that $\trk(S \oplus T) < \trk(S) + \trk(T)$. We discuss additivity results further in \autoref{subsec:related}
.}

As a direct consequence of \autoref{thm:main-3-tensor}, we separate the generic subrank~$\gensubrank(n)$ from the generic asymptotic subrank $\asympsubrank(n)$\footnote{The asymptotic subrank is defined as $\asympsubrank(T) = \lim_{n\to\infty} \subrank(T^{\otimes n})^{1/n}$ and $\asympsubrank(n)$ denotes the value of $\asympsubrank(T)$ for a generic tensor $T \in K^{n,n,n}$.}, for which Strassen proved that the generic value satisfies~$\asympsubrank(n) \geq n^{2/3}$ \cite[Prop.~3.6]{strassen1988asymptotic}. Moreover, our result gives a large separation between the subrank on the one hand, and the slice rank (and partition rank, for higher order tensors), geometric rank, and G-stable rank on the other hand (whose generic value is full).

Let us now discuss the bound of \autoref{thm:main-3-tensor} in more detail, and in particular get the precise constants right, after which we discuss the extension to higher-order tensors.
We (naturally) obtain \autoref{thm:main-3-tensor} in two parts, namely by first proving the following upper bound:

\begin{theorem} \label{thm:subrank-upper-3-tensor}
We have 
$\gensubrank(n) \leq \lfloor\sqrt{3n-2}\rfloor$.
\end{theorem}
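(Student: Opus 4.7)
The plan is a direct dimension count on the constructible set $\sigma_r := \{T \in K^{n,n,n} : \subrank(T) \geq r\}$, showing $\dim \sigma_r < n^3$ whenever $r > \sqrt{3n-2}$, which forces $\gensubrank(n) < r$. The naive parametrization $\sigma_r = \GL_n^{\times 3} \cdot Y_r$, where $Y_r := \{T : T|_{[r]^3} = I_r\}$, gives only $\dim \sigma_r \leq 3n^2 + (n^3 - r^3)$ and reproduces the previous Bürgisser--Strassen bound $r = O(n^{2/3})$. The key saving is to replace the ``large'' $\GL_n$-action by the ``small'' $r \times n$ witness matrices. Concretely, form the incidence variety
$$Z_r := \Bigl\{(T,U,V,W) \in K^{n,n,n} \times \Mat_{r,n}^{\times 3} : \sum_{a,b,c} U_{ia}\, V_{jb}\, W_{kc}\, T_{abc} = \delta_{ijk} \text{ for all } i,j,k \in [r]\Bigr\},$$
so that the projection $\pi: Z_r \to K^{n,n,n}$ to the first factor has image exactly $\sigma_r$ by the definition of subrank.

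I would compute $\dim Z_r$ via the opposite projection to the witness coordinates $(U,V,W)$. Writing $u_i, v_j, w_k$ for the corresponding rows, on the Zariski-open set where $U,V,W$ all have full row rank the $r^3$ rank-one tensors $u_i \otimes v_j \otimes w_k$ are linearly independent in $K^{n^3}$, so the defining equations impose $r^3$ independent linear conditions on $T$ and each fiber has dimension $n^3 - r^3$. Hence $\dim Z_r = 3nr + (n^3 - r^3)$. Next, lower-bound the generic fiber of $\pi$ by exhibiting a group of dimension $2r$ acting freely on it. Let $H \subseteq \GL_r^{\times 3}$ be the stabilizer of $I_r$; using the essential uniqueness of the decomposition $I_r = \sum_i e_i \otimes e_i \otimes e_i$, one identifies $H = (K^*)^{2r} \rtimes S_r$ of dimension $2r$, and one checks that it acts freely on full-row-rank witnesses via $(A,B,C)\cdot(U,V,W) = (AU,BV,CW)$ while leaving $T$ fixed.

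Combining these, $\dim \sigma_r \leq \dim Z_r - 2r = n^3 + 3nr - r^3 - 2r$. If $\gensubrank(n) \geq r$ then $\sigma_r$ is Zariski-dense in $K^{n,n,n}$, so $3nr - r^3 - 2r \geq 0$, equivalently $r^2 \leq 3n - 2$; integrality then yields $\gensubrank(n) \leq \lfloor\sqrt{3n-2}\rfloor$. The main obstacle I expect is the fiber lower bound: one must verify rigorously that $\dim \pi^{-1}(T) \geq 2r$ at a generic $T$, which combines the explicit computation of $H$ (via rigidity of the rank decomposition of $I_r$) with a freeness check that the $H$-orbit of a generic witness attains the full dimension $2r$. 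The remaining steps are routine bookkeeping with dimensions of images and fibers of morphisms of constructible sets.
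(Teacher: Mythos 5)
Your proof is correct, but it takes a genuinely different route from the paper. The paper parametrizes $\mathcal{C}_r = \sigma_r$ via the morphism $\psi_r: \GL_n^{\times 3} \times X_r \to K^{n,n,n}$, where $X_r$ is the set of tensors whose $[r]^3$ block is diagonal with nonzero diagonal, and bounds the generic fiber from below by exhibiting the block subgroup $L = \bigl\{\bigl(\begin{smallmatrix} D & 0 \\ * & * \end{smallmatrix}\bigr)\bigr\}\subseteq\GL_n$ of dimension $n(n-r)+r$ stabilizing $X_r$ in each factor. You instead work with the incidence variety $Z_r$ of pairs (tensor, witness triple) parametrized by the small witness matrices $\Mat_{r,n}^{\times 3}$, compute $\dim Z_r$ by fibering over the full-row-rank witness locus, and bound the fiber of $\pi$ from below via the $2r$-dimensional stabilizer torus of $I_r$ acting freely. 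Remarkably, the arithmetic is identical: the paper's $3n^2 + (n^3 - r^3 + r) - 3(n(n-r)+r)$ and your $3nr + (n^3 - r^3) - 2r$ both simplify to $n^3 - r(r^2 - 3n + 2)$, and both improve on the naive $\GL_n^{\times 3}$-orbit bound by properly charging the over-parametrization. Your approach has the conceptual advantage that the parametrizing space is already small, so the fiber subtraction is minor; the paper's fiber subgroup $L$ is more hands-on but avoids any invocation of rigidity of the rank decomposition of $I_r$. In fact you could streamline your argument by noting that for the lower bound on the fiber you only need \emph{some} $2r$-dimensional subgroup acting freely, so it suffices to exhibit the torus $\{(D_1,D_2,D_3) : D_i \text{ diagonal}, D_1D_2D_3 = I\}$ directly—no Kruskal-type uniqueness of the CP decomposition is actually required, which also makes your argument characteristic-independent without further fuss.
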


And then proving the following essentially matching lower bound:

\begin{theorem}\label{thm:subrank-lower-3-tensor}
We have $\gensubrank(n) \geq 3(\lfloor\sqrt{n/3 + 1/4}-1/2\rfloor)$.
\end{theorem}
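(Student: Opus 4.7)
The plan is to prove $\gensubrank(n) \geq r$ for $r := 3k$ satisfying $r(r+3) \leq 3n$ (which is exactly $3k(k+1) \leq n$) by showing that the set $S_{\geq r} := \{T \in K^{n,n,n} : \subrank(T) \geq r\}$ is Zariski-dense. Consider the incidence variety
\[Z := \{(T, A, B, C) \in K^{n,n,n} \times \Mat_{r,n}^3 : (A, B, C) \cdot T = I_r\}.\]
Its first projection $\pi : Z \to K^{n,n,n}$ has image contained in $S_{\geq r}$, and since that image is constructible, it suffices to show $\pi$ is dominant. I will do this by exhibiting a point $(T_0, A_0, B_0, C_0) \in Z$ at which $d\pi$ is surjective.

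At such a base point, $T_{(T_0,\ldots)}Z$ is the kernel of the linearization
\[\Phi(\tau, A', B', C') = (A_0, B_0, C_0) \cdot \tau + (A', B_0, C_0) \cdot T_0 + (A_0, B', C_0) \cdot T_0 + (A_0, B_0, C') \cdot T_0,\]
viewed as a map $K^{n,n,n} \oplus \Mat_{r,n}^3 \to K^{r,r,r}$. A short calculation shows that the image of $d\pi$ equals the preimage, under the linear map $(A_0, B_0, C_0) \cdot : K^{n,n,n} \to K^{r,r,r}$, of the subspace $R \subseteq K^{r,r,r}$ spanned by the last three summands of $\Phi$. Taking $A_0, B_0, C_0$ of rank $r$ makes $(A_0, B_0, C_0) \cdot$ surjective, so the preimage has dimension $n^3 - r^3 + \dim R$; hence dominance of $\pi$ reduces to showing $R = K^{r,r,r}$.

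A direct identification reveals $R = R_1 + R_2 + R_3$, with each $R_i$ having the structured form $K^r \otimes U_i$ (where $K^r$ sits in the $i$-th tensor slot and $U_i \subseteq K^{r,r}$ is a matrix subspace in the other two). Explicitly, $U_i$ is the span of the $n$ mode-$i$ slices of the auxiliary tensor obtained from $T_0$ by contracting the two of $A_0, B_0, C_0$ not in the $i$-th slot, so in particular $\dim U_i \leq n$. Moreover, the constraint $(A_0, B_0, C_0) \cdot T_0 = I_r$ forces each $U_i$ to contain the $r$-dimensional diagonal subspace $\Delta \subseteq K^{r,r}$. The structural decomposition theorem (item (3) of the main results) provides exactly what is needed: subspaces $U_1, U_2, U_3 \subseteq K^{r,r}$ with $\Delta \subseteq U_i$, $\dim U_i \leq n$, and $K^r \otimes U_1 + K^r \otimes U_2 + K^r \otimes U_3 = (K^r)^{\otimes 3}$, realizable exactly when $r(r+3) \leq 3n$, together with a quadruple $(T_0, A_0, B_0, C_0) \in Z$ whose slice-spans instantiate these $U_i$.

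The main obstacle is the decomposition theorem and its realizability. Naive choices like $T_0 = \bar I_r$ and $A_0 = B_0 = C_0$ equal to the standard projection force $U_i = \Delta$, giving $R$ of dimension only $O(r^2)$ and yielding a trivial bound. The decomposition theorem is precisely engineered so that the $U_i$ barely exceed the forced diagonal yet their mode-tensor expansions jointly exhaust $(K^r)^{\otimes 3}$; the threshold $r(r+3) \leq 3n$ is exactly the dimensional budget at which such $U_i$ can be packed into $n$-dimensional slice spans, producing the essentially tight lower bound $\gensubrank(n) \geq 3\lfloor \sqrt{n/3 + 1/4} - 1/2 \rfloor$.
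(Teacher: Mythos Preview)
Your approach is correct and essentially coincides with the paper's: both reduce the lower bound to a differential/tangent-space computation that boils down to exhibiting subspaces of $\Mat_{r,r}$ whose mode-wise inflations fill $K^{r,r,r}$, and both then invoke the structured decomposition theorem (for $r=3d$) together with the numerical condition $3d^2\le n-3d$, equivalently $r(r+3)\le 3n$. Your incidence-variety parametrization $Z\subseteq K^{n,n,n}\times \Mat_{r,n}^3$ is a cosmetic variant of the paper's map $\psi_r:\GL_n^3\times X_r\to K^{n,n,n}$; at the point $A_0=B_0=C_0=[I_r\,|\,0]$ your $U_i$ are exactly the paper's $\widehat{\mathcal L},\widehat{\mathcal M},\widehat{\mathcal N}$, and the two surjectivity conditions agree.

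Two small imprecisions worth tightening: the decomposition theorem as stated gives $\mathcal X_i\subseteq\Mat_{r,r}$ of dimension $r^2/3$ with $\sum_i\mathcal X_i[i]=K^{r,r,r}$, not subspaces already containing $\Delta$; you should explicitly set $U_i=\mathcal X_i+\Delta$ and check $\dim U_i\le r^2/3+r\le n$. And the existence of a quadruple $(T_0,A_0,B_0,C_0)\in Z$ realizing prescribed $U_i$ is not part of that theorem---it is a separate (easy) step: take $A_0=B_0=C_0=[I_r\,|\,0]$, set the $[r]^3$ block of $T_0$ to $I_r$, and fill the three disjoint regions $[r{+}1,n]\times[r]\times[r]$, $[r]\times[r{+}1,n]\times[r]$, $[r]\times[r]\times[r{+}1,n]$ so that the restricted slices span the chosen $U_i$.
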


It is not difficult to see that our upper and lower bounds on the generic subrank are very precise. Namely, 
the difference between the upper bound~$\lfloor\sqrt{3n-2}\rfloor$ %
and the lower bound~$3(\lfloor\sqrt{n/3 + 1/4}-1/2\rfloor)$ %
is at most a small additive constant.\footnote{Indeed, a straightforward direct computation shows that  $\lfloor\sqrt{3n-2}\rfloor - 3(\lfloor\sqrt{n/3 + 1/4}-1/2\rfloor) \leq 5$.}

We conjecture that our upper bound in \autoref{thm:subrank-upper-3-tensor} is tight.
Through a more sophisticated analysis (which requires as a final component a computer verification that a determinant is nonzero) we prove that for all $1\leq n \leq 100$ our upper bound is in fact tight, that is: for all $1 \leq n \leq 100$ we have $\gensubrank(n) = \lfloor\sqrt{3n-2}\rfloor$. %

\paragraph{(2) The generic subrank of $k$-tensors.}
So far we have discussed only tensors of order three. With more elaborate methods we are able to completely extend our above results from tensors of order three to tensors of order $k$ for every $k \geq 3$. Denoting the subrank of a generic tensor in~$K^{n,\ldots, n}$ of order $k$ by $\gensubrank(n,\ldots,n)$, we find that $\gensubrank(n,\ldots,n)$ grows as~$n^{1/(k-1)}$.

\begin{theorem}\label{thm:main-k-tensor}
We have $\gensubrank(n,\ldots,n) = \theta(n^{1/(k-1)})$.
\end{theorem}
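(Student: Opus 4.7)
The plan is to prove the upper and lower bounds separately, extending the three-tensor strategy of \autoref{thm:subrank-upper-3-tensor} and \autoref{thm:subrank-lower-3-tensor} to the $k$-tensor setting.

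For the upper bound $\gensubrank(n,\ldots,n) \leq O(n^{1/(k-1)})$, I would use an incidence variety parametrization. A tensor $T \in K^{n,\ldots,n}$ has $\subrank(T) \geq r$ if and only if there exist vectors $u_i^{(j)} \in K^n$ for $(i,j) \in [r]\times[k]$, linearly independent in each mode $j$, such that $T(u_{i_1}^{(1)}, \ldots, u_{i_k}^{(k)}) = \delta_{i_1,\ldots,i_k}$ for every $(i_1,\ldots,i_k) \in [r]^k$. The set $V_r$ of such tensors is the image under the projection $\pi : P \to K^{n,\ldots,n}$ of the incidence variety $P$ consisting of pairs $(\{u_i^{(j)}\}, T)$ satisfying these $r^k$ diagonal equations. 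First I would observe that for generic $\{u_i^{(j)}\}$ the $r^k$ equations, being evaluations at distinct $k$-tuples of linearly independent vectors, are linearly independent as linear functionals on $T$, so $\dim P \leq krn + (n^k - r^k)$. Hence if $r^{k-1} > kn$, the closure of $V_r$ is a proper subvariety of $K^{n,\ldots,n}$ and the generic subrank is strictly less than $r$, yielding $\gensubrank(n,\ldots,n) \leq (kn)^{1/(k-1)}$.

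For the matching lower bound $\gensubrank(n,\ldots,n) \geq \Omega(n^{1/(k-1)})$, I would apply the differential criterion for dominance: if there exists a smooth point $p \in P$ at which $d\pi_p : T_p P \to K^{n,\ldots,n}$ is surjective, then $V_r$ contains a Zariski-open subset of $K^{n,\ldots,n}$ and the generic subrank is at least $r$. The tangent space $T_p P$ at $p = (\{u_i^{(j)}\}, T)$ is cut out by the linearization
\[
\sum_{j=1}^{k} T(u_{i_1}^{(1)}, \ldots, \dot u_{i_j}^{(j)}, \ldots, u_{i_k}^{(k)}) = -\dot T(u_{i_1}^{(1)}, \ldots, u_{i_k}^{(k)})
\]
for $(i_1,\ldots,i_k) \in [r]^k$. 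Since the projection to $\dot T$ is what we want to be surjective, and since the evaluation $\dot T \mapsto (\dot T(u_{i_1}^{(1)}, \ldots, u_{i_k}^{(k)}))$ is itself surjective onto $K^{r^k}$ for any linearly independent frames (as $r \leq n$), surjectivity of $d\pi_p$ reduces to surjectivity of the linear map $\sigma : (K^n)^{kr} \to K^{r^k}$ defined by the left-hand side of the linearization.

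The hard part, which I expect to be the main obstacle, is exhibiting $T$ and $\{u_i^{(j)}\}$ for which $\sigma$ is surjective whenever $r \leq c_k n^{1/(k-1)}$. The image of $\sigma$ splits as $U_1 + \cdots + U_k \subseteq K^{r^k}$, where $U_j$ is the contribution from perturbing only the $j$-th mode vectors; after fixing the other $k-1$ frames, $U_j$ has the special form of ``$K^n$ tensored with a matrix subspace'' determined by $T$ and those frames. Proving $U_1 + \cdots + U_k = K^{r^k}$ is precisely an order-$k$ analogue of the optimal tensor-space decomposition theorem advertised in the introduction, and I would prove the required $k$-tensor version by induction on $k$, taking the three-tensor decomposition underlying \autoref{thm:subrank-lower-3-tensor} as the base case. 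Controlling the overlaps between the $U_j$'s sharply enough that the target dimension $r^k$ can be reached with only $krn$ many tangent directions, while keeping the constant $c_k$ from degrading catastrophically in $k$, is the delicate step; it is here that the choice of $T$ and the frames $\{u_i^{(j)}\}$ must be tuned carefully. Once this inductive decomposition is established, combining it with the upper bound immediately yields $\gensubrank(n,\ldots,n) = \theta(n^{1/(k-1)})$.
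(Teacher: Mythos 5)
Your upper bound argument is a genuine alternative to the paper's and is correct in outline, though slightly less sharp in its constant. The paper parametrizes the variety $\mathcal{C}_r$ of subrank-$\geq r$ tensors as the orbit $(\GL_n)^{\times k}\cdot X_r$ of a normal form and bounds its dimension by subtracting the fiber dimension of the orbit map; you instead use the incidence variety $P$ of pairs (frames, tensor) with its projection $\pi$ to tensor space. Both yield $\dim\mathcal{C}_r < n^k$ once $r^{k-1}$ exceeds roughly $kn$, and both correctly reduce the generic-subrank bound to that dimension count (you are implicitly using the analogue of \autoref{lem:dimC-subrank}, which you should state). The paper's count, exploiting that $\dim X_r = n^k - r^k + r$ and that the stabilizer of a point of $X_r$ contributes $k\bigl(n(n-r)+r\bigr)$, gives the marginally tighter $(\sum n_i - (k-1))^{1/(k-1)}$, but asymptotically the two agree.

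Your lower bound setup is also the paper's: both compute the differential of the parametrization at a generic point and observe that surjectivity of the differential forces dominance (in any characteristic). Your identification of surjectivity of $d\pi_p$ with surjectivity of the linear map $\sigma$, and the observation that $\operatorname{im}\sigma$ splits as $U_1+\cdots+U_k$ with each $U_j$ of the special form $K^r$ tensored (in mode $j$) with an order-$(k-1)$ subspace, is exactly the reduction the paper makes (\autoref{prop:im-d}). So up to this point the approaches coincide.

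The gap is in the decomposition theorem itself, which is where essentially all the work in the lower bound lives, and your proposal for it (``induction on $k$, taking the three-tensor case as base'') is both unsubstantiated and different from what the paper does. There is no evident inductive step: a decomposition of $(K^r)^{\otimes (k-1)}$ into $k-1$ pieces of dimension $\approx r^{k-2}/(k-1)$ does not produce, by any obvious operation, a decomposition of $(K^r)^{\otimes k}$ into $k$ pieces of dimension $\approx r^{k-1}/k$; tensoring with a new $K^r$ factor gives only $k-1$ pieces, and rebalancing to introduce a $k$th piece of the right size is precisely the nontrivial step. The paper instead proves the decomposition for the balanced space $(K^k)^{\otimes k}$ by a combinatorial bookkeeping device (the set $\mathcal{S}$ of admissible parameter matrices, closed under permutation and a ``direct sum construction,'' \autoref{lem:direct-sum-construction}), iterating this to reach the entry $\begin{bsmallmatrix} k \\ k^{k-2}\end{bsmallmatrix}^{\odot k}$, and then uses a blow-up step (tensoring each $\mathcal{X}_i$ with $\Mat_{d}^{\otimes(k-1)}$, as in \autoref{lem:333-blowup}) to pass from fixed dimension $k$ to dimension $r=kd$. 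Your sketch omits anything playing the role of the blow-up, which is the mechanism that carries the decomposition from a constant-sized space to the regime $r=\Theta(n^{1/(k-1)})\gg k$; without it, even a successful induction on $k$ would only yield decompositions of $(K^k)^{\otimes k}$, not of $(K^r)^{\otimes k}$ for large $r$. You should therefore either prove the scaling step explicitly or restructure the argument around a balanced base case plus blow-up.
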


Again this result directly leads to a large separation between the subrank and the asymptotic subrank (for which the generic value, extending a construction of Strassen, satisfies $\asympsubrank(n) \geq n^{2/k}$), and a separation between the subrank and the slice rank, partition rank, geometric rank and G-stable rank (all of whose generic value is $n$).

Regarding the precise bounds, we prove \autoref{thm:main-k-tensor} in two parts again. 
In the first we extend the upper bound of \autoref{thm:subrank-upper-3-tensor}
to order $k$, in a fully general manner as an upper bound on the generic subrank $\gensubrank(n_1,\ldots,n_k)$ where the $n_i$ need not be equal, as follows.

\begin{theorem} \label{thm:subrank-upper-k-tensor}
We have
$
\gensubrank(n_1,\ldots,n_k) \leq \bigl(\sum_{i=1}^k n_i - (k-1)\bigr)^{\frac{1}{k-1}}.
$
\end{theorem}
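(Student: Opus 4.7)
The plan is to parametrize the locus $X_r := \{T \in K^{n_1, \ldots, n_k} : \subrank(T) \geq r\}$ by a suitable incidence variety, bound its dimension via the fiber-dimension theorem, and show that for $r$ above the claimed threshold $\dim \overline{X_r}$ falls strictly below $n_1 \cdots n_k$, so the generic tensor cannot lie in $X_r$.

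Concretely, I would form the incidence variety
$$Y_r := \Bigl\{(A_1, \ldots, A_k, T) \in \textstyle\prod_i \Mat_{r, n_i} \times K^{n_1, \ldots, n_k} : (A_1 \otimes \cdots \otimes A_k)(T) = I_r \Bigr\},$$
whose projection $\pi_T: Y_r \to K^{n_1, \ldots, n_k}$ has image exactly $X_r$ by definition of subrank. To bound $\dim Y_r$ I use the projection $\pi_A$ onto the $A$-coordinates: assuming WLOG $r \leq \min_i n_i$ (otherwise $X_r$ is already empty), over the dense open locus of tuples $(A_1, \ldots, A_k)$ where each $A_i$ has full rank $r$, the map $A_1 \otimes \cdots \otimes A_k$ is surjective, so the fiber is an affine subspace of dimension $n_1 \cdots n_k - r^k$. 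This yields
$$\dim Y_r \leq r \sum_{i=1}^k n_i + n_1 \cdots n_k - r^k.$$

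To pass from this bound on $Y_r$ to the desired bound on $X_r$, I need a lower bound on the generic fiber of $\pi_T$. The key observation is that the stabilizer of $I_r$ inside $\GL_r^k$ (acting on $K^{r, \ldots, r}$ by multilinear substitution) contains the torus
$$H = \{(D_1, \ldots, D_k) \in \GL_r^k : D_i \text{ invertible diagonal}, \ D_1 \cdots D_k = I\},$$
of dimension $(k-1)r$. This $H$ acts on $Y_r$ by $(D_1, \ldots, D_k) \cdot (A_1, \ldots, A_k, T) = (D_1 A_1, \ldots, D_k A_k, T)$, preserving the incidence relation while fixing $T$. Consequently every nonempty fiber of $\pi_T$ has dimension at least $(k-1)r$, and the fiber-dimension theorem gives
$$\dim \overline{X_r} \leq \dim Y_r - (k-1) r \leq r \sum_{i=1}^k n_i + n_1 \cdots n_k - r^k - (k-1) r.$$

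The theorem then follows: a generic tensor lies outside $X_r$ as soon as $\dim \overline{X_r} < n_1 \cdots n_k$, which rearranges to $r^{k-1} > \sum_i n_i - (k-1)$. The main technical points that need verifying are (a) that the full-rank stratum of $\prod_i \Mat_{r, n_i}$ dominates every component of $Y_r$ whose $\pi_T$-image fills a top-dimensional component of $X_r$ (so the dimension estimate on $Y_r$ is not wasted), and (b) that the torus $H$ really does act with finite stabilizer on a generic point of $Y_r$, so that the $(k-1)r$ lower bound on $\pi_T$-fibers is genuinely achieved. Neither is deep; the real content is the clean accounting showing that the stabilizer of $I_r$ contributes exactly the $(k-1)r$ savings needed to cut the exponent of $r$ from $k$ down to $k-1$.
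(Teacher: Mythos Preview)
Your argument is correct and follows the same dimension-counting strategy as the paper: parametrize the locus $\mathcal{C}_r$ of tensors with subrank $\geq r$, bound its dimension via the fiber-dimension theorem, and conclude. The paper parametrizes $\mathcal{C}_r$ as the image of $\psi_r:\prod_i \GL_{n_i} \times X_r \to K^{n_1,\ldots,n_k}$ (where $X_r$ is the set of tensors whose top $[r]^k$ block is diagonal with nonzero diagonal), and lower-bounds the fiber of $\psi_r$ by the subgroup of block-lower-triangular matrices $\begin{psmallmatrix} D & 0\\ * & *\end{psmallmatrix}$ with $D$ diagonal; you instead use the incidence variety of witness matrices $(A_1,\ldots,A_k,T)$ and lower-bound the $\pi_T$-fiber by the diagonal torus stabilizing $I_r$. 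The arithmetic is identical---both routes give $\dim \mathcal{C}_r \leq \prod_i n_i - r\bigl(r^{k-1} - \sum_i n_i + (k-1)\bigr)$---and your parametrization is if anything a bit more economical (smaller source, smaller fiber correction).

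Your caveats (a) and (b) are in fact non-issues. Since every flattening of $I_r$ has rank $r$, any tuple $(A_1,\ldots,A_k)$ satisfying $(A_1\otimes\cdots\otimes A_k)T = I_r$ must have each $A_i$ of full rank $r$; hence $Y_r$ lies entirely over the full-rank stratum, settling (a), and the same full-rank condition forces the $H$-action on $Y_r$ to be free (if $D_iA_i = A_i$ with $A_i$ of full row rank then $D_i = I$), settling (b).
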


Then finally we extend \autoref{thm:subrank-lower-3-tensor} to order $k$ tensors, which leads to the asymptotically matching lower bound (which for conciseness we will not write down here more explicitly).

\begin{theorem}\label{thm:subrank-lower-k-tensor}
We have $\gensubrank(n,\ldots,n) \geq \Omega(n^{1/(k-1)})$.
\end{theorem}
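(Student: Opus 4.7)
The plan is to extend the lower bound argument used for \autoref{thm:subrank-lower-3-tensor} to tensors of order $k$, combining a dimension count with a surjectivity-of-differential argument and appealing to the structured decomposition theorem mentioned as the third main result above. I will choose $r = \Theta(n^{1/(k-1)})$, exhibit a natural family of tensors all of subrank at least $r$, and show that the orbit of this family under $\GL_n^{\,k}$ is dense in $K^{n,\ldots,n}$, which implies the generic subrank bound.

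Concretely, let $S \subseteq K^{n,\ldots,n}$ be the affine subspace of tensors whose restriction to the $[r]^k$-block equals $I_r$. Then $S$ has dimension $n^k - r^k$, and every $T \in S$ satisfies $\subrank(T) \geq r$, since the coordinate projections $K^n \to K^r$ on each factor realize $I_r$ as a restriction of $T$. Consider the action map
\[
\mu : \GL_n^{\,k} \times S \to K^{n,\ldots,n}, \qquad (g_1,\ldots,g_k, s) \mapsto (g_1 \otimes \cdots \otimes g_k) \cdot s.
\]
Since the image of $\mu$ consists only of tensors of subrank at least $r$, it suffices to show $\mu$ is dominant. The ambient dimension count $\dim(\GL_n^{\,k}) + \dim(S) = k n^2 + n^k - r^k \geq n^k$ holds with room to spare for $r = \Theta(n^{1/(k-1)})$, since $r^k = \Theta(n^{k/(k-1)})$ and $k/(k-1) < 2$ for $k \geq 3$. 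The task therefore reduces to showing that the differential of $\mu$ is surjective at some well-chosen point.

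The image of $d\mu$ at a point $(e, s_0)$ equals
\[
T_{s_0} S \;+\; \sum_{i=1}^{k} \mathfrak{gl}_n \cdot_i s_0,
\]
where $\cdot_i$ is the infinitesimal action on the $i$-th tensor factor. Each summand $\mathfrak{gl}_n \cdot_i s_0$ has the form $K^n$ (at position $i$) tensored with the row span of the flattening of $s_0$ along position $i$, which is exactly of the form ``matrix subspace tensored with $K^n$'' treated by our decomposition theorem. Since $T_{s_0} S$ already covers everything outside the $[r]^k$-block, surjectivity of $d\mu$ reduces to showing that the restrictions of $\sum_{i=1}^k \mathfrak{gl}_n \cdot_i s_0$ to the $[r]^k$-block span all of $K^{r^k}$. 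The naive choice $s_0 = I_r$ is insufficient: it only yields the Hamming-1 neighborhood of the diagonal in $[r]^k$, which is far smaller than $K^{r^k}$ when $r$ is large. Instead I will take $s_0 = I_r + P$, with $P$ a carefully chosen tensor supported off $[r]^k$; the perturbation $P$ couples off-block entries into the $[r]^k$-projection of $\mathfrak{gl}_n \cdot_i s_0$ via cross-terms $X_{j_i, a}\,(s_0)_{j_1,\ldots,a,\ldots,j_k}$ with $a > r$. The main obstacle is constructing such a $P$ for which the resulting sum spans all of $K^{r^k}$; this is precisely the content of the structured decomposition theorem, which guarantees a suitable $P$ exists when $r$ is on the order of $n^{1/(k-1)}$. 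This yields $\gensubrank(n,\ldots,n) \geq r = \Omega(n^{1/(k-1)})$.
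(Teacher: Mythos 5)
Your proposal matches the paper's strategy essentially step-for-step: the paper also proves dominance of the orbit map by showing surjectivity of the differential at a generic (equivalently, suitably chosen) point, reduces surjectivity on the $[r]^k$-block to the structured decomposition $\mathcal{X}_1[1]+\cdots+\mathcal{X}_k[k]=K^{r,\ldots,r}$, and uses the off-block slices of the perturbed tensor to produce generic subspaces of dimension $n-r$. The only step you pass over is that \autoref{thm:space-decomp-n} directly gives the decomposition only for $(K^k)^{\otimes k}$ with subspaces of dimension $k^{k-2}$; the paper first \emph{blows up} that decomposition by tensoring with $\Mat_{d,\ldots,d}$ to get subspaces of $(K^{kd})^{\otimes(k-1)}$ of dimension $k^{k-2}d^{k-1}$, and then chooses $d$ so that $kd + k^{k-2}d^{k-1} \le n$, which is exactly what forces $r=kd=\Theta(n^{1/(k-1)})$.
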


Our proof of this last theorem (and also \autoref{thm:subrank-lower-3-tensor}) makes crucial use of the  technical results on tensor space that we discuss in the next section.

\paragraph{(3) Technical result: structured subspace decomposition of tensor space.}

The proofs of our lower bounds (\autoref{thm:subrank-lower-3-tensor} and \autoref{thm:subrank-lower-k-tensor}) rely on a technical result about a very structured decomposition of tensor space which we think may be of independent interest, so we will describe it here. The main goal here (in the simplest case) is to write tensor space~$K^{n, n, n}$ as a sum of tensor subspaces as efficiently as possible (meaning with smallest as possible sum of dimensions) such that each subspace has the form of an $n \times n$ matrix subspace tensored with~$K^n$ (so that it becomes a subspace of $K^{n,n,n}$) with the tensoring being applied in any of the three possible directions. Our result is that we can do this optimally for any order of tensor space.

We will begin our discussion with the simplest version, which is for the tensor space~$K^{3,3,3}$, as it is the easiest to explain (and in fact also not hard to prove) and forms the basis for the proof of the (harder to prove) general version for the tensor space $K^{n,\ldots, n} = (K^n)^{\otimes n}$ of tensors of order~$n$ and dimension~$n$ in each direction (in our application to the generic subrank we use a blow-up argument so that we can deal with tensors of order $k$ and dimension $n$ in each direction). 

Recall that $K^{n_1, n_2, n_3}$ denotes the space of $n_1 \times n_2 \times n_3$ tensors with coefficients in the field~$K$, which we require to be large enough in this part. Let $\Mat_{n,m}$ denote the space of~$n \times m$ matrices with coefficients in $K$.
We use a special notation to denote a certain construction of a tensor subspaces given a matrix subspace. Namely, for any given $n_1 \times n_2$ matrix subspace $\mathcal{W} \subseteq \Mat_{n_1, n_2}$ we denote by $\mathcal{W}[3]$ the tensor subspace $\mathcal{W} \otimes K^{n_3} \subseteq K^{n_1, n_2, n_3}$. Analogously, for any matrix subspace $\mathcal{W} \subseteq \Mat_{n_2, n_3}$ we denote by $\mathcal{W}[1] \subseteq K^{n_1, n_2, n_3}$ the tensor subspace obtained by appropriately tensoring $\mathcal{W}$ with $K^{n_1}$, and for any matrix subspace $\mathcal{W} \subseteq \Mat_{n_1, n_3}$ we denote by $\mathcal{W}[2] \subseteq K^{n_1, n_2, n_3}$ the tensor subspace obtained by appropriately tensoring $W$ with~$K^{n_2}$.

For the tensor space $K^{3,3,3}$ we have the following optimal decomposition theorem.

\begin{theorem}\label{thm:space-decomp-333}
There exist subspaces $\mathcal{X}_i \subseteq \Mat_{3,3} = K^3 \otimes K^3$, each of dimension 3, such that 
\[%
K^{3,3,3} = \mathcal{X}_1[1] + \mathcal{X}_2[2] + \mathcal{X}_3[3].\footnote{We note that the existence of subspaces $\mathcal{X}_i$ with this property is equivalent to generic subspaces $\mathcal{X}_i$ having this property (i.e.~there being a non-empty Zariski-open set of triples $\mathcal{X}_i$ with this property).}
\]%
\end{theorem}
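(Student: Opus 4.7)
The plan is to reduce the claim to producing a single triple and then construct one. First, $\dim(\mathcal{X}_i[i]) = 3 \cdot 3 = 9$, so the three putative summands have total dimension $27 = \dim K^{3,3,3}$. Hence the asserted equality holds if and only if the natural map $\mathcal{X}_1[1] \oplus \mathcal{X}_2[2] \oplus \mathcal{X}_3[3] \to K^{3,3,3}$ is an isomorphism, i.e., if and only if a single $27 \times 27$ determinant polynomial in the coordinates of the triple is nonzero. Equivalently, setting $\mathcal{Y}_i = \mathcal{X}_i^\perp \subseteq \Mat_{3,3}$, the condition becomes $\mathcal{Y}_1[1] \cap \mathcal{Y}_2[2] \cap \mathcal{Y}_3[3] = 0$, whose expected dimension is exactly $27 - 3 \cdot 9 = 0$. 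By the footnote, it suffices to exhibit one triple for which this determinant does not vanish.

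Before attempting a construction, I would rule out coordinate-aligned candidates: if each $\mathcal{X}_i$ were spanned by standard basis matrices $E_{jk}$, then each $\mathcal{X}_i[i]$ would be spanned by a set of standard basis tensors, and the direct-sum condition would require their supports to partition $[3]^3$ into three blocks of size $9$, each of the form ``full $[3]$ in direction $i$ times a fixed $3$-element subset $T_i$ in the other two directions.'' A short combinatorial check rules this out: on each level $k$, the blocks for $i = 1$ and $i = 2$ project to a full column set $[3] \times J_1(k)$ and a full row set $I_2(k) \times [3]$ of $[3]^2$, which share a cell whenever both are nonempty; the constraints $|T_1| = |T_2| = 3$ force at least one such collision, contradicting the partition requirement. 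So genuinely non-coordinate-aligned planes are necessary.

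Hence I would construct a non-coordinate-aligned triple, using cyclic symmetry to cut down the search space: pick a single $3$-dimensional $\mathcal{X} \subseteq \Mat_{3,3}$ and let $\mathcal{X}_1, \mathcal{X}_2, \mathcal{X}_3$ be its three images under the cyclic permutation of tensor factors, reducing the parameter space to a single point of $\mathrm{Gr}(3, \Mat_{3,3})$. For a suitably generic $\mathcal{X}$ (chosen to avoid overly symmetric candidates such as circulant, diagonal, or permutation-type subspaces, which tend to carry coincidental intersections), verifying the direct-sum property reduces to computing a single $27 \times 27$ determinant, or equivalently to exhibiting explicit preimages of the $27$ standard basis tensors in the three subspaces. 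The main obstacle is choosing an example concrete enough to permit this verification by hand while generic enough to avoid degeneracies; once one witness is produced, the footnote immediately promotes it to both the asserted existence and its generic validity.
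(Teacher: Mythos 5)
Your proposal correctly reduces the problem to exhibiting a single witness triple (indeed, the direct-sum condition is a $27\times 27$ determinant being nonzero, so it suffices to verify one choice), and the combinatorial observation that no coordinate-aligned choice of $\mathcal{X}_i$ (spanned by standard basis matrices $E_{jk}$) can work is a nice side remark. However, the proposal stops short of the step that actually constitutes the theorem's content: no concrete triple is produced and no determinant is evaluated. You explicitly acknowledge this (``The main obstacle is choosing an example concrete enough to permit this verification by hand''), so what you have is a plan of attack, not a proof. Two further weaknesses: first, the cyclic-symmetry reduction (take $\mathcal{X}_2,\mathcal{X}_3$ to be cyclic shifts of a single $\mathcal{X}$) is an unjustified restriction — the determinant polynomial does not vanish identically on all triples, but you have not argued that it fails to vanish identically on the cyclically symmetric locus, so the reduced search might, a priori, have no solution; second, the final sentence misreads the footnote, which promotes existence to generic validity, whereas the theorem only requires existence — the footnote is a corollary, not an ingredient.

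The paper's proof is quite different and worth comparing to. Rather than hunting for one good triple in $\operatorname{Gr}(3,\Mat_{3,3})^3$ and evaluating a $27\times 27$ determinant, the paper sets up a recursive framework (the set $\mathcal{S}$ of ``admissible'' parameter matrices together with a direct-sum construction, Lemma 4.1) and reduces everything to one small hand-checkable base case in $K^{2,2,2}$: with $\mathcal{W}_1 = \{\text{symmetric } 2\times 2 \text{ matrices}\}$ (dimension 3) and $\mathcal{W}_3 = \operatorname{span}\{I_2\}$ (dimension 1), one checks $\mathcal{W}_1[1]\cap \mathcal{W}_3[3] = 0$ by a short linear computation, giving $\begin{bsmallmatrix}2&2&2\\0&3&1\end{bsmallmatrix}\in\mathcal{S}$. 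From there a sequence of direct-sum steps (each of which is bookkeeping, not a determinant computation) builds $\begin{bsmallmatrix}3&3&3\\3&3&3\end{bsmallmatrix}\in\mathcal{S}$. This machinery also powers the general order-$n$ result in Theorem 1.8, which a one-off determinant check would not. If you want to complete your approach in the spirit you outlined, you would need to actually write down a concrete $\mathcal{X}$, exhibit preimages of the 27 basis tensors (or compute the determinant), and confirm the cyclic ansatz does not trivialize; the paper's route avoids all of this by using the $2\times 2\times 2$ base case.
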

Comparing dimensions, we have $\dim K^{3,3,3} = 3\cdot3\cdot 3 = \dim \mathcal{X}_1[1] + \dim \mathcal{X}_2[2] + \dim \mathcal{X}_3[3]$ and so the decomposition in \autoref{thm:space-decomp-333} is optimal. In other words, it is a direct sum decomposition of~$K^{3,3,3}$. 

The requirement in \autoref{thm:space-decomp-333} that the $\mathcal{X}_i$ all have dimension 3 is crucial to make the theorem interesting, as without this requirement we could ``decompose'' $K^{3,3,3}$ simply as $K^{3,3,3} = \mathcal{X}_1[1]$ with $\mathcal{X}_1 = \Mat_{3,3}$. 
Interestingly, the analogous statement of \autoref{thm:space-decomp-333} for any matrix space~$K^{n,n}$ is false. That is, for every positive integer $n$ there are no subspaces $\mathcal{X}_i \subseteq K^n$, each of dimension~$n/2$, such that $K^{n,n} = \mathcal{X}_1[1] + \mathcal{X}_2[2]$.
This is saying that if we pick a row space $R$ and a column space~$C$ of dimension~$n/2$, then it is not possible to write every $n\times n$ matrix~$A$ as $A_1 + A_2$ where the row space of $A_1$ is in $R$ and the column space of $A_2$ is in $C$.
On the other hand, 3-tensors do not suffer from this malady: we can write any tensor  as a sum of thee tensors, whose ``row${}_i$''-space are asked to be in a generic space of the right dimension. (Here we can think of $W[i]$ as tensors whose ``row${}_i$''-space is $W$.)
Therefore, \autoref{thm:space-decomp-333} is demonstrating an interesting phenomenon that occurs in tensor space but not in matrix space.

\autoref{thm:space-decomp-333} (as opposed to the upcoming generalization to $(K^n)^{\otimes n}$) is not difficult to prove. Indeed, one may choose the matrix subspaces $\mathcal{X}_i$ randomly and then for such an explicit choice verify directly that they satisfy the claim (and this approach will work with high probability). In fact, there are even valid choices of the $\mathcal{X}_i$ such that each $\mathcal{X}_i$ is spanned by a matrix with coefficients in $\{0,1\}$.

Next we discuss the higher-dimensional version of \autoref{thm:space-decomp-333} in which $K^{3,3,3}$ is generalized to $n$-tensor space $(K^n)^{\otimes n}$. We naturally extend our previous notation so that for every tensor subspace $\mathcal{W} \subseteq (K^n)^{\otimes (n-1)}$ we define, for any $i \in \{1, \ldots, n\}$, the tensor subspace $\mathcal{W}[i] \subseteq (K^n)^{\otimes n}$ by tensoring $\mathcal{W}$ with $K^n$ in one of the $n$ possible ways. 

By a recursive construction with $K^{3,3,3}$ as a base case, we find the following optimal decomposition of $n$-tensor space $(K^n)^{\otimes n}$ for all $n\geq3$.

\begin{theorem}\label{thm:space-decomp-n}
For every integer $n\geq 3$ there exist subspaces $\mathcal{X}_i \subseteq (K^n)^{\otimes n-1}$ of dimension~$n^{n-2}$ such that
\[
(K^n)^{\otimes n} = \mathcal{X}_1[1] + \mathcal{X}_2[2] + \cdots + \mathcal{X}_n[n].
\]
\end{theorem}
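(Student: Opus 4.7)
The plan is induction on $n$, with $n = 3$ given by \autoref{thm:space-decomp-333}. Because the dimension arithmetic already balances ($n \cdot n^{n-1} = n^n = \dim (K^n)^{\otimes n}$), the condition $\sum_i \mathcal{X}_i[i] = (K^n)^{\otimes n}$ is equivalent to the sum being direct, and it cuts out a Zariski-open set on the product of Grassmannians $\prod_i \operatorname{Gr}(n^{n-2}, (K^n)^{\otimes(n-1)})$. By semicontinuity it therefore suffices to exhibit a single valid tuple $(\mathcal{X}_1, \ldots, \mathcal{X}_n)$, from which the genericity statement of the footnote follows automatically.

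For the recursive step I would decompose $K^n = K^{n-1} \oplus K e_n$ in every mode, splitting $(K^n)^{\otimes n}$ into $2^n$ blocks $B_\epsilon$ indexed by $\epsilon \in \{0,1\}^n$, where $B_\epsilon$ is obtained by placing $K^{n-1}$ in each mode with $\epsilon_j = 0$ and $K e_n$ elsewhere. The bulk block $B_{\mathbf{0}} = (K^{n-1})^{\otimes n}$ can be covered by feeding the inductive hypothesis on $(K^{n-1})^{\otimes(n-1)}$ into one mode and tensoring with the spare $K^{n-1}$ factor. The sparser blocks are lower-order tensor spaces carrying a copy of $e_n$ in each ``$1$-coordinate'' of $\epsilon$, and each such block can be handed to an $\mathcal{X}_i[i]$ for an $i$ with $\epsilon_i = 1$, exploiting the fact that such a block is a slice of $\mathcal{X}_i[i]$ by construction. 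Defining each $\mathcal{X}_i^{(n)} \subseteq (K^n)^{\otimes(n-1)}$ block-by-block in this way, with its bulk contribution taken from the inductive decomposition and its ``$e_n$-augmented'' contributions supplying the extra dimensions required to reach $n^{n-2}$, produces the candidate tuple.

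The main obstacle is the very tight dimension budget: the target $n^{n-2}$ exceeds the inductive $(n-1)^{n-3}$, but only just enough to absorb the $e_n$-blocks, so the block-by-block assignments must fit together without the slightest overlap in $\sum_i \mathcal{X}_i^{(n)}[i]$. I expect the cleanest way to guarantee this is to order the blocks $B_\epsilon$ by $|\epsilon|$ and arrange the $\mathcal{X}_i^{(n)}$ to be block-triangular with respect to this ordering; then the desired equality reduces to showing each diagonal block is full, and those diagonal blocks are either strictly smaller instances handled by induction or trivial identifications using the $e_n$ direction. Implementing this triangularity so that the inductive covering in $B_{\mathbf{0}}$ remains compatible with the ad~hoc coverings in the $|\epsilon| \geq 1$ blocks — i.e.\ propagating the cancellation-free behaviour of the $n=3$ base case upward through the recursion — will be the delicate combinatorial core of the argument.
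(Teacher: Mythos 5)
Your proposal takes a genuinely different route from the paper, but it is not a complete proof: the part you flag as ``the delicate combinatorial core of the argument'' is precisely the part that would need to be carried out, and you give no indication of how to actually resolve it.

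The paper does \emph{not} induct on $n$. Instead it fixes $n$ and, working inside the combinatorial bookkeeping set $\mathcal{S}$ of admissible parameter matrices, inducts on an auxiliary index $d$ running from $3$ up to $n$: one first establishes $\begin{bsmallmatrix} n \\ n \end{bsmallmatrix}^{\odot 3}\odot \begin{bsmallmatrix} 1 \\ n^2\end{bsmallmatrix}^{\odot (n-3)}\in\mathcal{S}$ (a $3$-tensor padded with trivial $1$-dimensional modes) and then repeatedly trades a dimension-$1$ mode for a genuine dimension-$n$ mode, at each step using \autoref{lem:direct-sum-construction}. That lemma --- which decomposes a single mode as a direct sum $V_d = V_d'\oplus V_d''$ and glues two smaller decompositions --- is the engine that makes the bookkeeping work cleanly, and the base case is the small explicit verification $\begin{bsmallmatrix}2&2&2\\0&3&1\end{bsmallmatrix}\in\mathcal{S}$, not Theorem~\ref{thm:space-decomp-333}. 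Crucially, the paper never has to grow the tensor order and the mode dimensions \emph{simultaneously}: the mode dimensions are frozen at $n$ and $1$ throughout, and only the count of nontrivial modes increases.

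Your scheme, by contrast, grows both at once: passing from $(K^{n-1})^{\otimes(n-1)}$ to $(K^n)^{\otimes n}$ increases the order by one and every mode dimension by one. That forces you to resolve two separate bookkeeping problems in a single step, and it introduces an asymmetry that you do not address: the inductive hypothesis supplies only $n-1$ subspaces $\mathcal{Y}_1,\dots,\mathcal{Y}_{n-1}$, so after tensoring with the spare $K^{n-1}$ factor the bulk block $B_{\mathbf{0}}$ is covered by $\mathcal{X}_1[1],\dots,\mathcal{X}_{n-1}[n-1]$ alone and $\mathcal{X}_n$ must be built entirely from the $2^n-1$ sparser blocks. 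Meanwhile the bulk part $\mathcal{Y}_i\otimes K^{n-1}$ you assign to $\mathcal{X}_i$ for $i<n$ already has dimension $(n-1)^{n-2}$, so $\mathcal{X}_i$ has only $n^{n-2}-(n-1)^{n-2}$ dimensions left to spend across the sparse blocks, and $\mathcal{X}_i[i]$'s bulk portion spills into the block $B_{e_i}$ as well. Tracking that spillover and ensuring the sparse blocks decompose without overlap is exactly the gap in your argument; the ``block-triangular'' heuristic you appeal to is not substantiated, and it is not clear the dimensions even admit a consistent triangular assignment. If you want to pursue this line you would essentially have to re-derive something equivalent to the paper's \autoref{lem:direct-sum-construction}; it is cleaner to split one mode at a time, as the paper does, than all $n$ modes simultaneously.
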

Again, since $\dim K^{n,\ldots, n} = n^n = n \cdot n\cdot n^{n-2} = \sum_{i=1}^n \dim \mathcal{X}_i[i]$, the decomposition in \autoref{thm:space-decomp-n} is optimal in terms of dimension and hence a direct sum decomposition of~$(K^n)^{\otimes n}$.

\autoref{thm:space-decomp-n} is the theorem we use to prove the general generic subrank lower bound \autoref{thm:subrank-lower-k-tensor}. However, the methods we introduce in the process of proving \autoref{thm:space-decomp-n} allow us much more generally for other choices of positive integers $n_1, \ldots, n_k$ to construct optimal decompositions $K^{n_1, \ldots, n_k} = \mathcal{X}_1[1] + \mathcal{X}_2[2] + \cdots + \mathcal{X}_k[k]$ from known decompositions. This leads to a natural fundamental mathematical question (with potentially other applications) of what choices of $n_1, \ldots, n_k$ and $\dim \mathcal{X}_i$ allow such decompositions.

\subsection{Technical Overview}\label{subsec:tech}

We give a brief technical overview of the methods and ideas that we use in our proofs. 

\paragraph{Upper bounds on generic subrank.}

The high-level approach in our proof of the upper bound on the generic subrank in \autoref{thm:subrank-upper-3-tensor} (and similarly for the general case of \autoref{thm:subrank-upper-k-tensor}) is as follows. For any nonnegative integer we consider the set $\mathcal{C}_r$ of tensors in $K^{n,n,n}$ with subrank at least~$r$. We argue that the generic subrank is precisely the largest $r$ such that the dimension of~$\mathcal{C}_r$ equals the dimension $n^3$ of the full space $K^{n,n,n}$. We then prove the core ingredient, namely the dimension upper bound $\dim(\mathcal{C}_r) \leq n^3 - r(r^2 - 3n + 2)$. This information leads to the desired result, since if we let $t$ be the generic subrank $\gensubrank(n)$, then we must by the above have $n^3 = \dim(\mathcal{C}_t) \leq n^3 - t(t^2 - 3n + 2)$, from which we directly deduce that $t\leq \sqrt{3n - 2}$, that is, we obtain the bound $\gensubrank(n) \leq \sqrt{3n-2}$ of \autoref{thm:subrank-upper-3-tensor}.

To prove the aforementioned dimension upper bound on $\mathcal{C}_r$ that is the core ingredient in the above argument we employ the idea of providing a (non-injective) parametrization of $\mathcal{C}_r$, compute the dimension of the parameter space, and then subtracting the dimension ``over-count'' (the fiber dimension under the parametrization). 
For this we first define a set $X_r$ of tensors in $K^{n,n,n}$ of a special form, namely whose $[r]\times [r]\times [r]$ subtensor is zero except for the diagonal which is nonzero. Then the elements of $X_r$ clearly have subrank at least $r$ (and are thus in $\mathcal{C}_r$). %
The important point is that, by applying all possible basis transformations to the tensors in $X_r$ we obtain all of~$\mathcal{C}_r$.
Thus $X_r$ together with the group of all basis transformations provide the parametrization of~$\mathcal{C}_r$.
Technically, we describe this by saying that the map $\psi_r : \GL_n \times \GL_n \times \GL_n \times X_r \to K^{n,n,n}$ that maps $(A, B, C, T)$ to $(A \otimes B \otimes C)T$ has image precisely $\mathcal{C}_r$. Now the computation to upper bound $\dim(\mathcal{C}_r)$ consists of computing the dimension of the domain $\GL_n \times \GL_n \times \GL_n \times X_r$ and subtracting the dimension of a general fiber of $\psi_r$, which we carry out to arrive at the dimension upper bound stated earlier.

\paragraph{Lower bounds on generic subrank.}

The high-level approach in our proof of the lower bound on the generic subrank in \autoref{thm:subrank-lower-3-tensor} (and the general \autoref{thm:subrank-lower-k-tensor}) is as follows. We use notation defined in the upper bound proof  discussion and the results section (\autoref{subsec:results}). Our proof reduces the problem of lower bounding the generic subrank $\gensubrank(n_1, n_2, n_3)$ to a problem of constructing tensor space decompositions of a specific form (which we discuss further in the next section). Namely we prove that, for $r \leq n_1, n_2, n_3$ (with a technical condition), if $\mathcal{X}_i \subseteq \Mat_{r,r}$ are subspaces of dimension $n_i - r$ for $i=1,2,3$ such that 
\[
\mathcal{X}_1[1] + \mathcal{X}_2[2] + \mathcal{X}_3[3] = K^{r,r,r},
\footnote{where $\mathcal{X}_1[1] \subseteq K^{r,r,r}$ denotes $\mathcal{X}_1$ tensored with $K^r$ in the first tensor leg, $\mathcal{X}_2[2]$ denotes $\mathcal{X}_2$ tensored with $K^r$ in the second tensor leg, and $\mathcal{X}_3[3]$ denotes $\mathcal{X}_3$ tensored with $K^r$ in the third tensor leg}
\]
then $\gensubrank(n_1, n_2, n_3) \geq r$. (And we prove the analogous statement for higher-order tensors.) Finding such $\mathcal{X}_i$ we discuss in the next section. The proof of the lower bound given the $\mathcal{X}_i$ goes as follows.

Recall the map $\psi_r : \GL_n \times \GL_n \times \GL_n \times X_r \to K^{n,n,n}$ whose image we already claimed is the set $\mathcal{C}_r$ of tensors of subrank at least $r$. To reach our goal we want to find conditions that imply that the image of $\psi_r$ has full dimension $n^3$ and thus is Zariski-dense in $K^{n,n,n}$.
To do this we use the notion of the differential $d\psi_r$ of $\psi_r$ and a general method that says that the dimension of a map can be computed as the rank of the differential at a ``generic point''.

The differential $d\psi_r$ at the point $(g_1, g_2, g_3, T)$ is the map 
\[
(d\psi_r)_{(g_1, g_2, g_3, T)} : \Mat_{n, n} \times \Mat_{n,n} \times \Mat_{n,n} \times Y_r \to K^{n,n,n}
\]
where $Y_r$ is the tangent space of $X_r$, given by
\[
(A, B, C, S) \mapsto ((A \otimes g_2 \otimes g_3) + (g_1 \otimes B \otimes g_3) + (g_1 \otimes g_2 \otimes C))T + (g_1 \otimes g_2 \otimes g_3)S.
\]
Analyzing this map we compute its image which leads to the aforementioned lower bound statement on the generic subrank.
In fact we prove a stronger lower bound, which in characteristic~0 characterizes the generic subrank precisely, but with a harder to analyze condition, in \autoref{thm:technical-lower-bound-strong}. In particular, if the characteristic of $K$ is $0$, then \autoref{thm:technical-lower-bound-strong} says that $\gensubrank(n_1,n_2,n_3)$ is given by the smallest number $r$ such that 
$\mathcal{X}_1[1] + \mathcal{X}_2[2] + \mathcal{X}_3[3] + W_r = K^{r,r,r}$ for generic subspaces $\mathcal{X}_i \subseteq \Mat_{r,r}$ of dimension $n_i -r$, where $W_r \subseteq K^{r,r,r}$ is the subspace of tensors such that $T_{ijk} = 0$ if  $i,j,k$ are all different. (In other characteristics this number $r$ gives a lower bound.)

\paragraph{Constructions of tensor space decompositions.}

To prove %
\autoref{thm:space-decomp-n} we introduce general methods to construct the optimal tensor space decompositions as described in the theorem from existing ones. We then give some small constructions and combine these in multiple recursions to achieve the required object.

To set this up we take a very general approach in which we study direct sum decompositions 
\[
K^{n_1, n_2, \ldots, n_k} = \mathcal{X}_1[1] + \cdots + \mathcal{X}_k[k]
\]
where $\mathcal{X}_i \subseteq K^{n_1} \otimes \cdots \otimes K^{n_{i-1}} \otimes K^{n_{i+1}} \otimes \cdots \otimes K^{n_k}$ is a tensor subspace and $\mathcal{X}_i[i]$ denotes the subspace obtained by tensoring~$\mathcal{X}_i$ with $K^{n_i}$ as the $i$th tensor factor. Let $a_i$ be the dimension of~$\mathcal{X}_i$. We are interested in which values of $n_1, \ldots, n_k$ and $a_1, \ldots, a_k$ allow for a decomposition of the above form. Writing these numbers into a $2\times k$ matrix
\[
\begin{bmatrix} n_1 & n_2 & \cdots & n_k\\a_1 & a_2  & \cdots & a_k\end{bmatrix}
\]
we let $\mathcal{S}$ be the set of all such matrices for which a decomposition exists satisfying the parameters given in the matrix. Then \autoref{thm:space-decomp-n} corresponds to proving that the $2 \times n$ matrix
\begin{equation}\label{eq:base-case-n-intro}
\begin{bmatrix} n & n & \cdots & n\\n^{n-2} & n^{n-2}  & \cdots & n^{n-2}\end{bmatrix}
\end{equation}
is an element of $\mathcal{S}$. 

Next we observe that there are some simple constructions and properties of elements in $\mathcal{S}$, such as: if a matrix is in~$\mathcal{S}$ then if we permute its columns it is still in~$\mathcal{S}$ and the matrix~$\begin{bsmallmatrix} n\\1 \end{bsmallmatrix}$ is in~$\mathcal{S}$. With slightly more work we can give direct constructions for
\begin{equation}\label{eq:base-case-intro}
\begin{bmatrix} 2 & 2 & 2\\0 & 3 & 1\end{bmatrix}
\end{equation}
being an element of $\mathcal{S}$, for instance.

In order to construct more elements of $\mathcal{S}$ we prove a ``direct sum construction'' that given two elements in $\mathcal{S}$ combines them to get a new one. Namely, this result gives that, if
\[
\begin{bmatrix} n_1 & n_2 & \dots & n_{k-1} & n_k'\\
a_1' & a_2' & \cdots & a_{k-1}' & a_k\end{bmatrix} \in \mathcal{S}
\]
and
\[
\begin{bmatrix} n_1 & n_2 & \dots & n_{k-1} & n_k''\\
a_1'' & a_2'' & \cdots & a_{k-1}'' & a_k\end{bmatrix} \in \mathcal{S}
\]
then we have 
\[
\begin{bmatrix} 
n_1 & n_2 & \cdots & n_k\\
a_1 & a_2  & \cdots & a_k
\end{bmatrix}\in \mathcal{S}
\]
where $n_k=n_k'+n_k''$, and $a_i=a_i'+a_i''$ for $i=1,2,\dots,k-1$.

Finally, from some simple base cases including \eqref{eq:base-case-intro} we give a construction for 
\[ %
\begin{bmatrix} 3 & 3 & 3\\3 & 3 & 3\end{bmatrix}
\] %
being an element of $\mathcal{S}$
and via an elaborate argument with multiple recursions arrive at the matrix in \eqref{eq:base-case-n-intro} being an element of $\mathcal{S}$, which is the ingredient required in our proof of \autoref{thm:subrank-lower-k-tensor}. It is natural to ask what precisely are all elements of~$\mathcal{S}$, which we leave as an open problem.

\subsection{Related Work}\label{subsec:related}

The previous best bound on the generic subrank $\gensubrank(n)$ was the upper bound $n^{2/3 + o(1)}$ which follows from the work of Bürgisser \cite[Satz~2.8]{burg}\footnote{\cite[Satz~2.8]{burg} determines the generic value of a tensor parameter called the ``lower support functional'' (unteren  Trägerfunktional) which upper bounds the subrank as proven in \cite{strassen1991degeneration}.} as part of the broader research program on the theory of asymptotic spectra of tensors (motivated by the study of matrix multiplication algorithms). The proof of this bound relies on the method of ``lower support functionals'' introduced by Strassen in \cite{strassen1991degeneration} (see also the more recent surveys on this topic in \cite{christandl2017universalproc,phd,WZ}) and the properties of these that he proves there. This method recovers certain asymptotic information about tensors, which importantly is monotone under the restriction preorder and normalized on diagonal tensors so that it provides an upper bound on the subrank (in a manner that is very different from the approach that we take to prove our optimal upper bound). Bürgisser's analysis of this method on generic tensors consists of proving that the support of a generic tensor is large for any choice of basis and a combinatorial study of a certain type of covering of these supports, which leads to the aforementioned $n^{2/3 + o(1)}$ upper bound.

Recent research has brought about a rich collection of tensor methods that are in a similar ``regime'' as the subrank (for instance, they are all monotone under restriction), each with their own properties and applications in complexity theory and combinatorics. Notable are the slice rank~\cite{tao} and closely related partition rank \cite{naslund2017multi}, the analytic rank \cite{MR2773103,lovett2019analytic,bhrushundi2018multilinear}, the geometric rank \cite{DBLP:conf/coco/KoppartyMZ20, geng2021geometry} and G-stable rank \cite{derksen2020gstable}. Some important applications of these methods include new bounds on cap sets \cite{tao,MR3583358}, new bounds on the sunflower problem~\cite{MR3668469}, determining the border subrank of matrix multiplication \cite{DBLP:conf/coco/KoppartyMZ20}, and proving matrix multiplication barriers \cite{DBLP:journals/corr/BlasiakCCGU16,DBLP:conf/coco/Alman19,DBLP:conf/coco/ChristandlVZ19}. Many strong connections have been shown among these parameters. In particular, Derksen \cite{derksen2020gstable} showed that the G-stable rank is equal to the slice rank up to a constant factor, and Cohen--Moshkovitz \cite{DBLP:conf/stoc/CohenM21,cohen2021partition} showed (over large fields) that the partition rank, analytic rank and geometric rank are equal up to a constant factor, the culmination of a long line of work on this topic \cite{DBLP:journals/cdm/GreenT09,DBLP:conf/focs/KaufmanL08,bhowmick2015bias,janzer2018low,janzer2019polynomial,milicevic2019polynomial}.
All of the aforementioned tensor parameters are lower bounded by the subrank.\footnote{The analytic rank requires a natural normalization for this to be true, that is, one should normalize it by the analytic rank of a full-dimensional diagonal tensor.} Our results provide a large separation on almost all tensors between the subrank and all other aforementioned parameters, as the generic subrank satisfies $\gensubrank(n) = \theta(\sqrt{n})$ whereas the generic value of all other parameters is the maximal value~$n$.

The study of ``generic'' or ``typical'' complexity in algebraic complexity theory goes back to at least Strassen's paper ``Rank and optimal computation of generic tensors'' \cite{MR709378}, in which he determines the tensor rank of almost all tensors (i.e.~generic tensors). His result is that the rank $\trk(n_1,n_2,n_3)$ of almost all tensors in $K^{n_1,n_2,n_3}$ grows as $n_1 n_2 n_3/ (n_1 + n_2 + n_3 - 2)$ and a description is given of ``perfect shapes'' $(n_1, n_2, n_3)$ for which precisely equality $\trk(n_1,n_2,n_3) = n_1 n_2 n_3/ (n_1 + n_2 + n_3 - 2)$ holds. (It is still a fundamental open problem to find explicit tensors in~$K^{n,n,n}$ with rank close to $n^2$ \cite{blaser2014explicit} with important implications to formula lower bounds \cite{raz2013tensor}.) For the shape $(n,n,3)$ with $n$ odd this work provides an equation that determines whether a tensor has typical rank or not. These equations lead to several later generalizations \cite{MR3081636,koiran2020tensor} and subsequently precise barrier results for ``rank methods'' \cite{efremenko2017barriers,DBLP:conf/focs/GargMOW19}. Our result rather than generic rank determines the generic subrank. The rank of a tensor being the smallest number of scalar multiplications that the tensor reduces to, the subrank is a natural dual to the tensor rank. We note that combining the results on generic subrank and generic rank, we see that reducing scalar multiplications to a generic tensor and reducing the generic tensor back to scalar multiplications necessarily induces a great loss, as $\gensubrank(n,n,n) = \theta(\sqrt{n})$ is much smaller than $\trk(n,n,n) = \theta(n^2)$.

The study of additivity results is a central theme in complexity theory and mathematics. In algebraic complexity, it has been long known (and crucial in the design of matrix multiplication algorithms) that the border rank of tensors (the approximative version of tensor rank) is not additive under the direct sum \cite{schonhage1981partial} (see also \cite[Lemma~7.1]{blaser2013fast} and \cite[(15.12)]{burgisser1997algebraic}). Strassen conjectured the tensor rank to be additive under the direct sum, but this was disproved recently by Shitov \cite{MR3974478}. On the other hand, the analytic rank \cite{lovett2019analytic}, geometric rank \cite{DBLP:conf/coco/KoppartyMZ20}, G-stable rank \cite{derksen2020gstable} and slice rank \cite{gowers2021slice} were shown to be additive recently. The subrank, as we show, is however not additive. Our proof of this relies on writing a tensor as a sum of two generic tensors, which is reminiscent of methods that Razborov \cite{razborov1992submodular} uses to prove a linear upper bound on submodular complexity measures of boolean functions (see also \cite{DBLP:journals/eccc/RobereZ21}).

\subsection{Paper Organization}

In \autoref{sec:upper-bounds} we prove the upper bound theorems \autoref{thm:subrank-upper-3-tensor} and \autoref{thm:subrank-upper-k-tensor}. In \autoref{sec:lower-bounds} we prove the characterization of generic subrank that allows us to reduce the problem of determining its value to the tensor subspace decomposition problem. In \autoref{sec:constructions} we solve the tensor subspace decomposition problem thus completing the lower bound proof. 
In \autoref{sec:not-additive} we use our upper bounds to prove that the subrank is not additive under direct sum.
In \autoref{sec:open-problems} we discuss several natural related open problems.

\newpage

\section{Upper bounds on generic subrank}\label{sec:upper-bounds}
In this section, we prove our upper bounds on the generic subrank of tensors. We give a detailed proof in the case of $3$-tensors, and then generalize to tensors of all orders.

\subsection{Tensors of order three}

The techniques we use are familiar in invariant theory and representation theory. The main idea is to take advantage of the fact that subrank is invariant under a large group of symmetries, namely change of basis on each tensor leg. Further, this group of symmetries has excellent algebraic properties which can often be leveraged to remarkable effect.

First and foremost, we have to argue that generic subrank is a valid notion, which we do in the following proposition. The proof is a standard argument which we will discuss because it naturally uses some ingredients that we will use later on.
\begin{proposition}\label{prop:gen-subrank-exists}
For every $n$ there is a non-empty Zariski-open subset $U \subseteq K^{n,n,n}$ and integer~$r$ such that for all $T \in U$ we have $\subrank(T) = r$.
\end{proposition}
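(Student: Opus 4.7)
The plan is to stratify $K^{n,n,n}$ by the value of the subrank and show that the stratum of highest dimension contains a non-empty Zariski-open set.

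For each nonnegative integer $s$, set $\mathcal{C}_s := \{T \in K^{n,n,n} : \subrank(T) \geq s\}$. The first step is to observe that each $\mathcal{C}_s$ is a constructible subset of $K^{n,n,n}$. Indeed, $\mathcal{C}_s$ is precisely the image of the morphism $\psi_s : \GL_n \times \GL_n \times \GL_n \times X_s \to K^{n,n,n}$, $(A,B,C,T) \mapsto (A \otimes B \otimes C) T$, where $X_s$ is the locally closed set of tensors whose top-left $[s] \times [s] \times [s]$ subtensor is diagonal with nonzero entries on the diagonal and whose other entries are arbitrary (this map is used later in the paper, so I can just invoke it). Since the domain is a constructible set and $\psi_s$ is a morphism of algebraic varieties, Chevalley's theorem guarantees that $\mathcal{C}_s$ is constructible, hence its Zariski closure $\overline{\mathcal{C}_s}$ is an algebraic variety and $\mathcal{C}_s$ contains a non-empty Zariski-open subset of $\overline{\mathcal{C}_s}$.

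The second step is to select the right $r$. Since $\subrank(T) \leq n$ for every $T \in K^{n,n,n}$, the chain $\mathcal{C}_0 \supseteq \mathcal{C}_1 \supseteq \cdots \supseteq \mathcal{C}_{n+1} = \emptyset$ is finite. Let $r$ be the largest integer for which $\overline{\mathcal{C}_r} = K^{n,n,n}$; such an $r$ exists because $\mathcal{C}_0 = K^{n,n,n}$ and $\mathcal{C}_{n+1} = \emptyset$.

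For this value of $r$, by constructibility the set $\mathcal{C}_r$ itself contains a non-empty Zariski-open subset $U_1 \subseteq K^{n,n,n}$, while by maximality of $r$ the closure $\overline{\mathcal{C}_{r+1}}$ is a proper closed subset of $K^{n,n,n}$. Define
\[
U := U_1 \setminus \overline{\mathcal{C}_{r+1}}.
\]
Then $U$ is a non-empty Zariski-open subset of $K^{n,n,n}$, every $T \in U$ satisfies $\subrank(T) \geq r$ (because $T \in U_1 \subseteq \mathcal{C}_r$) and $\subrank(T) < r+1$ (because $T \notin \mathcal{C}_{r+1}$), and therefore $\subrank(T) = r$ for all $T \in U$, as required.

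The only non-routine step is the constructibility of $\mathcal{C}_s$, but once the parametrization $\psi_s$ has been identified this is an immediate application of Chevalley's theorem; the rest is bookkeeping.
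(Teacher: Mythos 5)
Your proof is correct and follows essentially the same route as the paper: both express $\mathcal{C}_s$ as the image of the parametrization $\psi_s$, invoke Chevalley's theorem for constructibility, and use irreducibility of $K^{n,n,n}$ to locate a full-dimensional stratum containing a non-empty Zariski-open set. The only cosmetic difference is that you index by the largest $r$ with $\overline{\mathcal{C}_r}$ full and then excise $\overline{\mathcal{C}_{r+1}}$, whereas the paper works directly with the level sets $P_i = \mathcal{C}_i \setminus \mathcal{C}_{i+1}$; these are equivalent formulations of the same argument.
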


For any $n$, the number $r$ given by \autoref{prop:gen-subrank-exists} is unique since any two non-empty Zariski-open subsets $U_1, U_2 \subseteq K^{n,n,n}$ must intersect ($K^{n,n,n}$ is irreducible). This number $r$ we call the generic subrank $\subrank(n)$. Similarly we define the generic subrank $\subrank(n_1, \ldots, n_k)$ of $K^{n_1, \ldots, n_k}$.

We discuss a couple of preparatory results before giving the proof of \autoref{prop:gen-subrank-exists}. We define $X_r$ to be the set of tensors in $K^{n,n,n}$ whose $[r]\times[r]\times[r]$ subtensor is zero except for the diagonal entries in $[r]\times [r]\times [r]$ which are all nonzero,
\[
X_r = \left\{T \in K^{n,n,n}\ | \ T_{ijk} = 0 \text{ for } (i,j,k) \in [r]^3 \setminus \{(i,i,i)\ |\ i \in [r]\} \text{ and } T_{i,i,i} \neq 0 \text{ for } i \in [r] \right\}.
\]
We let $\psi_r$ be the map that applies basis transformations to elements of $X_r$,
\begin{align*}
\psi_r: \GL_n \times \GL_n \times \GL_n \times X_r &\longrightarrow K^{n,n,n}\\
(A,B,C,T) & \longmapsto (A \otimes B \otimes C)T.
\end{align*}
We define $\mathcal{C}_r$ to be the set of tensors in $K^{n,n,n}$ whose subrank is at least $r$,
\[
\mathcal{C}_r = \{T \in K^{n,n,n}\ | \ \subrank(T) \geq r\}.
\]

\begin{lemma} \label{lem:Cr-as-image}
The image of $\psi_r$ is precisely $\mathcal{C}_r$.
\end{lemma}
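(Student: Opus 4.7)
The plan is to prove the two inclusions $\text{image}(\psi_r) \subseteq \mathcal{C}_r$ and $\mathcal{C}_r \subseteq \text{image}(\psi_r)$ separately, essentially arguing that $X_r$ is a canonical set of ``normal forms'' for subrank-$r$ tensors under the $\GL_n^3$ action.

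For the forward inclusion, I would take an arbitrary $T \in X_r$ and first observe that the restriction to its $[r]^3$ subtensor is diagonal with nonzero diagonal entries $T_{iii}$. Rescaling each diagonal entry by multiplying in the three directions by a diagonal matrix with entries $T_{iii}^{-1/3}$ (or equivalently, using three diagonal matrices whose product of entries at position $i$ equals $T_{iii}^{-1}$) produces the tensor $I_r$ from this subtensor. Concretely, taking the matrices $A_0, B_0, C_0 \in \Mat_{r,n}$ whose rows are scaled copies of the first $r$ standard basis vectors of $K^n$ realizes the restriction $I_r \leq T$, so $\subrank(T) \geq r$. Since subrank is invariant under the action of $\GL_n \times \GL_n \times \GL_n$ (as the maps $A,B,C$ in the definition of restriction can be composed with any invertible maps at no cost, with the composition still invertible to witness the reverse reduction), it follows that $\subrank((A \otimes B \otimes C) T) \geq r$ for every $(A,B,C) \in \GL_n^3$ and $T \in X_r$, giving $\text{image}(\psi_r) \subseteq \mathcal{C}_r$.

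For the reverse inclusion, suppose $S \in \mathcal{C}_r$, so $I_r \leq S$. By definition there exist matrices $A_0, B_0, C_0 \in \Mat_{r,n}$ such that $(I_r)_{ijk} = \sum_{a,b,c} (A_0)_{ia} (B_0)_{jb} (C_0)_{kc} S_{abc}$. The key observation is that $A_0, B_0, C_0$ must each have rank $r$, since otherwise the flattenings of the tensor $(A_0 \otimes B_0 \otimes C_0) S = I_r$ could not have rank $r$, contradicting that $I_r$ has full flattening rank. I would therefore extend each of $A_0, B_0, C_0$ to invertible $n \times n$ matrices $A^{-1}, B^{-1}, C^{-1} \in \GL_n$ by appending $n-r$ additional rows chosen to make them nonsingular (possible since the first $r$ rows are linearly independent), and then set $T = (A^{-1} \otimes B^{-1} \otimes C^{-1}) S$. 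The top $[r]^3$ subtensor of $T$ is then exactly $(A_0 \otimes B_0 \otimes C_0) S = I_r$, which is diagonal with nonzero diagonal entries, so $T \in X_r$. Applying $(A \otimes B \otimes C)$ to both sides recovers $S = \psi_r(A, B, C, T)$, showing $S \in \text{image}(\psi_r)$.

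I expect no genuine obstacle here: the main point requiring care is the rank argument used to extend $A_0, B_0, C_0$ to elements of $\GL_n$, and the bookkeeping between the restriction convention (matrices acting with the smaller index as their row count) and the $\psi_r$ convention (matrices in $\GL_n$ acting on tensors of the ambient dimension). Both are routine once one notes that the flattening rank of $I_r$ forces full rank of the reduction matrices.
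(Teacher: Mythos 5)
Your proposal is correct and follows essentially the same route as the paper: establish $X_r \subseteq \mathcal{C}_r$ by rescaling with diagonal matrices and use $\GL_n^3$-invariance for one inclusion, and for the other extend the rank-$r$ restriction matrices to elements of $\GL_n$ (the paper, like you, invokes the fact that these matrices must have rank $r$, for which your flattening-rank justification is the standard reason). The only cosmetic difference is your labeling of the extended matrices as $A^{-1},B^{-1},C^{-1}$ versus the paper's $\widetilde{A},\widetilde{B},\widetilde{C}$; the mathematics is the same.
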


\begin{proof}
To prove $\im(\psi_r) = \mathcal{C}_r$ we show both inclusions.

First we will prove $\im(\psi_r) \subseteq \mathcal{C}_r$. As a first step we will prove that $X_r \subseteq \mathcal{C}_r$.
Let $T \in X_r$. Let $T_{i,i,i} = \lambda_i \neq 0$. Let 
\[
A = \begin{pmatrix} {\rm Id}_r & 0 \\ 0& 0 \end{pmatrix}
\]
where ${\rm Id}_r$ denotes the identity matrix of size $r \times r$ and let 
\[
B = \begin{pmatrix} D & 0 \\ 0 & 0\end{pmatrix}
\]
where $D$ is a diagonal matrix of size $r \times r$ whose diagonal entries are $\lambda_1^{-1},\lambda_2^{-1},\dots, \lambda_r^{-1}$. It is easy to check that $(A \otimes A \otimes B) \cdot T = I_r$. Thus $X_r \subseteq \mathcal{C}_r$. Since subrank is invariant under the action of $\GL_n \times \GL_n \times \GL_n$, we see that $\mathcal{C}_r$ is $\GL_n \times \GL_n \times \GL_n$ invariant, so we deduce that $\im(\psi_r) = (\GL_n \times \GL_n \times \GL_n) \cdot X_r \subseteq \mathcal{C}_r.$
 
Now we will prove $\mathcal{C}_r \subseteq \im(\psi_r)$. Let $T \in \mathcal{C}_r$. Then, there exist $A,B,C \in \Mat_{r,n}$ such that $(A \otimes B \otimes C) \cdot T = I_r$. Let \[
\widetilde{A} = \begin{pmatrix} A \\ \ast \end{pmatrix}
\]
be a completion of $A$ to a full rank $n \times n$ matrix, and similarly define $\widetilde{B}$ and $\widetilde{C}$. This is possible because $A,B,C$ must all have rank $r$. Then $(\widetilde{A} \otimes \widetilde{B} \otimes \widetilde{C})\cdot T \in X_r$, which implies that $T \in \im(\psi_r)$. Thus $\mathcal{C}_r \subseteq \im(\psi_r)$
\end{proof}

We now give the proof of the existence of the generic subrank (\autoref{prop:gen-subrank-exists}). The proof is standard and may safely be skipped.

\begin{proof}[{Proof of \autoref{prop:gen-subrank-exists}}]
The map $\subrank$ attains only finitely many values on $V = K^{n,n,n}$, namely $\{0,1,\ldots, n\}$. Thus it has finitely many fibers $P_i = \subrank^{-1}(i) \subseteq V$. Each $P_i$ is a constructible set, since $P_i = \mathcal{C}_i \setminus \mathcal{C}_{i+1}$ and $\mathcal{C}_i$ is constructible as a consequence of \autoref{lem:Cr-as-image}. Then $\bigcup_{i=0}^n P_i = V$ and so $\bigcup_{i=0}^n \overline{P_i} = V$. However, $V$ is irreducible so there must be an $i$ such that $\overline{P_i} = V$. Since~$P_i$ is constructible it contains a subset $U \subseteq P_i$ that is non-empty and Zariski-open in $\overline{P_i} = V$ (this is a general fact, see e.g.~\cite{MR1102012}). This $U$ and $i$ satisfy the claim.
\end{proof}

Now that we have established that the generic subrank exists, we continue to prove our upper bound on it.
The following simple lemma is straightforward, but crucial:
\begin{lemma} \label{lem:dimC-subrank}
The generic subrank $\gensubrank(n) = \max \{r\ |\ \dim(\mathcal{C}_r) = n^3\}$.
\end{lemma}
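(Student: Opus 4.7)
The plan is to translate the claim into a statement about Zariski closures. Let $t := \gensubrank(n)$ and denote $V = K^{n,n,n}$, which is irreducible of dimension $n^3$. I will show that $\dim(\mathcal{C}_r) = n^3$ if and only if $r \le t$, from which the lemma follows immediately.

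First I would establish the geometric setup. By \autoref{lem:Cr-as-image}, $\mathcal{C}_r$ is the image of the morphism $\psi_r$ between algebraic varieties, so by Chevalley's theorem $\mathcal{C}_r$ is a constructible subset of $V$. Consequently $\dim(\mathcal{C}_r) = \dim(\overline{\mathcal{C}_r})$, and since $V$ is irreducible of dimension $n^3$, the equality $\dim(\mathcal{C}_r) = n^3$ holds exactly when $\overline{\mathcal{C}_r} = V$, which (again using constructibility) is equivalent to $\mathcal{C}_r$ containing a non-empty Zariski-open subset of $V$.

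For the ``$\le t$'' direction, apply \autoref{prop:gen-subrank-exists} to obtain a non-empty Zariski-open $U \subseteq V$ on which $\subrank$ takes the constant value $t$. For every $r \le t$ and every $T \in U$ we have $\subrank(T) = t \ge r$, so $U \subseteq \mathcal{C}_r$; thus $\mathcal{C}_r$ contains a non-empty open set of $V$, giving $\dim(\mathcal{C}_r) = n^3$. For the strict converse, note that for $r = t+1$ and every $T \in U$ we have $\subrank(T) = t < r$, so $U \cap \mathcal{C}_{t+1} = \emptyset$. Hence $\mathcal{C}_{t+1} \subseteq V \setminus U$, which is a proper Zariski-closed subset of the irreducible variety $V$, so $\dim(\mathcal{C}_{t+1}) < n^3$. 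Since the sets $\mathcal{C}_r$ are nested decreasingly in $r$, the same inequality holds for every $r > t$.

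Combining these two directions yields $\{r : \dim(\mathcal{C}_r) = n^3\} = \{0, 1, \dots, t\}$, and therefore $t = \max\{r : \dim(\mathcal{C}_r) = n^3\}$, as required. The only nontrivial ingredient is the constructibility of $\mathcal{C}_r$, which is immediate from \autoref{lem:Cr-as-image} together with Chevalley's theorem; the remainder is a clean reformulation of ``generic'' in terms of Zariski density, so I do not anticipate a substantive obstacle.
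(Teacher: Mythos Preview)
Your proof is correct and follows essentially the same approach as the paper: both show that $\dim(\mathcal{C}_r)=n^3$ if and only if $r\leq \gensubrank(n)$ by using that the generic locus (your $U$, the paper's $P_{\gensubrank(n)}$) is Zariski-dense and is contained in $\mathcal{C}_r$ precisely when $r\leq \gensubrank(n)$. Your version is just more explicit about constructibility via Chevalley, while the paper compresses this into a one-line argument using the fibers $P_i$.
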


\begin{proof}
As above, let $P_i$ denote the subset of tensors with subrank $i$. Then, by definition of $\gensubrank(n)$ and the fact that $\mathcal{C}_r = \bigsqcup_{i =r}^n P_i$, we deduce that $\dim(\mathcal{C}_r) = n^3$ if and only if $\mathcal{C}_r \supseteq P_{\gensubrank(n)}$ if and only if $r \leq \gensubrank(n)$.
\end{proof}

\begin{proposition} \label{prop:dim-upper}
The following is an upper bound for the dimension of $\mathcal{C}_r$:
$$
\dim(\mathcal{C}_r) \leq 3n^2 + (n^3-r^3+r) - 3(n(n-r) + r) = n^3 - r(r^2-3n+2).
$$
\end{proposition}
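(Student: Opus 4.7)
The plan is to compute $\dim(\mathcal{C}_r)$ via the fiber dimension theorem applied to the surjection $\psi_r: \GL_n \times \GL_n \times \GL_n \times X_r \to \mathcal{C}_r$ from \autoref{lem:Cr-as-image}. The domain is irreducible of dimension $3n^2 + (n^3 - r^3 + r)$, since $X_r$ is a non-empty Zariski-open subset of the linear subspace of $K^{n,n,n}$ cut out by forcing the $r^3 - r$ off-diagonal coordinates supported on $[r]^3$ to vanish. Hence
\[
\dim(\mathcal{C}_r) \;=\; \dim(\GL_n \times \GL_n \times \GL_n \times X_r) - \dim(\text{generic fiber of } \psi_r),
\]
so it suffices to show that every fiber of $\psi_r$ has dimension at least $3\bigl(n(n-r) + r\bigr)$.

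To exhibit such large fibers I would identify an explicit subgroup $H \subseteq \GL_n \times \GL_n \times \GL_n$ that preserves $X_r$ setwise and acts freely on the domain in such a way that $\psi_r$ is constant on $H$-orbits. Concretely, let $H$ consist of triples $(h_1, h_2, h_3)$ with each
\[
h_i = \begin{pmatrix} D_i & 0 \\ M_i & N_i \end{pmatrix},
\]
where $D_i$ is an invertible $r \times r$ diagonal matrix, $M_i \in \Mat_{n-r, r}$ is arbitrary, and $N_i \in \GL_{n-r}$. A direct block-matrix computation shows that $H$ is closed under multiplication and inversion, and the parameter count yields
\[
\dim H \;=\; 3\bigl(r + (n-r)r + (n-r)^2\bigr) \;=\; 3\bigl(n(n-r) + r\bigr).
\]

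The key geometric property is that $H \cdot X_r \subseteq X_r$. Indeed, the top $r$ rows of each $h_i$ equal $[D_i \mid 0]$, so for any $T \in X_r$ only indices in $[r]$ contribute to the $[r]^3$-part of $hT$, giving
\[
(hT)|_{[r]^3} \;=\; (D_1 \otimes D_2 \otimes D_3) \cdot T|_{[r]^3},
\]
which is a diagonal tensor of $K^{r,r,r}$ with nonzero diagonal entries (since $D_i$ and $T|_{[r]^3}$ are diagonal and invertible / nonzero on the diagonal). Now $H$ acts on $\GL_n \times \GL_n \times \GL_n \times X_r$ by $h \cdot (g, T) = (gh^{-1}, hT)$; this action is free (since $h$ is recovered from $g$ and $gh^{-1}$) and satisfies $\psi_r(h \cdot (g, T)) = (gh^{-1})(hT) = gT = \psi_r(g, T)$. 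Therefore every fiber of $\psi_r$ contains an $H$-orbit of dimension $\dim H$, so in particular the generic fiber has dimension at least $3(n(n-r) + r)$, and combining with the domain dimension gives
\[
\dim(\mathcal{C}_r) \;\leq\; 3n^2 + (n^3 - r^3 + r) - 3\bigl(n(n-r) + r\bigr) \;=\; n^3 - r(r^2 - 3n + 2).
\]

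The main effort is in spotting the right subgroup $H$ and checking that it preserves $X_r$; after that, the dimension count, the freeness of the action, and the appeal to the fiber dimension theorem are routine. A slight bookkeeping point is to note that the lower bound on fiber dimension must hold on every fiber (not merely on one), which is automatic here because $H$ acts on the \emph{entire} domain and each fiber then contains a full $H$-orbit.
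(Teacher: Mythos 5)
Your proof is correct and takes essentially the same approach as the paper: both compute the domain dimension and lower-bound the fiber dimension by $3\bigl(n(n-r)+r\bigr)$ using the same set of block-lower-triangular matrices with diagonal top-left block that preserve $X_r$. The only cosmetic difference is that you package this set explicitly as a group $H$ acting freely on the domain with $\psi_r$ constant on orbits (so every fiber automatically contains a full $H$-orbit), whereas the paper first reduces to fibers over points of $X_r$ via a $\GL$-equivariance/isomorphism-of-fibers argument and then directly exhibits points of $\psi_r^{-1}(S)$ parameterized by $L^3$.
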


\begin{proof} %
Consider the map $\psi$ above. The theorem on dimension of fibres says that 
$$
\dim(\mathcal{C}_r) = \dim(\GL_n \times \GL_n \times \GL_n \times X_r) - \dim(\text{general fiber of } \psi_r).
$$
We see that $\dim(\GL_n \times \GL_n \times \GL_n \times X_r) = 3n^2 + (n^3-r^3 + r)$ because each $\GL_n$ contributes~$n^2$ to the dimension and $X_r$ is a Zariski-open subset of a linear subspace of $K^{n,n,n}$ of dimension $(n^3-r^3 + r)$. Thus, to compute $\dim(\mathcal{C}_r)$, we only need to compute the dimension of the general fiber of $\psi_r$. Note that the dimension of any fiber is at most the dimension of the general fiber, so it suffices to find a lower bound on the dimension of all fibers.

Suppose $T \in \mathcal{C}_r$, then $T = (g_1 \otimes g_2 \otimes g_3) \cdot S$ for some $S \in X_r$ by \autoref{lem:Cr-as-image}. It is easy to see that $\dim(\psi^{-1}(T)) = \dim(\psi^{-1}(S))$ since $\psi^{-1}(T)$ and $\psi^{-1}(S)$ are isomorphic as varieties -- indeed this follows from the observation that $(A,B,C,U) \in \psi^{-1}(T) \Longleftrightarrow (g_1^{-1}A, g_2^{-1}B, g_3^{-1}C, U) \in \psi^{-1}(S)$. Thus, it suffices to lower bound the dimension of $\psi^{-1}(S)$ for $S \in X_r$.

Let $S \in X_r$.
Let $L \subseteq \GL_m$ be the subset of matrices of the form 
\[
\begin{pmatrix} D & 0 \\ \ast & \ast \end{pmatrix}
\]
where $D$ is a diagonal matrix of size $r \times r$. It is easy to see that $\dim(L) = n(n-r) + r$. For $A,B,C \in L$, it is easy to see that $(A \otimes B \otimes C) \cdot S \in X_r$. Thus, $(A^{-1}, B^{-1}, C^{-1}, (A \otimes B \otimes C) \cdot S) \in \psi^{-1}(S)$. In particular, this means that $\dim(\psi^{-1}(S)) \geq 3 \dim(L) = 3(n(n-r) + r)$. Thus, we conclude that $\dim(\text{generic fiber}) \geq 3(n(n-r) + r)$ and so
\[
\dim(\mathcal{C}_r) \leq 3n^2 + (n^3-r^3 + r) - 3(n(n-r) + r) = n^3 - r(r^2-3n+2).\qedhere
\]

\end{proof}

Now, we have everything necessary to prove the upper bound for the subrank of $3$-tensors.
\begin{proof} [Proof of \autoref{thm:subrank-upper-3-tensor}]
Suppose the subrank of a generic tensor in $K^{n,n,n}$ is $t = \gensubrank(n)$. Then we know that $n^3 = \dim(\mathcal{C}_t)$ by Lemma~\ref{lem:dimC-subrank} and $\dim(\mathcal{C}_t) \leq n^3 - t(t^2 - 3n+2)$ by  Proposition~\ref{prop:dim-upper}. Thus, we must have $t^2 - 3n + 2 \leq 0$, so $t \leq \sqrt{3n-2}$. Hence, we have $\gensubrank(n) \leq \sqrt{3n-2}$ as desired. 
\end{proof}

\subsection{Higher-order tensors}
In the general case, where we look at tensors in $K^{n_1,n_2,\dots,n_k}$, we define analogously the objects~$X_r$ and $\mathcal{C}_r$, the map $\psi: \prod_{i=1}^k \GL_{n_i} \times X_r \rightarrow \mathcal{C}_r$, etc. 

\begin{proof} [Proof of \autoref{thm:subrank-upper-k-tensor}]
We obtain analogously that 
$$
\dim(\mathcal{C}_r) = \Bigl(\sum_{i=1}^k n_i^2\Bigr) + \Bigl(\prod_{i=1}^k n_i - r^k + r\Bigr) - \sum_{i=1}^k (n_i(n_i - r) + r) = \prod_{i=1}^k n_i - r\Bigl(r^{k-1} - \sum_{i=1}^k n_i + (k-1)\Bigr).
$$
Suppose the subrank of a generic tensor in $K^{n_1,n_2,\dots,n_k}$ is $t = \gensubrank(n_1,\ldots,n_k)$. Then, we have
$$
\prod_{i=1}^k n_i = \dim(\mathcal{C}_t) \leq \prod_{i=1}^k n_i - t\Bigl(t^{k-1} - \sum_{i=1}^k n_i + (k-1)\Bigr),
$$
so we get that $t^{k-1} - \sum_{i=1}^k n_i + (k-1) \leq 0$, so that 
\[
\gensubrank(n) = t \leq \Bigl(\sum_{\smash{i=1}}^k n_i - (k-1)\Bigr)^{\frac{1}{k-1}}
\]
as desired.
\end{proof}

\section{Lower bounds on generic subrank}\label{sec:lower-bounds}

In this section, we describe the technique we use to show lower bounds on generic subrank. In order to make this technique effective, we will need some explicit constructions of linear subspaces which we postpone to the next section. 

First, we introduce some notation. We identify $\Mat_{n_2,n_3}$ with $K^{n_2} \otimes K^{n_3}$ in the standard way. For a linear subspace $\mathcal{X} \subseteq \Mat_{n_2,n_3}$, we define a linear subspace
$$
\mathcal{X}[1] = K^{n_1} \otimes \mathcal{X} \subseteq K^{n_1,n_2,n_3}.
$$
The linear subspace $\mathcal{X}[1]$ consists precisely of the tensors whose slices in the first direction (i.e., matrices $(T_{1jk})_{j,k},(T_{2jk})_{j,k},\dots,(T_{njk})_{j,k}$) are in $\mathcal{X}$. Similarly, we define $\mathcal{X}[2]$ (resp. $\mathcal{X}[3]$)  as the set of tensors whose slices in the second (resp. third) direction are in $\mathcal{X}$.

The main idea behind proving our lower bounds is the following result:

\begin{theorem} \label{thm:technical-lower-bound}
Let $r \leq n_1,n_2,n_3$ such that $n_i - r \leq r^2$.\footnote{Without this assumption it does not mean anything to have a generic subspace of dimension $n_i - r$. Moreover, if $n_i - r  > r^2$, then it is easy to see that $\gensubrank(n_1,n_2,n_3) \geq r$.} Let $\mathcal{X}_i \subseteq \Mat_{r,r}$ be a generic subspace of dimension $n_i - r$ for $i = 1,2,3$. Suppose $\mathcal{X}_1[1] + \mathcal{X}_2[2] + \mathcal{X}_3[3] = K^{r,r,r}$, then
$$
\gensubrank(n_1,n_2,n_3) \geq r.
$$
\end{theorem}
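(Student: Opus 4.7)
The plan is to show that the parametrization $\psi_r : \prod_{i=1}^{3} \GL_{n_i} \times X_r \to K^{n_1, n_2, n_3}$ (whose image equals $\mathcal{C}_r$ by the shape-agnostic version of \autoref{lem:Cr-as-image}) is dominant; the shape-agnostic version of \autoref{lem:dimC-subrank} then forces $\gensubrank(n_1, n_2, n_3) \geq r$. Since the source of $\psi_r$ is smooth and irreducible, dominance reduces to exhibiting a single point at which the differential $d\psi_r$ is surjective onto $K^{n_1, n_2, n_3}$.

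First, I would compute $d\psi_r$ at $(I, I, I, T)$ for a yet-to-be-chosen $T \in X_r$. The Leibniz rule applied to $(g_1, g_2, g_3, S) \mapsto (g_1 \otimes g_2 \otimes g_3) S$ gives
\[
(d\psi_r)_{(I,I,I,T)}(A, B, C, S) = (A \otimes I \otimes I + I \otimes B \otimes I + I \otimes I \otimes C) T + S,
\]
where $(A, B, C) \in \prod_i \Mat_{n_i, n_i}$ and $S$ ranges over the tangent space $Y_r$ of $X_r$, consisting of tensors whose $[r]^3$-subblock is diagonal but which are otherwise unconstrained. The $S$-term alone surjects onto the ``outside-$[r]^3$'' part of $K^{n_1, n_2, n_3}$, so it suffices to show that the projection of the image onto the $[r]^3$-subblock equals all of $K^{r,r,r}$.

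Next I would unpack each Kronecker term restricted to $[r]^3$. A direct index calculation (splitting each summation according to whether the dummy index lies in $[r]$ or not, and using $T_{aaa} = \lambda_a \neq 0$ together with $T_{ajk} = 0$ on the rest of $[r]^3$) shows that the $[r]^3$-part of $(A \otimes I \otimes I) T$ has two contributions: a piece supported on $\{j = k\}$ whose slice-$i$ value is the diagonal matrix $(\lambda_j A_{ij})_j$, and a piece $\sum_{a > r} A_{ia} M^{(1)}_a$, where $M^{(1)}_a := (T_{a,j,k})_{j,k \in [r]} \in \Mat_{r,r}$. Crucially, the three families of free parameters $M^{(1)}_a$, $M^{(2)}_b$, $M^{(3)}_c$ (for $a, b, c > r$) occupy pairwise disjoint ``slabs'' of $T$ outside $[r]^3$ (each slab has exactly one of the three indices exceeding $r$), so their spans $\mathcal{X}_i := \spa(M^{(i)}_{r+1}, \ldots, M^{(i)}_{n_i}) \subseteq \Mat_{r,r}$ can be chosen independently to be any prescribed generic $(n_i - r)$-dimensional subspaces, using the hypothesis $n_i - r \leq r^2$. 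Letting $D \subseteq \Mat_{r,r}$ denote the space of diagonal matrices, the $[r]^3$-projection of the $(A, B, C)$-part of the image equals
\[
(D + \mathcal{X}_1)[1] + (D + \mathcal{X}_2)[2] + (D + \mathcal{X}_3)[3] = W_r + \mathcal{X}_1[1] + \mathcal{X}_2[2] + \mathcal{X}_3[3],
\]
where $W_r := D[1] + D[2] + D[3]$ is the subspace of $K^{r,r,r}$ of tensors vanishing on all-distinct indices.

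Finally, the theorem's hypothesis $\mathcal{X}_1[1] + \mathcal{X}_2[2] + \mathcal{X}_3[3] = K^{r,r,r}$ for generic $(n_i - r)$-dimensional $\mathcal{X}_i$ forces the right-hand side above to equal $K^{r,r,r}$, so the $[r]^3$-projection of the image is surjective; combined with the $S$-term handling the complement of $[r]^3$, this yields surjectivity of $d\psi_r$ and hence dominance of $\psi_r$. I expect the main technical obstacle to be the bookkeeping in the differential computation---tracking exactly which summation indices feed the ``diagonal'' piece versus the $\mathcal{X}_i$ piece, and verifying that a single $T \in X_r$ can simultaneously realize the three prescribed generic $\mathcal{X}_i$'s---though the latter reduces, as noted, to the fact that the three slabs of free parameters in $T$ are pairwise disjoint and hence independent.
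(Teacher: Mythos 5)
Your proposal is correct and follows essentially the same route as the paper: compute the differential of $\psi_r$ at $(I,I,I,T)$, use the tangent space $Y_r$ to cover the complement of the $[r]^3$-block, and observe that the $[r]^3$-restrictions of the outer slices (your $M^{(i)}_a$, the paper's $\widehat{L_a}, \widehat{M_b}, \widehat{N_c}$) span independent generic subspaces $\mathcal{X}_i$ of $\Mat_{r,r}$. The one small difference is that you explicitly track the diagonal contribution $W_r$ from the $a\in[r]$ terms, which is harmless here but redundant for this theorem---since the hypothesis already gives $\mathcal{X}_1[1]+\mathcal{X}_2[2]+\mathcal{X}_3[3]=K^{r,r,r}$---and is precisely the extra ingredient that the paper exploits in the stronger \autoref{thm:technical-lower-bound-strong}.
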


In fact, we can prove a stronger version of the above theorem, which we state as \autoref{thm:technical-lower-bound-strong}. However, the downside of this stronger version is that it has a hypothesis that is more difficult to work with. 

To prove that $r$ is a lower bound for the generic subrank, we need to show that the image of~$\psi_r$ has dimension $n^3$, that is, the image is Zariski-dense in $K^{n,n,n}$. The dimension of the image of a map can be captured by the rank of the differential at a generic point -- an idea that is familiar to differential geometers and algebraic geometers alike. In arbitrary characteristic, we know that the rank of the differential at a generic point is a lower bound for the dimension of the image, which is sufficient for proving lower bounds. In characteristic $0$, the image of the rank of the differential at a generic point is equal to the dimension of the image. Consequently, we are able to obtain an exact linear algebraic characterization for generic subrank in \autoref{thm:technical-lower-bound-strong}.

Recall that we defined the map 
$$
\psi_r : \GL_{n_1} \times \GL_{n_2} \times \GL_{n_3} \times X_r \rightarrow K^{n_1,n_2,n_3},
$$
where 
$$
X_r = \left\{T \in K^{n_1,n_2,n_3} \ | \ T_{ijk} = 0 \text{ for } (i,j,k) \in [r]^3 \setminus \{(i,i,i)\ |\ i \in [r]\} \text{ and } T_{i,i,i} \neq 0 \text{ for } i \in [r] \right\}.
$$
Observe that 
\[
Y_r = \left\{T \in K^{n_1,n_2,n_3} \ | \ T_{ijk} = 0 \text{ for } (i,j,k) \in [r]^3 \setminus \{(i,i,i)\ |\ i \in [r] \} \right\}
\]is the tangent space of $X_r$. The differential at a point $(g_1,g_2,g_3,T)$ is
$$
 (d\psi_r)_{(g_1,g_2,g_3,T)} : \Mat_{n_1,n_1} \times \Mat_{n_2,n_2} \times \Mat_{n_3,n_3} \times Y_r  \longrightarrow  K^{n_1,n_2,n_3} 
$$
given by 
$$
(A,B,C,S)  \longmapsto  ((A \otimes g_2 \otimes g_3) + (g_1 \otimes B \otimes g_3) + (g_1 \otimes g_2 \otimes C))T + (g_1 \otimes g_2 \otimes g_3)S.
$$

\begin{lemma}
If for generic $(g_1,g_2,g_3,T) \in \GL_{n_1} \times \GL_{n_2} \times \GL_{n_3} \times X_r$ we have
\[
\im((d\psi_r)_{(g_1,g_2,g_3,T)}) = K^{n_1,n_2,n_3},
\]
then $\gensubrank(n_1,n_2,n_3) \geq r$. The converse holds if the characteristic of $K$ is $0$.
\end{lemma}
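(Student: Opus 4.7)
The plan is to combine three ingredients: the identification $\im(\psi_r) = \mathcal{C}_r$ from \autoref{lem:Cr-as-image}, the multi-index analogue of \autoref{lem:dimC-subrank} (which gives $\gensubrank(n_1,n_2,n_3) \geq r$ if and only if $\dim \mathcal{C}_r = n_1 n_2 n_3$), and the standard algebro-geometric fact that for a morphism $\phi \colon U \to V$ of irreducible varieties with $U$ smooth, the dimension of $\im(\phi)$ is at least the rank of $d\phi$ at any point, with equality attained at a generic point whenever $K$ has characteristic zero.

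For the forward implication (arbitrary characteristic), I would first note that the source $\GL_{n_1} \times \GL_{n_2} \times \GL_{n_3} \times X_r$ is a smooth irreducible variety, since $X_r$ is a nonempty Zariski-open subset of a linear subspace of $K^{n_1,n_2,n_3}$. If $\im((d\psi_r)_{(g_1,g_2,g_3,T)}) = K^{n_1,n_2,n_3}$ at a generic point, then $d\psi_r$ has maximal rank $n_1 n_2 n_3$ there, and the fact above gives $\dim \im(\psi_r) \geq n_1 n_2 n_3$. Since $\im(\psi_r) \subseteq K^{n_1,n_2,n_3}$ and the ambient space is irreducible of dimension $n_1 n_2 n_3$, the image is Zariski-dense; so $\dim \mathcal{C}_r = n_1 n_2 n_3$ by \autoref{lem:Cr-as-image}, and $\gensubrank(n_1,n_2,n_3) \geq r$ by the characterization above.

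For the converse (characteristic zero), I would reverse the argument: assuming $\gensubrank(n_1,n_2,n_3) \geq r$, the same characterization gives $\dim \mathcal{C}_r = n_1 n_2 n_3$, so $\psi_r$ is a dominant morphism from a smooth irreducible variety into $K^{n_1,n_2,n_3}$. Generic smoothness in characteristic zero ensures that $(d\psi_r)_{(g_1,g_2,g_3,T)}$ is surjective onto the tangent space of the image at a generic source point, and since that image is dense this tangent space is all of $K^{n_1,n_2,n_3}$.

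The main subtle point I anticipate is the characteristic hypothesis in the converse: in positive characteristic, dominant morphisms can be generically inseparable (the Frobenius being the archetype), so the differential need not be surjective anywhere even when the morphism is dominant. The forward implication has no such issue, since the inequality $\dim \im(\psi_r) \geq n_1 n_2 n_3$ is valid whenever the differential attains that rank at a single point, regardless of characteristic.
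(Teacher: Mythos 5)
Your proof is correct and follows essentially the same route as the paper: both rely on the fact that the rank of $d\psi_r$ at a generic point lower-bounds $\dim\im(\psi_r)$ (with equality in characteristic zero via generic smoothness), combined with \autoref{lem:Cr-as-image} and the dimension characterization of generic subrank. You spell out the smoothness and dominance considerations a bit more explicitly than the paper, but the argument is the same.
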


\begin{proof}
If the rank of the differential $d\psi_r$ at a generic point is full, then the image of $\psi_r$ must be full dimensional, that is, Zariski-dense (and constructible). Every tensor in the image of $\psi_r$ has subrank $\geq r$. Hence, the generic subrank must be at least $r$.

Assume now that characteristic of $K$ is zero. If the rank of $d \psi_r$ at a generic point is not full, then the image of $\psi_r$ is not full dimensional, that is, the set of tensors having subrank $\geq r$ is not Zariski-dense, so we get $\gensubrank(n_1,n_2,n_3) < r$, thereby proving the converse.
\end{proof}

We use the following equivariance property of the differential.

\begin{lemma}
We have for all $g_i \in \GL_{n_i}$ and $T \in X_r$ that
\[
\im( (d\psi_r)_{(g_1,g_2,g_3,T)}) = (g_1 \otimes g_2 \otimes g_3) (\im( (d\psi_r)_{(I,I,I,T)}).
\]
\end{lemma}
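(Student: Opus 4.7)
The plan is to verify the identity by direct computation, exploiting the multiplicativity of the tensor product $(g \otimes g')(h \otimes h') = gh \otimes g'h'$ and the invertibility of the $g_i$. The statement is purely an equivariance property of the differential, and nothing about the specific structure of $X_r$ or $Y_r$ will be needed beyond the facts that $Y_r$ is a fixed linear subspace (the tangent space at any $T \in X_r$) and that $T$ is held fixed on both sides.

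Concretely, I would start from the explicit formula for $(d\psi_r)_{(g_1,g_2,g_3,T)}$ and factor out $(g_1 \otimes g_2 \otimes g_3)$ from every summand. For the first term, write $A \otimes g_2 \otimes g_3 = (g_1 \otimes g_2 \otimes g_3)(g_1^{-1}A \otimes I \otimes I)$, and do the analogous rewriting for the second and third terms. This yields
\[
(d\psi_r)_{(g_1,g_2,g_3,T)}(A,B,C,S) = (g_1 \otimes g_2 \otimes g_3)\bigl[(g_1^{-1}A \otimes I \otimes I)T + (I \otimes g_2^{-1}B \otimes I)T + (I \otimes I \otimes g_3^{-1}C)T + S\bigr],
\]
and the bracketed expression is precisely $(d\psi_r)_{(I,I,I,T)}(g_1^{-1}A,\, g_2^{-1}B,\, g_3^{-1}C,\, S)$.

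To conclude, I would observe that the map $(A,B,C,S) \mapsto (g_1^{-1}A,\, g_2^{-1}B,\, g_3^{-1}C,\, S)$ is a bijection of $\Mat_{n_1,n_1} \times \Mat_{n_2,n_2} \times \Mat_{n_3,n_3} \times Y_r$ onto itself, since each $g_i$ is invertible and $Y_r$ is fixed. Hence as $(A,B,C,S)$ ranges over the full domain, so does $(g_1^{-1}A, g_2^{-1}B, g_3^{-1}C, S)$, and the two images therefore agree up to the outer factor $(g_1 \otimes g_2 \otimes g_3)$, giving the claimed equality. There is no real obstacle here; the only thing to be careful about is keeping the three terms (one per tensor leg) straight and using that $g_i \in \GL_{n_i}$ rather than merely $\Mat_{n_i,n_i}$, which is exactly what allows the factorization to be a bijective change of variables.
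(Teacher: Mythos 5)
Your proof is correct and follows essentially the same route as the paper's: both proofs hinge on the identity $(d\psi_r)_{(g_1,g_2,g_3,T)}(A,B,C,S) = (g_1\otimes g_2\otimes g_3)\,(d\psi_r)_{(I,I,I,T)}(g_1^{-1}A, g_2^{-1}B, g_3^{-1}C, S)$, obtained by factoring out $g_1\otimes g_2\otimes g_3$ term by term. You merely spell out the final bijective change-of-variables step that the paper leaves implicit.
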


\begin{proof}
We see directly that
\[
(d\psi_r)_{(g_1,g_2,g_3,T)} (A,B,C,S) =  (g_1 \otimes g_2 \otimes g_3) (d\psi_r)_{(I,I,I,T)} (g_1^{-1} A, g_2^{-1}B, g_3^{-1}C, S)
\]
which implies the claim.
\end{proof}

\begin{corollary} \label{cor:im-d-subrank}
If for generic $T \in X_r$ we have $
\im( (d\psi_r)_{(I,I,I,T)}) = K^{n_1,n_2,n_3}$, then we have that $\gensubrank(n_1,n_2,n_3) \geq r$. The converse holds if the characteristic of $K$ is $0$.
\end{corollary}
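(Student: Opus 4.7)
The plan is to deduce this as an immediate consequence of the two preceding lemmas, namely the lemma characterizing $\gensubrank(n_1,n_2,n_3) \geq r$ in terms of surjectivity of $(d\psi_r)_{(g_1,g_2,g_3,T)}$ at a generic point, together with the equivariance lemma. I do not expect any substantive obstacle here; the only real content is verifying that the genericity condition on the quadruple $(g_1,g_2,g_3,T)$ collapses to a genericity condition on $T$ alone, which is forced by equivariance.

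First I would observe that for every $(g_1,g_2,g_3) \in \GL_{n_1} \times \GL_{n_2} \times \GL_{n_3}$, the map $g_1 \otimes g_2 \otimes g_3$ is a linear automorphism of $K^{n_1,n_2,n_3}$. Combined with the equivariance identity
\[
\im((d\psi_r)_{(g_1,g_2,g_3,T)}) = (g_1 \otimes g_2 \otimes g_3)\,\im((d\psi_r)_{(I,I,I,T)}),
\]
this shows that the surjectivity condition $\im((d\psi_r)_{(g_1,g_2,g_3,T)}) = K^{n_1,n_2,n_3}$ is independent of the choice of $(g_1,g_2,g_3)$ and depends only on $T$.

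Consequently, letting $V \subseteq X_r$ denote the set of $T$ for which $(d\psi_r)_{(I,I,I,T)}$ is surjective, the locus in the full domain where $d\psi_r$ is surjective is precisely the product $\GL_{n_1} \times \GL_{n_2} \times \GL_{n_3} \times V$. Since the factors $\GL_{n_i}$ are irreducible and non-empty, such a product is non-empty and Zariski-open in the domain if and only if $V$ itself is non-empty and Zariski-open in $X_r$. In other words, the differential is surjective on a dense open subset of $\GL_{n_1} \times \GL_{n_2} \times \GL_{n_3} \times X_r$ if and only if it is surjective at $(I,I,I,T)$ for $T$ in a dense open subset of $X_r$. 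Substituting this equivalence into the preceding lemma then gives both the forward implication (in arbitrary characteristic) and the converse (in characteristic zero), completing the corollary.
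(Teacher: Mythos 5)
Your argument is correct and matches what the paper implicitly intends: the corollary is stated without proof precisely because it is the combination of the two preceding lemmas that you spell out, with the equivariance lemma reducing genericity over $(g_1,g_2,g_3,T)$ to genericity over $T$ alone. The only small thing to note is that the surjectivity locus $V \subseteq X_r$ is automatically Zariski-open because full rank of the differential is an open condition, so the product-decomposition observation is the whole content.
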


So, let us now analyze more carefully the image of $(d\psi_r)_{(I,I,I,T)}$ for a generic tensor $T$. Below, we write $d$ for $(d\psi_r)_{(I,I,I,T)}$ for notational simplicity.
Thus $d$ is given by
\[
d(A,B,C,S) = ((A \otimes I \otimes I) + (I \otimes B \otimes I) + (I \otimes I \otimes C)) \cdot T + S.
\]
The map $d$ is a linear map, so we see that
\[
\im\  d = \sum_{i=1}^3 d(\Mat_{n_i,n_i}) + d(Y_r)
\]
where we use the notation $d(Y_r) = d(0,0,0,Y_r)$,  $d(\Mat_{n_1, n_1}) = d(\Mat_{n_1, n_1}, 0,0,0)$, etc.

\begin{lemma} \label{lem:diff-image-Y_r}
The image of $Y_r$ under the differential map $d$ is 
$$
d(Y_r) = Y_r
$$
\end{lemma}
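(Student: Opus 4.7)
The claim is essentially a direct unpacking of definitions. By the notational convention introduced just above the lemma, $d(Y_r)$ stands for $d(0,0,0,Y_r)$, i.e.\ the image of the subspace $\{(0,0,0,S) : S \in Y_r\}$ under the differential $d = (d\psi_r)_{(I,I,I,T)}$. So the plan is simply to evaluate the explicit formula for $d$ on tuples with vanishing first three coordinates.

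Substituting $A = B = C = 0$ into
\[
d(A,B,C,S) = ((A \otimes I \otimes I) + (I \otimes B \otimes I) + (I \otimes I \otimes C)) \cdot T + S,
\]
the first summand collapses to zero (each of the three tensor factors $A \otimes I \otimes I$, $I \otimes B \otimes I$, $I \otimes I \otimes C$ is the zero operator), leaving $d(0,0,0,S) = S$. Therefore the restriction of $d$ to $\{0\} \times \{0\} \times \{0\} \times Y_r$ is the identity map onto $Y_r$, and in particular $d(Y_r) = Y_r$.

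So there is no real obstacle here: the lemma is purely a bookkeeping statement that isolates the ``trivial'' piece of the image of $d$. Its purpose, as one can anticipate from the surrounding discussion, is to separate out the contribution of $Y_r$ so that in the subsequent analysis one only has to understand the nontrivial pieces $d(\Mat_{n_i,n_i})$ for $i=1,2,3$, whose images will depend on $T$ and will be where all of the content of the lower bound argument actually lies. In other words, the real work happens in analyzing $\im\, d / Y_r$, and this lemma is the observation that $Y_r$ is always contained in $\im\, d$ with no effort.
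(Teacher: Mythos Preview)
Your proof is correct and matches the paper's own argument exactly: both simply observe that $d(0,0,0,S)=S$ for every $S\in Y_r$, so the restriction of $d$ to $Y_r$ is the identity. The additional commentary you provide about the role of the lemma is accurate but not needed for the proof itself.
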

\begin{proof}
Observe that for any $S \in Y_r$, we have $d(0,0,0,S) = S$. 
\end{proof}

Let $L_i = [T_{ijk}]_{j,k} $ for $1 \leq i \leq n_1$. These just split the tensor in the first direction as a stack of $n$ matrices. Let $\mathcal{L}$ denote the span of the $L_i$s.

\begin{lemma} \label{lem:diff-image-1st}
The image of $\Mat_{n_1,n_1}$ under the differential $d$ is
\begin{equation} \label{eq:image-differential-linear}
d(\Mat_{n_1,n_1}) = \mathcal{L}[1].
\end{equation}
\end{lemma}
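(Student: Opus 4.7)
The plan is to unpack the formula for the differential and work directly with first-direction slices. Restricting $d$ by setting $B=C=0$ and $S=0$ gives $d(A,0,0,0) = (A\otimes I \otimes I)\cdot T$, so my first step is to compute the first-direction slices of this tensor. By the standard formula for the action on the first factor, the $i$-th such slice has $(j,k)$-entry $\sum_a A_{i,a}\, T_{a,j,k} = \sum_a A_{i,a}\, (L_a)_{j,k}$, so the $i$-th slice of $d(A,0,0,0)$ is exactly $\sum_a A_{i,a} L_a$, which lies in $\mathcal{L}$. This immediately gives the inclusion $d(\Mat_{n_1,n_1}) \subseteq \mathcal{L}[1]$.

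For the converse inclusion, I would characterize any element of $\mathcal{L}[1]$ slice-by-slice: its $i$-th first-direction slice is some linear combination $\sum_a c_{i,a} L_a$ of the $L_a$, with the $c_{i,a}$ free to be chosen independently across $i$. Taking $A$ to be the matrix with entries $A_{i,a} := c_{i,a}$, the computation of the previous paragraph shows that $d(A,0,0,0)$ is precisely the given tensor. Hence $\mathcal{L}[1] \subseteq d(\Mat_{n_1,n_1})$, and equality \eqref{eq:image-differential-linear} follows.

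There is no substantive obstacle here; the lemma is a direct translation between the tensor-contraction description of the first-factor action and the slice-space description of $\mathcal{L}[1]$. The only care needed is to track index conventions consistently (rows of $A$ act on the first leg, producing new slices as linear combinations of the $L_a$'s) and to observe that because the rows of $A$ are chosen independently, one really does get every tensor whose first-direction slices lie in $\mathcal{L}$, and nothing more.
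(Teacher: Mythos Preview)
Your proof is correct and follows essentially the same approach as the paper: compute the first-direction slices of $(A\otimes I\otimes I)\cdot T$ and observe that the $i$-th slice is $\sum_a A_{i,a}L_a$. The paper's proof is terser (it states only the slice computation and leaves both inclusions implicit), whereas you spell out both directions explicitly; but the underlying argument is identical.
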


\begin{proof}
For any $A \in \Mat_{n_1,n_1}$, let $T' = d(A,0,0,0) = (A \otimes I \otimes I) \cdot T$. Consider the slices $L_i = [T_{ijk}]_{j,k} $. Then, it is straighforward to compute that for $1 \leq i \leq n_1$, the slice
\[
[T'_{ijk}]_{j,k}= \sum_{t = 1}^{n_1} a_{it} L_i.\qedhere
\]
\end{proof}

Similar to $L_i$, define the slices in the other directions as $M_j = [T_{ijk}]_{i,k}$ and $N_k = [T_{ijk}]_{i,j}$. Let $\mathcal{M}$ and $\mathcal{N}$ denote the spans of the $M_j$s and $N_k$s respectively.

\begin{corollary} \label{cor:image-d}
The image of the differential $d$ is
$$
\im\ d = \mathcal{L}[1] + \mathcal{M}[2] + \mathcal{N}[3] + Y_r.
$$
\end{corollary}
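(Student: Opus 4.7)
The corollary is essentially an assembly step: since $d$ is linear, we already have the decomposition
\[
\im\ d = d(\Mat_{n_1,n_1}) + d(\Mat_{n_2,n_2}) + d(\Mat_{n_3,n_3}) + d(Y_r),
\]
and two of the four summands have already been identified, namely $d(Y_r) = Y_r$ by the lemma on $Y_r$ and $d(\Mat_{n_1,n_1}) = \mathcal{L}[1]$ by the lemma computing the first-direction image. The plan is therefore simply to establish the analogous identifications $d(\Mat_{n_2,n_2}) = \mathcal{M}[2]$ and $d(\Mat_{n_3,n_3}) = \mathcal{N}[3]$ for the other two tensor legs, and then to combine them.

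For the second factor, I would fix $B \in \Mat_{n_2,n_2}$, set $T'' = d(0,B,0,0) = (I \otimes B \otimes I) \cdot T$, and compute slices in the second direction: the $j$th second-direction slice of $T''$ is $[T''_{ijk}]_{i,k} = \sum_{t} b_{jt} M_t$, which lies in $\mathcal{M}$. Thus every slice of $T''$ in the second direction lies in $\mathcal{M}$, so $T'' \in \mathcal{M}[2]$, giving $d(\Mat_{n_2,n_2}) \subseteq \mathcal{M}[2]$. Conversely, since $B$ ranges over all of $\Mat_{n_2,n_2}$, the coefficients $b_{jt}$ range over all possible combinations, so we can produce any tensor whose $j$th second-direction slice is an arbitrary linear combination of the $M_t$'s, which is precisely $\mathcal{M}[2]$. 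The argument for the third factor is the same mutatis mutandis, yielding $d(\Mat_{n_3,n_3}) = \mathcal{N}[3]$.

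Finally, substituting these four identifications into the linearity decomposition gives
\[
\im\ d = \mathcal{L}[1] + \mathcal{M}[2] + \mathcal{N}[3] + Y_r,
\]
as desired. There is no real obstacle here; the work was done in the preceding lemmas, and this corollary just packages the three symmetric computations together. The only thing to be slightly careful about is the direction of the slicing convention — making sure that $B$ acting in the second tensor leg really does produce a tensor with second-direction slices in $\mathcal{M}$, which is a direct index manipulation analogous to the one already carried out in the proof of the first-direction lemma.
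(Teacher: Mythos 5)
Your proof is correct and matches the paper's approach exactly: the paper likewise decomposes $\im\ d$ by linearity into the four summands, cites the lemma for $d(Y_r)$ and the first-direction lemma for $d(\Mat_{n_1,n_1})$, and appeals to the same "counterparts in the other two directions" argument that you spell out explicitly.
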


\begin{proof}
This follows from %
Lemma~\ref{lem:diff-image-1st} (and its counterparts in the other two directions) and Lemma~\ref{lem:diff-image-Y_r}.
\end{proof}

Now, we refine the above corollary. Denote by $\widehat{L_i}$ the top-left $ r \times r$ submatrix of $L_i$. Then, let $\mathcal{\widehat{L}} = \spa (\widehat{L_1}, \widehat{L_2} ,\dots, \widehat{L_{n_1}})$. Similarly define $\widehat{M_j}, \widehat{N_k}$ and $\mathcal{\widehat{M}}$ and $\mathcal{\widehat{N}}$.

\begin{proposition} \label{prop:im-d}
The image of the differential $d$ is
$$
\im\ d = \mathcal{\widehat{L}}[1] + \mathcal{\widehat{M}}[2] + \mathcal{\widehat{N}}[3] + Y_r.
$$
\end{proposition}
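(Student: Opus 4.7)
The plan is to upgrade Corollary~\ref{cor:image-d} by showing that once we have added $Y_r$ as a summand, we may freely replace each of $\mathcal{L}$, $\mathcal{M}$, $\mathcal{N}$ by its hatted version $\widehat{\mathcal{L}}$, $\widehat{\mathcal{M}}$, $\widehat{\mathcal{N}}$. The key observation underlying this is that an element of $Y_r$ can have arbitrary entries \emph{outside} the off-diagonal part of the $[r]^3$ cube; equivalently, $Y_r$ is the kernel of the projection onto the off-diagonal $[r]^3$ coordinates, so modding out by $Y_r$ amounts to looking only at those coordinates. In particular, the entries of any slice matrix outside the top-left $r\times r$ block are irrelevant modulo $Y_r$.

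Concretely, first I would prove the identity $\mathcal{L}[1] + Y_r = \widehat{\mathcal{L}}[1] + Y_r$ by checking both inclusions on spanning elements. For the forward inclusion take a generator $e_i \otimes L_s \in \mathcal{L}[1]$: if $i > r$ the support lies entirely in $\{i\}\times[n_2]\times[n_3]$, which is disjoint from $[r]^3$, so this tensor is already in $Y_r$; if $i \leq r$, subtract the companion element $e_i \otimes \widehat{L_s} \in \widehat{\mathcal{L}}[1]$ (with $\widehat{L_s}$ zero-padded into $\Mat_{n_2,n_3}$), and the difference is supported on $\{i\}\times([n_2]\times[n_3]\setminus[r]^2)$, hence has no nonzero off-diagonal entries in $[r]^3$ and lies in $Y_r$. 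The reverse inclusion is symmetric and uses the same decomposition. The analogous equalities $\mathcal{M}[2]+Y_r=\widehat{\mathcal{M}}[2]+Y_r$ and $\mathcal{N}[3]+Y_r=\widehat{\mathcal{N}}[3]+Y_r$ follow by swapping the roles of the three tensor factors.

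Finally, combining these three identities with Corollary~\ref{cor:image-d} gives
\[
\im\, d = \mathcal{L}[1] + \mathcal{M}[2] + \mathcal{N}[3] + Y_r = \widehat{\mathcal{L}}[1] + \widehat{\mathcal{M}}[2] + \widehat{\mathcal{N}}[3] + Y_r,
\]
as claimed. There is no real obstacle here beyond careful bookkeeping of supports: everything reduces to the elementary fact that $Y_r$ absorbs all tensor entries that are not off-diagonal entries of the $[r]\times[r]\times[r]$ subcube, which is exactly the region controlled by the hatted subspaces.
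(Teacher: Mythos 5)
Your proof is correct and takes essentially the same route as the paper's, which simply states that the result ``follows directly from Corollary~\ref{cor:image-d} and the definition of the coordinate subspace $Y_r$''; you have just spelled out the support bookkeeping that the paper leaves implicit.
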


\begin{proof}
This follows directly from Corollary~\ref{cor:image-d} and the definition of the coordinate subspace~$Y_r$.
\end{proof}

Now, we can finally prove \autoref{thm:technical-lower-bound}.

\begin{proof} [Proof of \autoref{thm:technical-lower-bound}]
Let $T \in X_r$ be generic. Define $L_i, M_i, N_i$ as above. Let
\[
\mathcal{X}_1 = \spa (\widehat{L_{r+1}}, \widehat{L_{r+2}}, \dots, \widehat{L_{n_1}}).
\]
Then $\mathcal{X}_1$ is a generic subspace of $\Mat_{r,r}$ of dimension $n_1 - r$. We similarly define 
\begin{align*}
\mathcal{X}_2 &= \spa (\widehat{M_{r+1}}, \widehat{M_{r+2}}, \dots, \widehat{M_{n_2}})\\
\mathcal{X}_3 &= \spa (\widehat{N_{r+1}}, \widehat{N_{r+2}}, \dots, \widehat{N_{n_3}}).
\end{align*}
By hypothesis, $\mathcal{X}_1[1] + \mathcal{X}_2[2] + \mathcal{X}_3[3] = K^{r,r,r}$. Hence, by Proposition~\ref{prop:im-d}, we see that 
\[
\im(d) = \mathcal{\widehat{L}}[1] + \mathcal{\widehat{M}}[2] + \mathcal{\widehat{N}}[3] + Y_r \supseteq \mathcal{X}_1[1] + \mathcal{X}_2[2] + \mathcal{X}_3[3] + Y_r = K^{r,r,r} + Y_r = K^{n_1,n_2,n_3}.
\]
Thus, we get that $\gensubrank(n_1,n_2,n_3) \geq r$  by Corollary~\ref{cor:im-d-subrank}.
\end{proof}

We observe that the idea from the proof of \autoref{thm:technical-lower-bound} actually yields the stronger version below. To state this we define the linear subspace
\[
W_r =  \{T \in K^{r,r,r} \ | \ T_{ijk} = 0 \text{ if  $i,j,k$ are all different} \} \subseteq K^{r,r,r}.
\]\par%
\begin{theorem}\label{thm:technical-lower-bound-strong}
Let $r \leq n_1,n_2,n_3$ such that $n_i = r < r^2$ for $i = 1,2,3$. Let $\mathcal{X}_i \subseteq \Mat_{r,r}$ be a generic subspace of dimension $n_i - r$ for $i = 1,2,3$. Suppose 
$$
\mathcal{X}_1[1] + \mathcal{X}_2[2] + \mathcal{X}_3[3] + W_r = K^{r,r,r},
$$
then $\gensubrank(n_1,n_2,n_3) \geq r$.
Further, if characteristic of $K$ is $0$, then we have the converse.
\end{theorem}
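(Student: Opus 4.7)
The plan is to strengthen the argument for \autoref{thm:technical-lower-bound} by keeping track of the contributions to $\mathcal{\widehat{L}},\mathcal{\widehat{M}},\mathcal{\widehat{N}}$ coming from the first $r$ slices of a generic $T \in X_r$, which were discarded in that proof. By \autoref{cor:im-d-subrank} combined with \autoref{prop:im-d}, it suffices to show that for generic $T \in X_r$ the image
\[
\im(d) \;=\; \mathcal{\widehat{L}}[1] + \mathcal{\widehat{M}}[2] + \mathcal{\widehat{N}}[3] + Y_r
\]
equals $K^{n_1,n_2,n_3}$; the converse half of that corollary then handles the ``further'' part in characteristic zero automatically.

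First I would exploit the special form of $T \in X_r$: the defining conditions force $\widehat{L_i} = T_{iii}\cdot E_{ii}$ for $i \in [r]$, so $\spa\{\widehat{L_1},\dots,\widehat{L_r}\}$ equals the space $\mathrm{Diag}(r) \subseteq \Mat_{r,r}$ of diagonal matrices, while the remaining slices $\widehat{L_i}$ with $i > r$ are determined by the entirely free entries $T_{ijk}$ with $i > r$ and $j,k \in [r]$, so their span $\mathcal{X}_1$ is a generic subspace of $\Mat_{r,r}$ of dimension $n_1 - r$. The analogous statements produce generic $\mathcal{X}_2,\mathcal{X}_3$, and since the three slabs of free entries giving $\mathcal{X}_1,\mathcal{X}_2,\mathcal{X}_3$ use pairwise disjoint coordinates of $T$, the triple $(\mathcal{X}_1,\mathcal{X}_2,\mathcal{X}_3)$ is jointly generic. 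Hence $\mathcal{\widehat{L}} = \mathrm{Diag}(r) + \mathcal{X}_1$ (and similarly for $\mathcal{\widehat{M}},\mathcal{\widehat{N}}$), and after observing that the parts of $\mathcal{\widehat{L}}[1],\mathcal{\widehat{M}}[2],\mathcal{\widehat{N}}[3]$ indexed outside $[r]^3$ already sit inside $Y_r$,
\[
\im(d) \;=\; \bigl(\mathrm{Diag}(r)[1] + \mathrm{Diag}(r)[2] + \mathrm{Diag}(r)[3]\bigr) \;+\; \mathcal{X}_1[1] + \mathcal{X}_2[2] + \mathcal{X}_3[3] \;+\; Y_r,
\]
where the first three groups of summands are now read as subspaces of $K^{r,r,r}$.

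The crux is then the identity $\mathrm{Diag}(r)[1] + \mathrm{Diag}(r)[2] + \mathrm{Diag}(r)[3] = W_r$ in $K^{r,r,r}$, which follows from a short inclusion--exclusion check: each summand has dimension $r^2$, each pairwise and triple intersection coincides with the strict diagonal of dimension $r$, so the sum has dimension $3r^2 - 2r = \dim W_r$, matching the evident containment. Substituting this in and invoking the hypothesis $W_r + \mathcal{X}_1[1] + \mathcal{X}_2[2] + \mathcal{X}_3[3] = K^{r,r,r}$ yields $\im(d) = K^{r,r,r} + Y_r = K^{n_1,n_2,n_3}$, and \autoref{cor:im-d-subrank} gives the forward implication. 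For the converse in characteristic zero, the chain above consists of genuine equalities modulo $Y_r$, so failure of the hypothesis forces $\im(d) \neq K^{n_1,n_2,n_3}$, and the converse half of \autoref{cor:im-d-subrank} gives $\gensubrank(n_1,n_2,n_3) < r$. The only points requiring care are the joint genericity of the $\mathcal{X}_i$ (immediate from the disjointness of the three free slabs of $T$) and the $W_r$ dimension count, both of which are essentially bookkeeping, so I do not anticipate a real obstacle.
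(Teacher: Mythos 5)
Your proposal is correct and takes essentially the same route as the paper's proof. The paper also reduces to \autoref{cor:im-d-subrank} via \autoref{prop:im-d}, introduces $P_i = \spa(\widehat{L_1},\dots,\widehat{L_r})$ (which is exactly your $\mathrm{Diag}(r)$), and asserts $P_1[1]+P_2[2]+P_3[3]=W_r$ as a ``straightforward computation''; you merely make that computation explicit by identifying $\widehat{L_i} = T_{iii}E_{ii}$ and verifying the support-level identity (your inclusion–exclusion count works here precisely because all spaces involved are coordinate subspaces, where such counting is legitimate), and you handle the passage from $K^{n_1,n_2,n_3}$ to $K^{r,r,r}$ modulo $Y_r$ with the same observation the paper flags as the ``only subtle point.''
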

In other words, if characteristic of $K$ is $0$, then \autoref{thm:technical-lower-bound-strong} says that $\gensubrank(n_1,n_2,n_3)$ is given by the smallest number $r$ such that 
$\mathcal{X}_1[1] + \mathcal{X}_2[2] + \mathcal{X}_3[3] + W_r = K^{r,r,r}$ for generic subspaces $\mathcal{X}_i \subseteq \Mat_{r,r}$ of dimension $n_i -r$.

\begin{proof}
The proof is similar to that of \autoref{thm:technical-lower-bound}.
Take a generic $T \in X_r$. Define $L_i, M_i, N_i$ as above. Set $\mathcal{X}_1 = \spa (\widehat{L_{r+1}}, \widehat{L_{r+2}}, \dots, \widehat{L_{n_1}})$. Then $\mathcal{X}_1$ is a generic subspace of $\Mat_{r,r}$ of dimension $n_1 - r$. Let $P_1 = \spa (\widehat{L_1}, \widehat{L_2},\dots,\widehat{L_r})$. Similarly define $\mathcal{X}_2 = \spa (\widehat{M_{r+1}}, \widehat{M_{r+2}}, \dots, \widehat{M_{n_2}})$ and $\mathcal{X}_3 = \spa (\widehat{N_{r+1}}, \widehat{N_{r+2}}, \dots, \widehat{N_{n_3}})$ and also $P_2 = \spa (\widehat{M_1},\dots,\widehat{M_r})$ and $P_3 = \spa (\widehat{N_1}, \dots, \widehat{N_r})$.

It is a straightforward computation to see that since $T$ is generic in $X_r$, we have $P_1[1] + P_2[2] + P_3[3] = W_r$.
Hence, using Proposition~\ref{prop:im-d}, we see that 
\begin{align*}
\im(d) &= \mathcal{\widehat{L}}[1] + \mathcal{\widehat{M}}[2] + \mathcal{\widehat{N}}[3] + Y_r \\
&= \mathcal{X}_1[1] + \mathcal{X}_2[2] + \mathcal{X}_3[3] + P_1[1] + P_2[2] + P_3[3] +  Y_r  \\
& = \mathcal{X}_1[1] + \mathcal{X}_2[2] + \mathcal{X}_3[3] + W_r + Y_r
\end{align*}

Now, we claim that 
$$
\mathcal{X}_1[1] + \mathcal{X}_2[2] + \mathcal{X}_3[3] + W_r = K^{r,r,r} \Leftrightarrow \mathcal{X}_1[1] + \mathcal{X}_2[2] + \mathcal{X}_3[3] + W_r + Y_r = K^{n_1,n_2,n_3}.
$$
The proof of the claim is rather straightforward, the only subtle point being that $Y_r$ and $K^{r,r,r}$ have an intersection (i.e., the main diagonal of $K^{r,r,r}$), but this intersection is included in $W_r$ anyway, so the claim goes through.

Thus, we get that $\im(d) = K^{n_1,n_2,n_3}$ if and only if $\mathcal{X}_1[1] + \mathcal{X}_2[2] + \mathcal{X}_3[3] + W_r = K^{r,r,r}$. Now, the theorem follows from applying Corollary~\ref{cor:im-d-subrank}.
\end{proof}

In the next section we will prove that the requirement of \autoref{thm:technical-lower-bound} is satisfied. This goes as follows. We will show (\autoref{lem:333-333}) that:

\begin{lemma}\label{lem:333}
Let $\mathcal{X}_i \subseteq \Mat_{3,3}$ be a generic subspace of dimension 3 for $i=1,2,3$. Then $\mathcal{X}_1[1]+\mathcal{X}_2[2]+\mathcal{X}_3[3] = K^{3,3,3}$.
\end{lemma}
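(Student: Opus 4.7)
The plan is to reduce the lemma, via a Zariski-openness argument, to the construction of a single explicit triple $(\mathcal{X}_1, \mathcal{X}_2, \mathcal{X}_3)$ for which the sum is all of $K^{3,3,3}$. Since $\dim \mathcal{X}_i[i] = 3\cdot 3 = 9$ for each $i$ and $\dim K^{3,3,3} = 27$, the desired equality is equivalent to the natural summing map
\[
\sigma \colon \mathcal{X}_1[1] \oplus \mathcal{X}_2[2] \oplus \mathcal{X}_3[3] \longrightarrow K^{3,3,3}
\]
being a linear isomorphism of $27$-dimensional spaces.

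First I would argue that on any affine chart of $\mathrm{Gr}(3, \Mat_{3,3})^3$, the entries of the $27 \times 27$ matrix representing $\sigma$ in a standard basis depend polynomially on the chart coordinates, so that the condition $\det \sigma \neq 0$ defines a Zariski-open subset of the parameter space. Any non-empty Zariski-open subset of the irreducible variety $\mathrm{Gr}(3, \Mat_{3,3})^3$ is automatically Zariski-dense, which is exactly the meaning of ``the property holds for a generic triple''. It therefore suffices to exhibit one triple where $\sigma$ is an isomorphism.

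For the explicit construction, I would follow the hint in the excerpt and take each $\mathcal{X}_i$ to be the span of three $\{0,1\}$-valued $3 \times 3$ matrices. The resulting $27$ spanning tensors of $\mathcal{X}_1[1] + \mathcal{X}_2[2] + \mathcal{X}_3[3]$ can then be written out in the coordinate basis of $K^{3,3,3}$ as a $27 \times 27$ matrix whose nonsingularity is a direct finite check, ideally carried out by ordering rows and columns so that this matrix is (close to) block-triangular. The main obstacle is pinning down a clean example, since pure coordinate subspaces (each $\mathcal{X}_i$ spanned by standard basis matrices $E_{jk}$) can quickly be ruled out by a short counting/partition argument on $[3]^3$: one shows that the existence of such $\mathcal{X}_i$ would amount to partitioning the $3 \times 3 \times 3$ cube into nine axis-parallel lines with exactly three in each axis direction, and further constraints force the supporting sets to fit into a $2 \times 2$ sub-grid, leading to a contradiction. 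So the matrices in the $\{0,1\}$ example must each have more than one nonzero entry. Since the nonsingular locus is Zariski-dense in the parameter space, essentially any reasonable random choice succeeds and this step reduces to finite bookkeeping on one concrete instance.
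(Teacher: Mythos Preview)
Your reduction to a single explicit instance via Zariski-openness is correct and is exactly how the paper begins as well (the footnote to Theorem~1.7 makes the equivalence of ``there exist $\mathcal{X}_i$'' and ``generic $\mathcal{X}_i$ work'' explicit). The two arguments diverge only in how the instance is produced. You propose to write down three $3$-dimensional $\{0,1\}$-matrix subspaces and verify a $27\times 27$ determinant directly; the paper instead builds the example structurally. It first checks by hand that $\begin{bsmallmatrix} 2 & 2 & 2 \\ 0 & 3 & 1\end{bsmallmatrix}\in\mathcal{S}$ (taking $W_2$ to be the symmetric $2\times 2$ matrices and $W_3$ the span of the identity, and showing $\Phi_2(W_2\otimes V_2)\cap \Phi_3(W_3\otimes V_3)=0$), and then reaches $\begin{bsmallmatrix} 3 & 3 & 3 \\ 3 & 3 & 3\end{bsmallmatrix}\in\mathcal{S}$ through five applications of a ``direct sum'' lemma that glues two decompositions along a common coordinate. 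Your route is shorter for this one lemma but leaves the concrete example and its verification unwritten; the paper's route is longer here but is fully human-checkable and, more importantly, introduces exactly the direct-sum machinery that is later bootstrapped to prove $\begin{bsmallmatrix} n \\ n^{n-2}\end{bsmallmatrix}^{\odot n}\in\mathcal{S}$ for all $n\ge 3$, which is what the higher-order lower bound needs.

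As an aside, your impossibility claim for pure coordinate subspaces is correct; the quickest argument is parity. In each slice $\{k=c\}$ the three direction-$3$ lines contribute exactly three singleton points, leaving six points to be covered by disjoint direction-$1$ rows and direction-$2$ columns; since a row and a column in a $3\times 3$ grid always meet, each slice must contain either two rows and no columns or vice versa, and then $\sum_c a_c = 3$ is impossible with each $a_c\in\{0,2\}$.
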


Then from a blow-up argument we obtain:

\begin{lemma}\label{lem:333-blowup}
Let $\mathcal{X}_i \subseteq \Mat_{3d,3d}$ be a generic subspace of dimension $3d^2$ for $i=1,2,3$. Then $\mathcal{X}_1[1]+\mathcal{X}_2[2]+\mathcal{X}_3[3] = K^{3d, 3d, 3d}$.
\end{lemma}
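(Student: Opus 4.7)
My plan is to deduce \autoref{lem:333-blowup} from \autoref{lem:333} by blowing up each factor of $K^3$ to $K^{3d} \cong K^3 \otimes K^d$, which reduces the claim to a matter of producing a single good triple of subspaces.

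First I would reduce to exhibiting one such triple. The condition ``$\mathcal{X}_1[1]+\mathcal{X}_2[2]+\mathcal{X}_3[3] = K^{3d,3d,3d}$'' on triples of $3d^2$-dimensional subspaces $\mathcal{X}_i \subseteq \Mat_{3d,3d}$ is a Zariski-open condition on the product of Grassmannians: since $\dim \mathcal{X}_i[i] = 3d^2 \cdot 3d = 9d^3$ and $3 \cdot 9d^3 = 27d^3 = \dim K^{3d,3d,3d}$, the sum equals the full space exactly when the associated square ``gluing matrix,'' whose entries are polynomial in local coordinates on the Grassmannians, has nonvanishing determinant. Hence a generic triple realizes the conclusion as soon as one triple does.

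To build such a triple I would use the canonical isomorphisms, induced by $K^{3d} \cong K^3 \otimes K^d$ together with a reordering of tensor factors,
\[
\Mat_{3d,3d} \;\cong\; \Mat_{3,3} \otimes \Mat_{d,d}, \qquad K^{3d,3d,3d} \;\cong\; K^{3,3,3} \otimes K^{d,d,d}.
\]
Using \autoref{lem:333}, fix subspaces $\mathcal{X}_i^0 \subseteq \Mat_{3,3}$ of dimension $3$ with $\mathcal{X}_1^0[1]+\mathcal{X}_2^0[2]+\mathcal{X}_3^0[3] = K^{3,3,3}$, and set $\mathcal{Y}_i := \mathcal{X}_i^0 \otimes \Mat_{d,d}$, a $3d^2$-dimensional subspace of $\Mat_{3d,3d}$. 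Under the identification above, $\mathcal{Y}_i[i]$ corresponds to $\mathcal{X}_i^0[i] \otimes K^{d,d,d}$, so
\[
\sum_{i=1}^{3} \mathcal{Y}_i[i] \;=\; \Bigl(\sum_{i=1}^{3} \mathcal{X}_i^0[i]\Bigr) \otimes K^{d,d,d} \;=\; K^{3,3,3} \otimes K^{d,d,d} \;=\; K^{3d,3d,3d},
\]
exhibiting the needed triple and, together with the Zariski-openness step above, proving the lemma.

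The only real obstacle is the tensor book-keeping in the identity $\mathcal{Y}_i[i] = \mathcal{X}_i^0[i] \otimes K^{d,d,d}$: one must verify that the permutation grouping the $i$th copy of $K^3 \otimes K^d$ into $K^{3d}$ is compatible with the bracket construction $\mathcal{W}[i] = K^{n_i} \otimes_{\text{pos. } i} \mathcal{W}$, so that the $K^3$-leg lands in the $i$th slot of $K^{3,3,3}$ and the $K^d$-leg lands in the $i$th slot of $K^{d,d,d}$. This is a routine but careful check; everything else is immediate.
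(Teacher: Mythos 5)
Your proposal is correct and follows the same route as the paper: reduce to exhibiting a single witness, then take the $3$-dimensional subspaces from \autoref{lem:333} and tensor each with $\Mat_{d,d}$ under the identification $K^{3d} \cong K^3 \otimes K^d$. The paper states this in one line without spelling out the Zariski-openness reduction or the tensor bookkeeping, both of which you verify; the underlying idea is identical.
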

\begin{proof}
It is enough to construct one such choice of $\mathcal{X}_i$. By \autoref{lem:333} there exist subspaces $\mathcal{Y}_i \subseteq \Mat_{3,3}$ of dimension 3 such that $\mathcal{Y}_1[1]+\mathcal{Y}_2[2]+\mathcal{Y}_3[3] = K^{3,3,3}$. Then $\mathcal{X}_i = \mathcal{Y}_i \otimes \Mat_{d,d}$ satisfy the claim.
\end{proof}

\begin{theorem}\label{thm:intermediate}
Let $d$ be a positive integer such that $n-3d \geq 3d^2$. Then $\gensubrank(n,n,n) \geq 3d$.
\end{theorem}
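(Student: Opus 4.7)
The plan is to apply Theorem~\ref{thm:technical-lower-bound} with $n_1 = n_2 = n_3 = n$ and $r = 3d$. That theorem reduces the desired bound $\gensubrank(n,n,n) \geq 3d$ to producing a triple of subspaces $\mathcal{X}_i \subseteq \Mat_{3d,3d}$, each of dimension $n-3d$, such that
\[
\mathcal{X}_1[1] + \mathcal{X}_2[2] + \mathcal{X}_3[3] = K^{3d,3d,3d}.
\]
This spanning condition is a Zariski-open condition on the product of Grassmannians, so exhibiting a single such triple is already enough to conclude that a generic triple satisfies it, as required by Theorem~\ref{thm:technical-lower-bound}.

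I would then split on whether the technical hypothesis $n_i - r \leq r^2$ of Theorem~\ref{thm:technical-lower-bound} holds, i.e.\ whether $n - 3d \leq 9d^2$. In the main regime $3d^2 \leq n - 3d \leq 9d^2$, Lemma~\ref{lem:333-blowup} supplies subspaces $\mathcal{Y}_i \subseteq \Mat_{3d,3d}$ of dimension exactly $3d^2$ whose $[i]$-extensions span $K^{3d,3d,3d}$. The hypothesis $n - 3d \geq 3d^2$ together with $n - 3d \leq 9d^2 = \dim \Mat_{3d,3d}$ lets me enlarge each $\mathcal{Y}_i$ to a subspace $\mathcal{X}_i \subseteq \Mat_{3d,3d}$ of the required dimension $n-3d$. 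Since $\mathcal{Y}_i \subseteq \mathcal{X}_i$ implies $\mathcal{Y}_i[i] \subseteq \mathcal{X}_i[i]$, the spanning property passes to the enlarged triple, and Theorem~\ref{thm:technical-lower-bound} gives the conclusion.

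The remaining regime $n - 3d > 9d^2$ is exactly the trivial case highlighted in the footnote to Theorem~\ref{thm:technical-lower-bound}. One clean way to dispose of it is by monotonicity of the generic subrank: a generic tensor in $K^{n,n,n}$ restricts (by extracting an $n^* \times n^* \times n^*$ subtensor) to a generic tensor in $K^{n^*, n^*, n^*}$, and a restriction of a tensor reduces to the tensor itself, so $\gensubrank(n,n,n) \geq \gensubrank(n^*, n^*, n^*)$ for every $n^* \leq n$. Choosing $n^* = 9d^2 + 3d$ places us inside the first regime and reuses the argument above.

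I do not anticipate any real obstacle, because the substantive content has been absorbed into Theorem~\ref{thm:technical-lower-bound} and Lemma~\ref{lem:333-blowup} upstream. The only step that requires any care is the inflation from dimension $3d^2$ to dimension $n-3d$, and this is immediate from the trivial observation that enlarging the summands can only enlarge their sum.
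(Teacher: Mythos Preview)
Your proposal is correct and follows essentially the same route as the paper: set $r=3d$, invoke Lemma~\ref{lem:333-blowup} to get spanning subspaces of dimension $3d^2$, inflate them to dimension $n-3d$ using $n-3d\geq 3d^2$, and apply Theorem~\ref{thm:technical-lower-bound}. The paper's one-line proof leaves the inflation step and the boundary case $n-3d>9d^2$ implicit (the latter is dispatched by the footnote to Theorem~\ref{thm:technical-lower-bound}), whereas you spell both out; your monotonicity argument for that boundary case is a perfectly valid way to justify what the footnote asserts.
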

\begin{proof}
Taking $r = 3d$, this follows from \autoref{thm:technical-lower-bound} and \autoref{lem:333-blowup}.
\end{proof}

\begin{proof}[Proof of \autoref{thm:subrank-lower-3-tensor}]
If we take $d = \lfloor \sqrt{ n/3 + 1/4}   - 1/2 \rfloor$, then we have 
$d \leq \sqrt{n/3 + 1/4} - 1/2$, so $(d+1/2)^2 \leq n/3 + 1/4$. Thus, $d^2 + d + 1/4 \leq n/3 + 1/4$, so $d^2 + d \leq n/3$ or equivalently $3d^2 \leq n-3d$. Thus, it follows from \autoref{thm:intermediate} that $\gensubrank(n) \geq 3d.$
\end{proof}

For the higher-order lower bound \autoref{thm:subrank-lower-k-tensor}
we prove (\autoref{lem:higher-order-construction}) the analogous higher-order version of \autoref{lem:333} which we can similarly blow up and apply \autoref{thm:technical-lower-bound} to.

\section{Constructions of tensor space decompositions}\label{sec:constructions}

\newcommand{\kar} {{\rm char}}
\newcommand{\R}{{\mathbb R}}
\newcommand{\Rep}{{\rm Rep}}
\newcommand{\SI}{{\rm SI}}
\newcommand{\SO}{{\rm SO}}
\newcommand{\ST}{{\rm ST}}
\newcommand{\PD} {{\rm PD}}
\newcommand{\mlt}{{\rm mlt}}
\newcommand{\Lie} {{\rm Lie}}

\newcommand{\B}{{\mathcal B}}
\newcommand{\C}{{\mathbb C}}
\newcommand{\N}{{\mathbb N}}
\newcommand{\F}{{\mathbb F}}
\newcommand{\Q}{{\mathbb Q}}
\newcommand{\Z}{{\mathbb Z}}
\newcommand{\tnr}{\operatorname{Tr}}
\newcommand{\tnrace}{\operatorname{Tr}}
\newcommand{\SL}{\operatorname{SL}}
\newcommand{\Mod}{\operatorname{mod}}

\newcommand{\VV}{{\mathbb V}}

\newcommand{\nullcone}{\mathcal{N}}
\newcommand{\Sym}{S}

\newcommand{\rk} {\operatorname{rk}}
\newcommand{\g} {\mathfrak{g}}
\newcommand{\n} {\mathfrak{n}}
\newcommand{\h} {\mathfrak{h}}
\newcommand{\tn}{\mathfrak{t}}
\newcommand{\sln}{\mathfrak{sl}_n}
\newcommand{\adaptable} {adaptable}
\newcommand{\End}{{\rm End}}
\newcommand{\U}{\operatorname{U}}
\newcommand{\SU}{\operatorname{SU}}
\newcommand{\ad}{\rm ad}
\newcommand{\uncramped} {uncramped}
\newcommand{\Supp} {{\rm Supp}}
\newcommand{\wt}{\widetilde}
\newcommand{\isom} {\stackrel{\sim}{\longrightarrow}}
 \newcommand{\sslash}{\mathbin{\mkern-4mu/\mkern-6mu/\mkern-4mu}}
 \newcommand{\CC}{{\mathcal C}}
\newcommand{\SSS}{{\mathcal S}}

\newcommand{\comm}[1]{\textcolor{red}{$\clubsuit${\it #1}$\clubsuit$}}

We assume that the base field $K$ is infinite. 

Let $\CC$ be the set of all matrices $\begin{bsmallmatrix} n_1 & n_2 & \cdots & n_d\\a_1 & a_2  & \cdots & a_d\end{bsmallmatrix}$  for which  $n_1,n_2,\dots,n_d$ are positive integers, $a_1,a_2,\dots,a_d$ are nonnegative integers
and $\sum_{i=1}^d a_i n_i=\prod_{i=1}^d n_i$. Let $\SSS$ be the subset of all  $\begin{bsmallmatrix} n_1 & n_2 & \cdots & n_d\\a_1 & a_2  & \cdots & a_d\end{bsmallmatrix}\in {\CC}$ with the following property:
If $V_1,V_2,\dots,V_d$ are vector spaces of dimensions $n_1,n_2,\dots,n_d$ respectively, and $W_i$ is a general subspace
of $\widehat{V}_i:=V_1\otimes V_2\otimes \cdots \otimes V_{i-1}\otimes V_{i+1}\otimes \cdots\otimes V_d$ of dimension $a_i$ for all~$i$,
then $\sum_{i=1}^d \Phi_i( W_i\otimes V_i) = V_1 \otimes \cdots \otimes V_d$, where $\Phi:\widehat{V}_i\otimes V_i\to V_1\otimes V_2\otimes \cdots \otimes  V_d$ is
the isomorphism given by permuting the factors.

\subsection{General construction methods}

We will use the following simple facts. It is obvious that if a matrix lies in $\SSS$ and we permute its columns, then it will still lie in $\SSS$.
It is also clear that  $\begin{bsmallmatrix} n_1 & n_2 & \cdots & n_d\\a_1 & a_2  & \cdots & a_d\end{bsmallmatrix}$  lies in $\SSS$ if and only
if  $\begin{bsmallmatrix} n_1 & n_2 & \cdots & n_d & 1\\a_1 & a_2  & \cdots & a_d & 0\end{bsmallmatrix}$ lies in $\SSS$. 
The vector $\begin{bsmallmatrix}n\\1\end{bsmallmatrix}$ lies in~$\SSS$ for all positive integers $n$. Finally,
\begin{equation}\label{eq:refine}
\begin{bsmallmatrix} n_1 & n_2 & \cdots & n_d & 1 & 1 \\a_1 & a_2  & \cdots & a_d & b_1 & b_2\end{bsmallmatrix} \in \SSS \textnormal{\quad if and only if \quad}  \begin{bsmallmatrix} n_1 & n_2 & \cdots & n_d & 1 \\a_1 & a_2  & \cdots & a_d & b_1 + b_2\end{bsmallmatrix} \in \SSS
\end{equation}

\begin{lemma}[Direct sum construction]\label{lem:direct-sum-construction}
Suppose that $n_d=n_d'+n_d''$, and $a_i=a_i'+a_i''$ for $i=1,2,\dots,d-1$, and
$$
\begin{bsmallmatrix} n_1 & n_2 & \dots & n_{d-1} & n_d'\\
a_1' & a_2' & \cdots & a_{d-1}' & a_d\end{bsmallmatrix},\begin{bsmallmatrix} n_1 & n_2 & \dots & n_{d-1} & n_d''\\
a_1'' & a_2'' & \cdots & a_{d-1}'' & a_d\end{bsmallmatrix} \in \SSS.
$$
Then we have $\begin{bsmallmatrix} n_1 & n_2 & \cdots & n_d\\a_1 & a_2  & \cdots & a_d\end{bsmallmatrix}\in \SSS$.
\end{lemma}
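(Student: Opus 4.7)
The plan is to construct one valid configuration of subspaces realizing the target matrix, since by the equivalence "generic works iff some example works" (which follows from the irreducibility of products of Grassmannians and the openness of the spanning condition), producing a single example is enough. First I would split $V_d = V_d' \oplus V_d''$ of dimensions $n_d'$ and $n_d''$. This induces a splitting $\widehat{V}_i = \widehat{V}_i' \oplus \widehat{V}_i''$ for every $i < d$, where $\widehat{V}_i'$ and $\widehat{V}_i''$ are the analogues of $\widehat{V}_i$ with $V_d$ replaced by $V_d'$ and $V_d''$ respectively, and a splitting $V_1 \otimes \cdots \otimes V_d = V' \oplus V''$ where $V' = V_1 \otimes \cdots \otimes V_{d-1} \otimes V_d'$ and similarly for $V''$. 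The crucial observation is that $\widehat{V}_d = V_1 \otimes \cdots \otimes V_{d-1}$ is \emph{unchanged} by this splitting, so a single subspace $W_d \subseteq \widehat{V}_d$ of dimension $a_d$ can serve as the "$W_d$" for both hypotheses simultaneously.

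Next I would take the natural ansatz $W_i = W_i' \oplus W_i'' \subseteq \widehat{V}_i' \oplus \widehat{V}_i'' = \widehat{V}_i$ for $i < d$, with $\dim W_i' = a_i'$ and $\dim W_i'' = a_i''$, so that $\dim W_i = a_i$. The sum $\sum_i \Phi_i(W_i \otimes V_i)$ then breaks into two pieces along $V' \oplus V''$: the $V'$-component is $\sum_{i<d} \Phi_i(W_i' \otimes V_i) + W_d \otimes V_d'$, and the $V''$-component is $\sum_{i<d} \Phi_i(W_i'' \otimes V_i) + W_d \otimes V_d''$. Hence the desired spanning equation for the target matrix holds precisely when each of these two components fills its summand, which is exactly the spanning equation attached to each of the two given members of $\SSS$.

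It therefore suffices to find one triple $\bigl((W_i')_{i<d},\, (W_i'')_{i<d},\, W_d\bigr)$ with the same $W_d$ that satisfies both equations. The first hypothesis asserts that the first equation is an open dense condition on $(W_1', \ldots, W_{d-1}', W_d)$ inside $\prod_{i<d} \operatorname{Gr}(a_i', \widehat{V}_i') \times \operatorname{Gr}(a_d, \widehat{V}_d)$, and symmetrically for the second. Pulled back to the joint parameter space $\prod_{i<d} \operatorname{Gr}(a_i', \widehat{V}_i') \times \prod_{i<d} \operatorname{Gr}(a_i'', \widehat{V}_i'') \times \operatorname{Gr}(a_d, \widehat{V}_d)$—which is irreducible as a product of irreducibles—both conditions cut out nonempty open subsets, each therefore dense, so their intersection is nonempty. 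Picking any point in this intersection and assembling the $W_i$ by the ansatz above produces the required configuration, and the dimension identity $\sum_i a_i n_i = \prod_i n_i$ for membership in $\CC$ follows immediately by adding the two hypothesis identities.

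The only real subtlety—and the step I would be most careful about—is recognizing that $W_d$ lives in a space independent of how $V_d$ is split, so a \emph{single} $W_d$ can simultaneously play both roles; without this observation one would be stuck trying to merge two independently-chosen $W_d'$ and $W_d''$ into something coherent. Once that is in hand, the rest is an irreducibility argument combined with the direct sum decomposition of $\widehat{V}_i$ for $i < d$.
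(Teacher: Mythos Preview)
Your proposal is correct and follows essentially the same approach as the paper: split $V_d = V_d' \oplus V_d''$, take $W_i = W_i' \oplus W_i''$ for $i<d$, use a single $W_d$ for both pieces, and verify that the spanning condition decomposes along $V' \oplus V''$. The only difference is cosmetic: the paper fixes a general $W_d$ first and then chooses general $W_i'$ and $W_i''$ separately, whereas you work directly on the joint Grassmannian product and invoke irreducibility to intersect the two open conditions---but this is the same argument, and your explicit attention to why a \emph{single} $W_d$ suffices is exactly the point the paper leaves implicit in the word ``general.''
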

\begin{proof}
Suppose $V_1,V_2,\dots,V_d$ are vector spaces of dimensions $n_1,n_2,\dots,n_d$ respectively and choose
a general subspace $W_d$ of $V_d$ of dimension $a_d$.
We can write $V_d=V_{d}'\oplus V_d''$ where $V_d'$ and $V_d''$ have dimensions $n_d'$ and $n_d''$ respectively.
For $i=1,2,\dots,d-1$ choose a general subspace $W_i'\subseteq V_1\otimes \cdots \otimes V_{i-1}\otimes V_{i+1}\otimes \cdots\otimes V_{d-1}\otimes V_d'$ of dimension $a_i'$ such that $V_1\otimes V_2\otimes \cdots \otimes V_{d-1}\otimes V_d'$ is equal to
$\big(\sum_{i=1}^{d-1}\Phi_i( W_i'\otimes V_i)\big)+W_d\otimes V_d'$.
Similarly, for $i=1,2,\dots,d-1$,   choose a general subspace $W_i''\subseteq V_1\otimes \cdots \otimes V_{i-1}\otimes V_{i+1}\otimes \cdots\otimes V_{d-1}\otimes V_d''$ of dimension $a_i''$ such that $V_1\otimes V_2\otimes \cdots \otimes V_{d-1}\otimes V_d''$ is equal to
$\big(\sum_{i=1}^{d-1}\Phi_i( W_i''\otimes {V_i})\big)+W_d\otimes V_d''$.
Set $W_i=W_i'\oplus W_i''\subseteq V_1\otimes \cdots\otimes V_{i-1}\otimes V_{i+1}\otimes \cdots \otimes V_{d}$ for $i=1,2,\dots,d-1$.
Then we have 
\begin{multline*}
\sum_{i=1}^d \Phi_i(W_i\otimes V_i)=\Big(\sum_{i=1}^{d-1} \Phi_i((W_i'\oplus W_i'')\otimes V_i)\Big)+W_d\otimes (V_d'\oplus V_d'')=\\
\left(\Big(\sum_{i=1}^{d-1} \Phi_i(W_i'\otimes V_i)\Big)+W_d\otimes V_d'\right)\oplus \left(\Big(\sum_{i=1}^{d-1} \Phi_i(W_i''\otimes V_i)\Big)+W_d\otimes V_d''
\right)=\\
(V_1\otimes \cdots \otimes V_{d-1}\otimes V_d')\oplus (V_1\otimes \cdots \otimes V_{d-1}\otimes V_d'')=
V_1\otimes V_2\otimes \cdots \otimes V_d.
\end{multline*}
This finishes the proof.
\end{proof}

\subsection{Recursive construction for every order}

The following lemma gives \autoref{lem:333} which we used in the proof of the generic subrank lower bound for order three. After that we will recursively obtain what is needed for the higher-order case generic subrank lower bound.

\begin{lemma}[Base case]\label{lem:333-333}
We have $\begin{bsmallmatrix} 3& 3 &3\\3 & 3 &3\end{bsmallmatrix}\in \SSS$.
\end{lemma}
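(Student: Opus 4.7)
We must show that for $V_1 = V_2 = V_3 = K^3$, a generic triple of $3$-dimensional subspaces $W_1 \subseteq V_2 \otimes V_3$, $W_2 \subseteq V_1 \otimes V_3$, $W_3 \subseteq V_1 \otimes V_2$ satisfies $\sum_{i=1}^{3} \Phi_i(W_i \otimes V_i) = V_1 \otimes V_2 \otimes V_3$. The associated linear map $\phi \colon \bigoplus_{i=1}^{3} W_i \otimes V_i \to V_1 \otimes V_2 \otimes V_3$ has source and target both of dimension $27$, so surjectivity is equivalent to $\phi$ being an isomorphism. Being an isomorphism is a Zariski-open condition on $(W_1,W_2,W_3)$ in the irreducible product of Grassmannians $\prod_{i} \mathrm{Gr}(3, V_j \otimes V_k)$, so the plan reduces to exhibiting a single triple $(W_1, W_2, W_3)$ for which $\phi$ is an isomorphism.

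Before constructing such a triple, it is useful to rule out two naive attempts, as they clarify what a working example must look like. First, if each $W_i$ is spanned by three rank-$1$ elementary tensors $e_a \otimes e_b$, the condition becomes a partition of $[3]^3$ into three blocks of $9$, each block determined by fixing a pair of coordinates to a prescribed $3$-element index set; a brief enumeration (checking inclusion-exclusion on the three ``coordinate-fixing'' blocks) shows no such partition exists. Second, taking all $W_i$ to be the same $3$-dimensional space of circulant matrices in $\mathrm{Mat}_{3,3}$ also fails: a discrete Fourier transform on $(\Z/3)^3$ diagonalises the situation and shows that the image of $\phi$ covers only the $19$ Fourier coordinates $(p,q,r)$ satisfying $p+q=0$, $p+r=0$, or $q+r=0$, leaving $8$ coordinates uncovered. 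So an example must use three genuinely different and suitably tilted subspaces.

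The plan is thus to write down such an explicit triple and verify the isomorphism property for it. The introduction remark following \autoref{thm:space-decomp-333} asserts the existence of an example in which each $W_i$ is spanned by $\{0,1\}$-matrices, so such a triple can be discovered by a short search. One concrete strategy is to arrange the three subspaces to be related by the natural cyclic $\Z/3$ symmetry permuting the three factors of $V_1 \otimes V_2 \otimes V_3$; the map $\phi$ then becomes equivariant, and checking that it is an isomorphism reduces to verifying nonsingularity on each of the three isotypic components of the $\Z/3$-action, a substantially smaller calculation than a direct $27 \times 27$ determinant. Alternatively, a random choice of small-integer matrices succeeds with high probability and can be confirmed by a single numerical determinant computation. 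The main obstacle is simply locating a concrete candidate; once in hand, verifying nonsingularity is routine, and the Zariski-open argument from the first paragraph promotes this single example to the required generic statement.
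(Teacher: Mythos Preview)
Your reduction is correct: the parameter space of triples $(W_1,W_2,W_3)$ is irreducible and the condition that $\phi$ be an isomorphism is Zariski-open, so exhibiting a single instance suffices. The gap is that you never actually exhibit one. Describing a search strategy (``take a cyclically symmetric candidate and check the isotypic pieces'' or ``pick random small-integer matrices and compute a $27\times 27$ determinant'') is not a proof; the lemma stands or falls on a specific construction or computation being carried out, and your proposal stops exactly at that point. The discussion of the two failed naive attempts is pleasant exposition but does not advance the argument either.

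For comparison, the paper's proof does not attempt a direct $27$-dimensional verification at all. It first checks by hand the much smaller assertion $\begin{bsmallmatrix} 2 & 2 & 2 \\ 0 & 3 & 1 \end{bsmallmatrix}\in\SSS$ (an $8$-dimensional intersection computation with concrete $2\times 2$ matrices), and then repeatedly applies the direct sum construction of \autoref{lem:direct-sum-construction}, together with trivial one-factor facts, to assemble $\begin{bsmallmatrix} 3 & 3 & 3 \\ 3 & 3 & 3 \end{bsmallmatrix}$ in five short steps. That route is fully explicit, involves no search or numerical determinant beyond the tiny $2^3$ base case, and works uniformly over any infinite field. Your approach is valid in principle and is exactly what the introduction calls ``not difficult to prove''---but to complete it you must actually supply the three subspaces and confirm nonsingularity.
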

\begin{proof}
We can verify explicitly that $\begin{bsmallmatrix} 2 & 2 &2\\ 0 & 3 & 1\end{bsmallmatrix}\in \SSS$, as follows.
For a $3$-dimensional subspace $W_1$ of $V_2\otimes V_3=K^{2\times 2}$, take the space of all matrices of the form
$$
\begin{pmatrix}
a & b\\
b & c
\end{pmatrix}
$$
and for the $1$-dimensional subspace $W_3$ of $V_1\otimes V_2=K^{2\times 2}$, take the span of the identity matrix. 
A tensor in the intersection of $W_1\otimes V_3\cap V_1\otimes W_3\subseteq V_1\otimes V_2\otimes V_3$ is of the form
$$
\left(
\begin{array}{cc|cc}
a_1 & b_1 & a_2 & b_2\\
b_1 & c_1 & b_2 & c_2
\end{array}
\right)= \left(\begin{array}{cc|cc}
p_1 & 0  & 0  & p_1\\
p_2 & 0  & 0 & p_2
\end{array}\right)
$$
for some $a_1,a_2,b_1,b_2,c_1,c_2,p_1,p_2$. Clearly, $b_1=c_1=a_2=b_2=0$. So we get $p_1=b_2=0$ and $p_2=b_1=0$.
So $W_1\otimes V_3\cap V_1\otimes W_3=0$ and $W_1\otimes V_3+V_1\otimes W_3$ has dimension
$3\cdot 2+2\cdot 1=8$ and therefore must be equal to $V_1\otimes V_2\otimes V_3=K^{2\times 2\times 2}$. Using \autoref{lem:direct-sum-construction},
we now deduce:
\begin{itemize}
\item $\begin{bsmallmatrix} 2 & 2 &2\\ 0 & 3 &1\end{bsmallmatrix}, \begin{bsmallmatrix} 1 & 2 & 2\\ 0 & 0 &2\end{bsmallmatrix}\in \SSS$, so
$\begin{bsmallmatrix} 3 & 2 & 2\\ 0 & 3 &3\end{bsmallmatrix}=\begin{bsmallmatrix} 2+1 & 2 & 2\\ 0 & 3+0 & 1 +2\end{bsmallmatrix}\in \SSS$
\item $\begin{bsmallmatrix} 3 & 1 & 1\\ 1 & 0& 0\end{bsmallmatrix},\begin{bsmallmatrix} 3 & 1 & 1\\ 0 & 3 & 0\end{bsmallmatrix}\in \SSS$, 
so $\begin{bsmallmatrix} 3 & 1 & 2\\ 1 & 3 &0\end{bsmallmatrix}=\begin{bsmallmatrix} 3 & 1 &1+1\\ 1+0 & 0 +3 & 0\end{bsmallmatrix}\in \SSS$
\item $\begin{bsmallmatrix} 3 & 2 & 2\\ 0 & 3 &3\end{bsmallmatrix},\begin{bsmallmatrix} 3 & 1 & 2\\ 1 & 3 &0\end{bsmallmatrix}\in \SSS$, so 
$\begin{bsmallmatrix} 3 & 3& 2\\ 1 & 3 &3\end{bsmallmatrix}=\begin{bsmallmatrix} 3 & 2+1 & 2\\ 0+1 & 3 &3+0\end{bsmallmatrix}\in \SSS$
\item $\begin{bsmallmatrix} 3 & 1 & 1\\1 & 0 & 0\end{bsmallmatrix}, \begin{bsmallmatrix} 3 & 1 & 1\\0 & 0 & 3\end{bsmallmatrix}\in \SSS$, so
 $\begin{bsmallmatrix} 3& 3 &1\\2 & 0 &3\end{bsmallmatrix}=\begin{bsmallmatrix} 3& 1+1+1 &1\\1+1+0 & 0 &0+0+3\end{bsmallmatrix}\in \SSS$
\item $\begin{bsmallmatrix} 3& 3 &1\\2 & 0 &3\end{bsmallmatrix}, \begin{bsmallmatrix} 3 & 3& 2\\ 1 & 3 &3\end{bsmallmatrix}\in \SSS$, so
$\begin{bsmallmatrix} 3 & 3& 3\\ 3 & 3 &3\end{bsmallmatrix}=\begin{bsmallmatrix} 3 & 3& 1+2\\ 2+1 & 0+3 &3\end{bsmallmatrix}\in \SSS$
\end{itemize}
The final line finishes the proof.
\end{proof}

To continue, we define the ``concatenation'' notation:
$$
\begin{bsmallmatrix} n_1 & n_2 & \cdots & n_d\\
a_1 & a_2 & \cdots & a_d\end{bsmallmatrix}\odot \begin{bsmallmatrix} m_1 & m_2 & \cdots & m_e\\
b_1 & b_2 & \cdots & b_e\end{bsmallmatrix}=
\begin{bsmallmatrix} n_1 & n_2 & \cdots  &n_d & m_1 & m_2 & \cdots & m_e\\
a_1 & a_2 & \cdots & a_d & b_1 & b_2 & \cdots & b_e\end{bsmallmatrix}
$$
and the $k$-fold repeated version of this notation:
$$
\begin{bsmallmatrix} n_1 & n_2 & \cdots & n_d\\
a_1 & a_2 & \cdots & a_d\end{bsmallmatrix}^{\odot k}=\underbrace{\begin{bsmallmatrix} n_1 & n_2 & \cdots & n_d\\
a_1 & a_2 & \cdots & a_d\end{bsmallmatrix}\odot \begin{bsmallmatrix} n_1 & n_2 & \cdots & n_d\\
a_1 & a_2 & \cdots & a_d\end{bsmallmatrix}\odot \cdots\odot \begin{bsmallmatrix} n_1 & n_2 & \cdots & n_d\\
a_1 & a_2 & \cdots & a_d\end{bsmallmatrix}}_k.$$
We will now use \autoref{lem:direct-sum-construction} and \autoref{lem:333-333} to prove:
\begin{lemma}\label{lem:higher-order-construction}
For all $n\geq 3$, we have $\begin{bsmallmatrix} n \\ n^{n-2}\end{bsmallmatrix}^{\odot n} \in \SSS$.
\end{lemma}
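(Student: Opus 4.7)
The plan is to prove $\begin{bsmallmatrix} n \\ n^{n-2}\end{bsmallmatrix}^{\odot n} \in \SSS$ by induction on $n \geq 3$, with the base case $n=3$ given directly by \autoref{lem:333-333}. The inductive step combines the direct sum construction (\autoref{lem:direct-sum-construction}) with the refinement identity \eqref{eq:refine} and the ``add trivial column'' rule to build the target matrix from smaller ingredients.

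A natural first step is to apply \autoref{lem:direct-sum-construction} iteratively to split the last column of the target as $n = 1 + 1 + \cdots + 1$, distributing the $a_i$ values uniformly so that each summand contributes $n^{n-3}$ to each of the first $n-1$ columns. This reduces the problem to showing that the intermediate matrix
\[
\begin{bmatrix} n & n & \cdots & n & 1 \\ n^{n-3} & n^{n-3} & \cdots & n^{n-3} & n^{n-2} \end{bmatrix} \in \SSS,
\]
which, since its last factor is trivial, corresponds via \eqref{eq:refine} to a decomposition question about $(K^n)^{\otimes(n-1)}$ with an extra ``free'' subspace of dimension $n^{n-2}$.

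To establish this intermediate matrix is in $\SSS$, I would apply \autoref{lem:direct-sum-construction} recursively on another column---for instance, splitting one factor $V_{n-1}$ of dimension $n$ into smaller pieces (such as $n = (n-1) + 1$ or finer) and reducing to yet more elementary matrices. This is the source of the ``multiple recursions'' structure mentioned in the introduction: an outer induction on the order $n$, combined with inner recursions that split individual factors, each step reducing dimensions of the pieces until they are all handled either by the inductive hypothesis or by direct constructions analogous to the small matrices (like $\begin{bsmallmatrix} 2 & 2 & 2 \\ 0 & 3 & 1 \end{bsmallmatrix}$) appearing in the proof of \autoref{lem:333-333}. Trivial matrices of the form $\begin{bsmallmatrix} n \\ 1 \end{bsmallmatrix}$ and $\begin{bsmallmatrix} 1 \\ 0 \end{bsmallmatrix}$ serve as terminal building blocks in the recursion.

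The main obstacle will be orchestrating this recursion so that (i) all integer divisibility conditions (ensuring $\sum a_i n_i = \prod n_i$ with non-negative integer $a_i$) are satisfied at every level, (ii) the recursion bottoms out at known elements of $\SSS$, and (iii) the splittings chosen avoid the pathological regime where generic tensor subspaces fail to have generic intersections (as illustrated by the failure of the analogous matrix-space statement discussed after \autoref{thm:space-decomp-333}, which is exactly why the $n=3$ case cannot be reduced further and instead requires the direct argument of \autoref{lem:333-333}). Tracking these arithmetic and geometric constraints across the multiple recursive layers---and choosing the right sequence of splits so that each auxiliary shape can be closed---is the key technical challenge.
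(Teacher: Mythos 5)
Your high-level strategy is right — the paper does indeed prove this by iterating the direct sum construction and \eqref{eq:refine}, and the ``intermediate'' matrix
$\begin{bsmallmatrix} n & n & \cdots & n & 1 \\ n^{n-3} & n^{n-3} & \cdots & n^{n-3} & n^{n-2} \end{bsmallmatrix}$
you derive by splitting the last column $n = 1 + \cdots + 1$ is exactly the $d = n-1$ instance of the paper's intermediate family
$\begin{bsmallmatrix} n \\ n^{d-2} \end{bsmallmatrix}^{\odot d}\odot \begin{bsmallmatrix} 1 \\ n^{d-1}\end{bsmallmatrix}^{\odot (n-d)}$.
However, the proposal has a genuine gap and a misleading framing.

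The framing ``induction on $n$'' does not actually work and the paper does not use it: the inductive hypothesis for $n-1$ concerns subspaces of $(K^{n-1})^{\otimes(n-1)}$, which is a different tensor space of a different order, and there is no mechanism in $\SSS$ to import a decomposition of $(K^{n-1})^{\otimes(n-1)}$ into a decomposition of $(K^n)^{\otimes n}$. Moreover, when you split one factor as $n = (n-1) + 1$, only that single column drops to $n-1$ while all others remain $n$, so you never arrive at a matrix to which the $(n-1)$-hypothesis would apply. What you need instead is, for each fixed $n \geq 4$, a single induction on $d$ (the number of nontrivial columns with entries $(n, n^{d-2})$), running from $d = 3$ up to $d = n$; your chain of column splits is the $d \to d+1$ step of that induction read backwards.

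The missing piece — which you correctly flag as ``the key technical challenge'' but do not resolve — is the base case $d = 3$ for general $n$, i.e.\ establishing that
$\begin{bsmallmatrix} n \\ n \end{bsmallmatrix}^{\odot 3}\odot \begin{bsmallmatrix} 1 \\ n^2\end{bsmallmatrix}^{\odot (n-3)}\in \SSS$.
This does not come for free from $\begin{bsmallmatrix} 3 & 3 & 3 \\ 3 & 3 & 3 \end{bsmallmatrix} \in \SSS$ or from the $\begin{bsmallmatrix} 2 & 2 & 2 \\ 0 & 3 & 1 \end{bsmallmatrix}$ base case, because those live in $K^{3,3,3}$ and $K^{2,2,2}$, not in $K^{n,n,n}$. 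One needs fresh explicit constructions: the paper verifies $\begin{bsmallmatrix} n & n & 1 \\ k & 0 & n^2 - kn \end{bsmallmatrix} \in \SSS$ for $k = 1,2,3$, and then chooses weights $a_1,\dots,a_n, b_1,\dots,b_n \in \{0,1,2,3\}$ with $\sum a_i = \sum b_i = n$ and $a_i b_i = 0$ to feed into the direct sum construction and obtain the $d=3$ foothold. Without this specific arithmetic device (or a replacement for it) your recursion has no base to terminate on, and the proof is incomplete. Your concern (iii) about generic subspaces failing to intersect generically is not the actual obstacle; the obstacle is purely that the bottom rung of the recursion ladder needs to be built explicitly for each $n$.
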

\begin{proof}
We already know the case $n=3$ (\autoref{lem:333-333}), so assume that $n\geq 4$.

We will first show that $\begin{bsmallmatrix} n \\ n \end{bsmallmatrix}^{\odot 3}\odot \begin{bsmallmatrix} 1 \\ n^2\end{bsmallmatrix}^{\odot (n-3)}\in \SSS$.
We note that a straightforward direct construction gives 
\[
  \begin{bsmallmatrix} n & n & 1\\ k & 0 & n^2-kn\end{bsmallmatrix}\in \SSS
\]
for $k=1,2,3$.
To proceed, we choose nonnegative integers $a_1,a_2,\dots,a_n,b_1,b_2,\dots,b_n\in \{0,1,2,3\}$
such that $a_1+a_2+\cdots+a_n=n$, $b_1+b_2+\cdots +b_n=n$ and $a_ib_i=0$ for all $i$.
Indeed this can be done as follows: If $n=2m$ is even, then we can take $a_1=a_2=\cdots=a_m=b_{m+1}=b_{m+2}=\cdots=b_{n}=2$
and $b_1=b_2=\cdots=b_m=a_{m+1}=a_{m+2}=\cdots=a_n=0$. If $n=2m+1$ is odd,
then we take $a_1=3$, $b_{m+1}=1$, $a_2=a_3=\cdots=a_m=b_{m+2}=b_{m+3}=\cdots=b_{n}=2$
and $b_1=b_2=\cdots=b_m=a_{m+1}=a_{m+2}=\cdots=a_n=0$.
Because $a_ib_i=0$ we get that 
\[
  \begin{bsmallmatrix} n & n & 1 \\ a_i & b_i  & n^2-(a_i+b_i)n\end{bsmallmatrix}\in \SSS
\]
and then using \eqref{eq:refine} we get
\begin{equation}\label{eq:elts-in-S}
  \begin{bsmallmatrix} n & n & 1 & 1\\ a_i & b_i  & n & n^2-(a_i+b_i+1)n\end{bsmallmatrix}\in \SSS
\end{equation}
for $i=1,2,\dots,n$.
Applying the direct sum construction \autoref{lem:direct-sum-construction} to the elements of \eqref{eq:elts-in-S} we get
\[
  \begin{bsmallmatrix} n & n &  n & 1\\ n & n & n & n^3-3n^2\end{bsmallmatrix}= \begin{bsmallmatrix} n & n &  1+\cdots+1 & 1\\
\sum_i a_i  & \sum_i b_i & n & \sum_i n^2-(a_i+b_i + 1)n\end{bsmallmatrix}\in \SSS.
\]
Applying \eqref{eq:refine} repeatedly to this we obtain
$\begin{bsmallmatrix} n \\ n \end{bsmallmatrix}^{\odot 3}\odot \begin{bsmallmatrix} 1 \\ n^2\end{bsmallmatrix}^{\odot (n-3)}\in \SSS$.

We will now show by induction on $d$ that
\begin{equation}\label{eq:ind-claim}
  \begin{bsmallmatrix} n \\ n^{d-2} \end{bsmallmatrix}^{\odot d}\odot \begin{bsmallmatrix} 1 \\ n^{d-1}\end{bsmallmatrix}^{\odot (n-d)}\in \SSS
\end{equation}
for $d=3,4,\dots,n$. We already know that the base case $d=3$ is true.
Suppose for the induction step that 
$\begin{bsmallmatrix} n \\ n^{d-2} \end{bsmallmatrix}^{\odot d}\odot \begin{bsmallmatrix} 1 \\ n^{d-1}\end{bsmallmatrix}^{\odot (n-d)}\in \SSS.$
 Then we have using \eqref{eq:refine} that
  \[
    \begin{bsmallmatrix} n \\ n^{d-1} \end{bsmallmatrix}^{\odot ( d+1)}\odot \begin{bsmallmatrix} 1 \\ n^{d}\end{bsmallmatrix}^{\odot (n-d-1)}=
\begin{bsmallmatrix} n \\ n^{d-2}+\cdots+n^{d-2} \end{bsmallmatrix}^{\odot d}\odot 
\begin{bsmallmatrix} 1+1+\cdots+1 \\ n^{d-1} \end{bsmallmatrix}\odot
\begin{bsmallmatrix} 1 \\ n^{d-1}+\cdots+n^{d-1}\end{bsmallmatrix}^{\odot (n-d)-1}\in \SSS.
\]
This proves \eqref{eq:ind-claim}.

Finally, by setting $d=n$ in \eqref{eq:ind-claim} we obtain the claim $\begin{bsmallmatrix} n \\ n^{n-2}\end{bsmallmatrix}^{\odot n}\in \SSS$.
\end{proof}

\section{Application: Subrank is not additive}\label{sec:not-additive}

We discuss in this section an application of our upper bound \autoref{thm:subrank-upper-3-tensor} on the generic subrank, namely that the subrank is not additive under the direct sum. That is, there are tensors $S,T$ such that $\subrank(T) + \subrank(S) < \subrank(T \oplus S)$. In fact, we obtain a large gap between $\subrank(T) + \subrank(S)$ and $\subrank(T \oplus S)$. The proof relies on the idea of writing the diagonal tensor~$I_n$ as a sum of two generic tensors and on basic properties of the subrank.\footnote{This idea was used earlier in \cite{burg} to show that the ``lower support functionals'' \cite{strassen1991degeneration} are not additive. Their results can be used to find a weaker non-additivity for subrank (with a smaller gap, and only for tensors of order three), namely that there are tensors $S,T \in K^{n,n,n}$ such that $\subrank(T), \subrank(S) \leq n^{2/3 + o(1)}$ and $\subrank(T \oplus S) \geq n$.}

\begin{theorem}\label{thm:non-add}
There are tensors $S,T \in K^{n,n,n}$ such that we have $\subrank(T), \subrank(S) \leq \sqrt{3n-2}$ and $\subrank(T \oplus S) \geq n$.
\end{theorem}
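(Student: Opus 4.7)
The plan is to realize the diagonal tensor $I_n$ as a sum $T + S$, where $T$ is generic in $K^{n,n,n}$ and $S := I_n - T$. By \autoref{thm:subrank-upper-3-tensor} both $T$ and $S$ will have subrank at most $\sqrt{3n-2}$, while the direct sum $T \oplus S$ will admit a reduction from $I_n$ via the ``fold'' map that identifies the two diagonal blocks, forcing $\subrank(T \oplus S) \geq n$.

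First I would set $U := \{T \in K^{n,n,n} : \subrank(T) \leq \lfloor\sqrt{3n-2}\rfloor\}$, which by \autoref{thm:subrank-upper-3-tensor} contains a nonempty Zariski-open subset of $K^{n,n,n}$. Translation by $I_n$ is an isomorphism of varieties, so $I_n - U$ also contains a nonempty Zariski-open subset, and therefore in the irreducible variety $K^{n,n,n}$ the intersection $U \cap (I_n - U)$ is nonempty. Picking any $T$ in this intersection and setting $S := I_n - T$ immediately gives $\subrank(T), \subrank(S) \leq \sqrt{3n-2}$.

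Next I would exhibit the reduction $I_n \leq T \oplus S$ by taking the ``diagonal'' matrices $A = B = C := [\,\mathrm{Id}_n \mid \mathrm{Id}_n\,] \in \Mat_{n,2n}$. Viewing $T \oplus S$ as a block-diagonal tensor in $K^{2n,2n,2n}$ with $T$ occupying the first diagonal block and $S$ occupying the second, a direct unfolding of the restriction formula yields
\[
(A \otimes B \otimes C) \cdot (T \oplus S) \;=\; T + S \;=\; I_n,
\]
since the mixed index contributions are killed by the block-diagonal support of $T \oplus S$. Hence $I_n \leq T \oplus S$, and therefore $\subrank(T \oplus S) \geq \subrank(I_n) = n$.

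There is no serious obstacle in this argument: the upper bound on the generic subrank leaves so much room that the set of $T$ for which both $T$ and $I_n - T$ land in $U$ is itself Zariski-dense, and the reduction $I_n \leq T \oplus S$ is a one-line computation once $T \oplus S$ is written in block form. The only point worth checking with any care is that the affine translation $T \mapsto I_n - T$ preserves Zariski-open density, which is immediate from the irreducibility of $K^{n,n,n}$.
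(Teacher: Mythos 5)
Your argument is correct and matches the paper's proof: the same choice of $T \in U \cap (I_n - U)$ and $S = I_n - T$, followed by the reduction $I_n = T + S \leq T \oplus S$. The only difference is that you write out the explicit ``fold'' matrices witnessing $T + S \leq T \oplus S$, whereas the paper invokes this as a known general fact; that is a presentational choice, not a different approach.
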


\begin{proof}
By \autoref{thm:subrank-upper-3-tensor} there is a non-empty Zariski-open subset $U \subseteq K^{n,n,n}$ such that for all $T \in U$ we have $\subrank(T) \leq \sqrt{3n-2}$. Recall that $I_n \in K^{n,n,n}$ is the tensor with ones in the diagonal entries and zeroes elsewhere. The subset $I_n - U \subseteq K^{n,n,n}$ is also non-empty and Zariski-open, and thus the intersection $U \cap (I_n - U)$ is non-empty. This means that there is a $T \in U$ such that $I_n - T$ is in $U$. Fix this $T$ and let $S = I_n - T$. Then $\subrank(T), \subrank(S) \leq \sqrt{3n-2}$. For their direct sum, we observe the simple general fact that $T \oplus S \geq T + S$ where the left-hand side is the direct sum and the right-hand side is the coordinate-wise sum. Since subrank is monotone under $\geq$ we find that $\subrank(T \oplus S) \geq \subrank(T + S) = \subrank(I_n) = n$.
\end{proof}

The proof of \autoref{thm:non-add} extends directly so that similarly from \autoref{thm:subrank-upper-k-tensor} we get the following higher-order non-additivity result.

\begin{theorem}
There are $k$-tensors $S,T \in K^{n,\ldots,n}$ such that $\subrank(T), \subrank(S) \leq (kn-(k-1))^{\frac{1}{k-1}}$
while $\subrank(T \oplus S) \geq n$.
\end{theorem}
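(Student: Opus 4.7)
The plan is to mimic the proof of \autoref{thm:non-add} essentially verbatim, replacing the 3-tensor upper bound \autoref{thm:subrank-upper-3-tensor} by its $k$-tensor version \autoref{thm:subrank-upper-k-tensor} with all $n_i=n$, which gives the bound $\gensubrank(n,\ldots,n)\leq (kn-(k-1))^{1/(k-1)}$. From this bound I extract a non-empty Zariski-open subset $U\subseteq K^{n,\ldots,n}$ on which every tensor has subrank at most $(kn-(k-1))^{1/(k-1)}$.

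Next I want to find a $T\in U$ such that $I_n-T\in U$ as well, where $I_n\in K^{n,\ldots,n}$ is the order-$k$ diagonal tensor with ones on the diagonal and zeros elsewhere. The set $I_n-U=\{I_n-X : X\in U\}$ is Zariski-open and non-empty (it is the image of $U$ under the affine isomorphism $X\mapsto I_n - X$). Since $K^{n,\ldots,n}$ is irreducible, the intersection $U\cap (I_n-U)$ is non-empty, so I can pick $T\in U\cap (I_n-U)$ and set $S=I_n-T$, ensuring $S\in U$ as well. Then both satisfy the desired upper bound $\subrank(T),\subrank(S)\leq (kn-(k-1))^{1/(k-1)}$.

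For the direct sum lower bound I use the same general observation as in \autoref{thm:non-add}: $T\oplus S$ trivially restricts to the coordinate-wise sum $T+S=I_n$ via the obvious linear maps that identify the two copies of each factor space, so $T\oplus S\geq T+S=I_n$ under the restriction preorder. Since the subrank is monotone under $\geq$ and $\subrank(I_n)=n$, this gives $\subrank(T\oplus S)\geq n$, completing the proof.

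I do not expect any genuine obstacle here: the only non-trivial ingredients are \autoref{thm:subrank-upper-k-tensor} (already proved) and the two soft facts that a non-empty Zariski-open set in an irreducible variety meets any other non-empty Zariski-open set, and that restriction is monotone for subrank. The proof is essentially a one-paragraph adaptation of \autoref{thm:non-add}.
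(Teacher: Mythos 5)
Your proposal matches the paper's intended proof exactly: the paper simply remarks that the proof of \autoref{thm:non-add} ``extends directly'' by substituting \autoref{thm:subrank-upper-k-tensor} for \autoref{thm:subrank-upper-3-tensor}, which is precisely what you carry out. The Zariski-open intersection argument and the observation $T\oplus S\geq T+S=I_n$ are unchanged, so your write-up is correct and follows the same route.
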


\section{Open problems}\label{sec:open-problems}

There are several natural open problems that arise from or are closely related to our study and results on the generic subrank in this paper. We briefly list some of these problems here.

\begin{itemize}
    \item While we determine the generic subrank very precisely up to a small additive constant, it is natural to ask whether our upper bound on the generic subrank $\gensubrank(n)$ is exactly tight. We know that this is the case for all small cases ($n \leq 100$). %
    \item Closely related to the above, but for higher-order and unbalanced formats, what is the exact value of the generic subrank $\gensubrank(n_1,\ldots,n_k)$ in $K^{n_1,\ldots, n_k}$ when the dimensions~$n_i$ are not all equal? In particular, is our upper bound tight?
    \item To attack the previous two problems the natural approach would be to use the stronger 
    \autoref{thm:technical-lower-bound-strong}.
    Our current lower bound on the generic subrank uses \autoref{thm:technical-lower-bound} which relies on our constructions of decompositions $\mathcal{X}_1[1] + \mathcal{X}_2[2] + \mathcal{X}_3[3] = K^{r,r,r}$.  \autoref{thm:technical-lower-bound-strong} suggests to instead construct decompositions $\mathcal{X}_1[1] + \mathcal{X}_2[2] + \mathcal{X}_3[3] + W_r = K^{r,r,r}$. What are the ``best'' constructions that can be obtained of this form?
    \item There is a natural approximative version of subrank called \emph{border} subrank\footnote{The difference with subrank is that the restriction preorder is replaced by the \emph{degeneration} preorder, which results in the border subrank $\bsubrank(T)$ of $T$ being the largest number $r$ such that $I_r$ is in the $\GL_n^{\times3}$-orbit closure of $T$.} which plays a central role in algebraic complexity theory (in particular the study of matrix multiplication algorithms).
    The border subrank is at least the subrank. What value %
    does the border subrank take on generic tensors?
    \item What is the value of the generic asymptotic subrank $\asympsubrank(n)$? The state of the art is that $n^{2/3} \leq \asympsubrank(n) \leq n$, so in particular it remains open whether generic tensors have ``full'' generic asymptotic subrank or not. We note that we do know of explicit tensors for which the asymptotic subrank is not full (the so-called W-tensor, for example) and also of explicit tensors for which the asymptotic subrank is full in a non-trivial way (matrix multiplication tensors, for example). The aforementioned lower bound of~$n^{2/3}$ is by Strassen's elegant construction (which makes use of the matrix multiplication tensors). Our results show that one cannot obtain better lower bound on~$\asympsubrank(n)$ by improving the lower bound on the generic subrank $\gensubrank(n)$ since $\gensubrank(n) = \theta(\sqrt{n})$.
    \item With the notation we introduced in \autoref{sec:constructions} on constructions of tensor space decompositions, it is natural to ask what precisely are the elements of $\mathcal{S}$. Many elements can be constructed from simple base cases and the direct sum construction. However, we do not know whether a finite number of generators in this sense suffices to generate all of $\mathcal{S}$.
    \item We found tensors $S,T$ for which there is a large gap between $\subrank(S \oplus T)$ and $\subrank(S) + \subrank(T)$. Is this the largest possible gap? More speculatively, we may ask: is there a general relation between direct sum problems (i.e.~additivity under direct sum) and parameter values at generic instances?
\end{itemize}

\paragraph{Acknowledgements.}
The authors thank Swastik Kopparty for helpful discussions, and JM Landsberg and Jan Draisma for comments.
HD was partially supported by NSF grants IIS-1837985 and DMS-2001460. VM was partially supported by NSF grants CCF-1900460 and the University of Melbourne.
JZ was partially supported by a Simons Junior Fellowship and NWO Veni grant VI.Veni.212.284.

\bibliographystyle{alphaurl}
\bibliography{main}

\end{document}